\newtheorem{proposition}{Proposition}
\newtheorem{lemma}{Lemma}
\newtheorem{theorem}{Theorem}
\theoremstyle{definition}
\newtheorem{fact}{Fact}
\newtheorem{assumption}{Assumption}
\newcommand{\uni}{\mathcal{U}}
\newcommand{\epsi}{\varepsilon}
\newcommand{\marge}[2]{ \Delta \left( #1 | #2 \right) }
\newcommand{\opt}{\textsc{OPT}}
\newcommand{\oh}[1]{\mathcal{O}\left( #1 \right)}
\DeclareMathOperator*{\argmax}{arg\,max}
\DeclareMathOperator*{\polylog}{polylog}
\newcommand{\ie}{\textit{i.e.}\xspace}
\newcommand{\eg}{\textit{e.g.}\xspace}
\newcommand{\qs}{\textsc{QuickPrune}\xspace}
\newcommand{\qss}{\textsc{QuickPrune-Single}\xspace}
\newcommand{\sm}{\textsc{SM-K}\xspace}
\newcommand{\nhi}{\textsc{NHI}\xspace}
\newcommand{\gcomb}{\textsc{GCOMB-P}\xspace}
\newcommand{\lense}{\textsc{LeNSE}\xspace}
\newcommand{\comb}{\textsc{COMBHelper}\xspace}
\newcommand{\gnn}{\textsc{GnnPruner}\xspace}
\newcommand{\random}{\textsc{Random}\xspace}
\newcommand{\topk}{\textsc{Top-k}\xspace}
\begin{document}

%

%

\twocolumn[

\aistatstitle{Theoretically Grounded Pruning of Large Ground Sets for Constrained, Discrete Optimization}

\aistatsauthor{Ankur Nath \And Alan Kuhnle }

\aistatsaddress{ Department of Computer Science and Engineering, Texas A\&M University } ]

\begin{abstract}
  Modern instances of combinatorial optimization problems
  often exhibit billion-scale ground sets, which have many
  uninformative or redundant elements. In this work, we
  develop light-weight pruning algorithms to quickly
  discard elements that are unlikely to be part of an
  optimal solution. Under mild assumptions on the instance,
  we prove theoretical guarantees on the fraction of the
  optimal value retained and the size of the resulting
  pruned ground set. 
  Through extensive experiments on real-world datasets for various applications, we demonstrate that our algorithm, \qs{},
  efficiently prunes over 90\% of the ground set
  and outperforms state-of-the-art classical and
  machine learning heuristics for pruning. %

\end{abstract}


\section{INTRODUCTION}
In many data science and machine learning
tasks, the optimization of a discrete function
is required. For example, \textit{subset selection}
problems \citep{Wei2015,Kim2020}, 
such as selecting the most informative features from a dataset for model training.
As another example, consider the problem of selecting a subset of items to display
in a recommendation system \citep{Mehrotra2023,Ko2022}, where the goal is to maximize user engagement.

In this work, we consider maximization of an
objective function $f$ defined on subsets
of $\uni$, subject to a
knapsack constraint: a modular
cost function $c$ on the elements
is restricted to be at most the budget $\kappa$.
Denote by
$f_{\kappa}(X) = \max_{S \subseteq X : c(S)\le \kappa } f(S),$
where $X \subseteq \uni$. 
For many applications of this problem, the ground
set $\uni$ is massive, with size $n$ in the billion-scale
or larger. On the other hand, the budget constraints
are frequently such that an optimal set has only a few elements.
For example,
in viral marketing campaigns \citep{Kempe2003a},
a company may have a budget to promote
only a limited number of products while still aiming to
maximize overall sales or customer reach.
Intuitively, in these cases, the vast majority of elements of $\uni$ are irrelevant
to finding $f_{\kappa}(\uni)$.

Therefore, rather than solving $f_{\kappa}(\uni)$ directly,
it may be beneficial to produce a small set $\uni' \subseteq \uni$
of relatively promising candidate elements. Once $\uni'$
is obtained, an expensive
heuristic or exact algorithm may be employed to produce a feasible
solution.
Further, since elements not in $\uni'$ are discarded,
we desire a range of budgets $[\kappa_{\min}, \kappa_{\max}]$
to be supported, so that the pruned
ground set has value beyond a single use.
Formally, we have the following problem definition.

\textbf{Problem definition (Pruning).}
Given an objective function $f: 2^\uni \to \mathbb{R}_+$,
modular cost function $c: \uni \to \mathbb{R}_+$,
and budget range $[\kappa_{\min}, \kappa_{\max}]$,
produce $\uni' \subseteq \uni$,
such that
\begin{itemize}
\item $|\uni'| = \oh{ F( \kappa_{\max}, \kappa_{\min}, c ) \polylog(n)},$ where $F$ is a function and $n = |\uni |$; 
\item there exists $\alpha \in [0,1]$, such that, for any $\tau \in [\kappa_{\min}, \kappa_{\max}]$,
  it holds that $f_{\tau}( \uni' ) \ge \alpha f_{\tau }( \uni ).$
\end{itemize}

Existing methods for the pruning task \citep{Zhou2017,Manchanda2020a,Ireland2022,Tian2024} are heuristics
that 1) are only formulated to solve one instance of size-constrained maximization;
2) provide no guarantee on the size of $\uni'$; and
3) provide no guarantee on the fraction of the optimal value retained in $\uni'$ after the pruning process. Thus, in this work, we are motivated by the following
questions:
\begin{center}
  \textit{Is it possible to develop pruning algorithms with theoretical guarantees that are useful beyond solving one problem instance?
    If so, what assumptions are required on the objective function and the problem instance?
    Are the resulting algorithms practical and competitive with existing heuristics?}
\end{center}
\begin{figure*}[t] 
  \centering
  \subfigure[] {
    \includegraphics[width=0.45\textwidth]{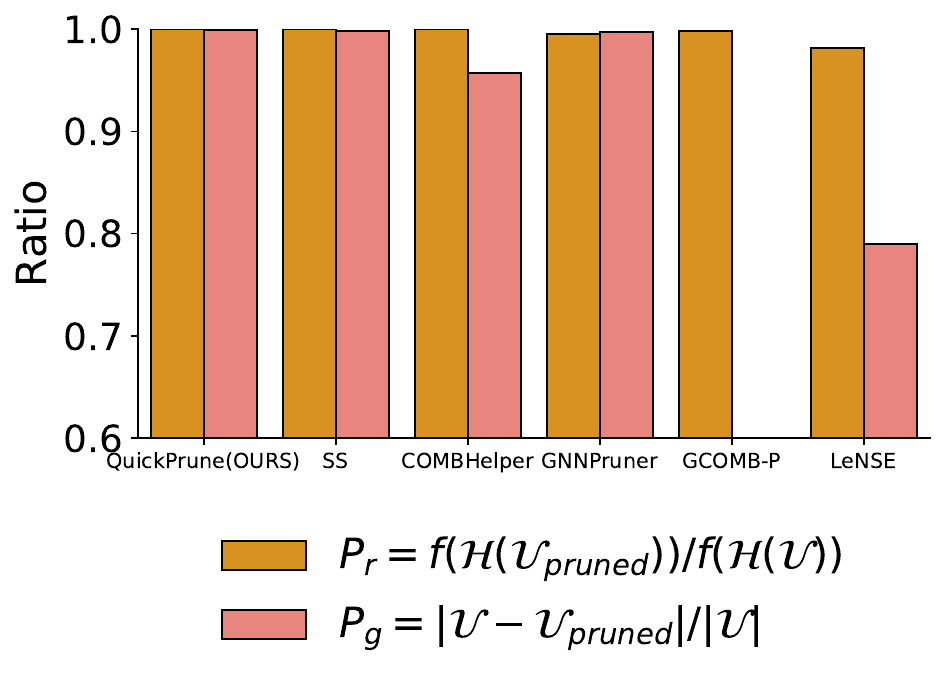}
  }
  \subfigure[] {
    \includegraphics[width=0.45\textwidth]{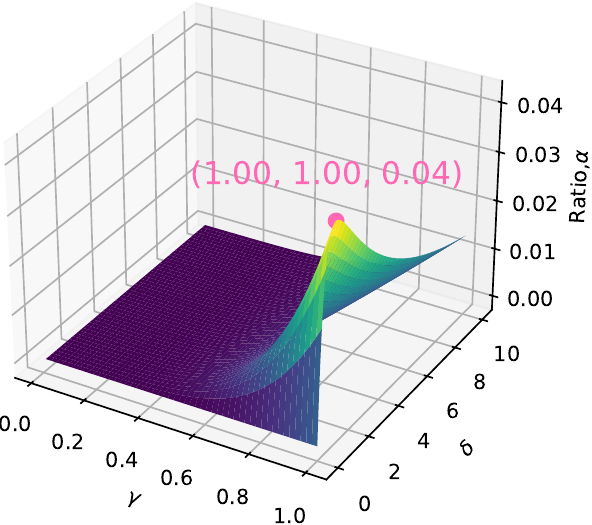}
  }
  \caption{\textbf{(a):} Typical empirical results of \qs versus competing methods on an instance of the MaximumCover problem: \qs retains
    99.99\% of the optimal value while pruning $99.95\%$ of the ground set.
  \textbf{(b):} Plot of the pruning ratio $\alpha (\epsi, \delta, \gamma )$ of Theorem \ref{thm:multi}, as a function of the parameter $\delta$ of \qs and the $\gamma$-submodularity of the objective function $f$. Here, $\epsi$ is fixed to $0.01$.} 
  \label{fig:intro-fig}
\end{figure*}

\textbf{Contributions.} 
 We introduce \qs{}, a lightweight pruning method that
 prunes the original ground set $\uni$
 for a range of budgets in a single pass
 through the ground set.
 We emphasize that lightweight pruning methods
 are needed since an expensive pruning process defeats the
 purpose of pruning. In this regard, our algorithm
 processes elements one-by-one -- once an element is rejected,
 it is never re-considered and may be safely discarded.
 \qs{}  makes at most $\oh{\log ( \kappa_{\max} / \kappa_{\min} )}$
 queries to $f$ per element processed. 
 Moreover, we prove theoretical bounds
 on the quality of the pruned ground set $\uni'$ of \qs{}, and also
 on its size.

 Specifically, given budget range $\kappa_{\text{min}} \le \kappa_{\text{max}}$, we show that
 the output of \qs{} satisfies
$| \uni' | = \oh{ \log \left( \frac{ \kappa_{\max} }{ \kappa_{\min} } \right) \cdot \left( \frac{\kappa_{\max}}{ c_{\min}} \right) \log ( n)}$, where $c_{\min} = \min_{u \in \uni} c(u)$. 
Further, if the objective function $f$ is $\gamma$-weakly
submodular\footnote{Submodularity is a diminishing-returns condition ubiquitous to many applications, discussed further in Section \ref{sec:prelim}.}, and if a mild assumption\footnote{Namely, that no single element uses almost all of the budget. See Section \ref{sec:alg} for the precise formulation.} on the costs of elements in
 an optimal solution of $f_{\tau}(\uni )$ holds, we show
 that
 $f_{\tau} ( \uni' ) \ge \alpha( \epsi, \delta, \gamma ) \cdot f_{\tau} ( \uni )$,
 for any $\tau \in [\kappa_{\min}, \kappa_{\max} ]$, where $\epsi, \delta$
 are parameters of the algorithm. A plot of $\alpha$ is shown in Fig. \ref{fig:intro-fig}(b);
 If $\gamma = 1$ and $\delta = 1$,
 $f_{\tau} ( \uni' ) \ge \left( \frac{1}{24} - \epsilon \right) f_{\tau} ( \uni )$.
 To the best of our knowledge, all previous methods for this
 problem are heuristics with no performance guarantee, either on
 the size of the pruned universe or on the fraction of the optimal
 value retained after pruning. We give a comprehensive discussion
 of related work in Section \ref{sec:related-work}. 

 
 To show the practical effectiveness of our algorithm,
 we evaluate it against several
 state-of-the-art classical and machine learning heuristics
 for the pruning problem across four different optimization contexts.
 The methods are evaluated by two metrics, the fraction $P_g$ of the
 original ground set that is pruned (higher is better); and the fraction $P_r$ of the
 value of a solution by a given algorithm $\mathcal H$ for the problem
 that is retained (higher is better);
 \ie $P_r = f( \mathcal H ( \uni' ) ) / f( \mathcal H ( \uni ) )$.
 Empirically, as shown in Fig. \ref{fig:intro-fig},
 our algorithm outperforms competing methods on both metrics,
 and achieves substantial reductions in
 ground set size (typically over 90\%) while nearly preserving the
 value of $f_\kappa $ across multiple budgets.

 \textbf{Organization.} The rest of this paper
 is organized as follows. In Section \ref{sec:related-work},
 we describe the relationship of our contribution to existing
 work; in Section \ref{sec:prelim}, we introduce preliminaries
 and notation. In Section \ref{sec:alg}, we describe our
 \qs algorithm and prove Theorem \ref{thm:multi}, which
 summarizes its theoretical properties. In Section \ref{sec:exp},
 we conduct our empirical evaluation and comparison to existing
 heuristics for the pruning problem. In Section \ref{sec:con},
 we conclude the paper and discuss future directions.
 In Appendices, we describe omitted details and proofs from the main text.

\subsection{Related Work} \label{sec:related-work}
\textbf{(Weakly) Submodular Optimization.}
\textit{Submodularity} is a notion of diminishing
returns that is satisfied or partially satisfied
by many, varied objective functions \citep{Demeniconi2021, Feige, Feige1996, Feige2011, Feige2011a, Feige2013, Gharan2010, Gharan2011, Gharan2011a, Gharan2011b, Gupta2010a, Khot2001, Kothawade2022, Lee2009, Nemhauser1978b, Uziahu2023, Vondrak2013}.
Constrained non-submodular maximization has proven useful for
data summarization (\eg \citep{Demeniconi2021,Kothawade2022}), feature selection \citep{Elenberg2018},
reduction of training set size 
for deep learning methods \citep{Killamsetty2023}.
Submodularity occupies a role for discrete optimization
analagous to the role of convexity in continuous optimization.
The typical model (\eg \citep{Feige, Gharan2011a, Gupta2010a}) is that the function $f$ is
available to an algorithm as a black-box orcale
that returns $f(S)$ when queried with set $S$. However,
these function evaluations are very expensive, so we
desire to minimize the \textit{query complexity} of an algorithm. 


\textbf{Coresets.}
Our notion of pruning is related to the
idea of a coreset \citep{Braverman2022, Feldman2020b, Indyk2014, Kogan2015, Liu2019, Mirrokni2015, Mirzasoleiman2020, Tukan2020, Tukan2022, Yang2023a, Zhang2022e}; in which
a small set is to be selected to
best summarize a set of
points with respect to a desired
set of queries; for example,
construct a weighted graph that
approximately preserves the values
of every cut in an original graph
\citep{Kogan2015}. 
The concept of coreset does not
have a precise definition,
and thus our 
terminology
\textit{pruned universe}
could be
recast as coreset. 
Coresets have been applied
to \eg graph summarization \citep{Liu2019},
data-efficient training of ML models \citep{Mirzasoleiman2020,Yang2023a},
pruning neurons from neural networks \citep{Tukan2022}.
For the types of optimization
problem we consider, \ie
constrained maximization with objective
function available as a value query oracle,
there have been only a small
number of attempts to produce a coreset
\citep{Indyk2014,Mirrokni2015}.
The closest to our setting
is the method of \citet{Mirrokni2015}.
Their notion of \textit{randomized
  composable coreset}
is used for distributed computation
and is only applicable to
a single size constraint. 
Also, their notion of coreset
is much stronger than what we require
in this work. 

\textbf{Pruning via supervised learning methods.}
There have been multiple ML-based heuristics
proposed to accomplish the pruning task
for a single instance of a combinatorial optimization
problem.
The first line of work \citep{Lauri2019, Lauri2023, Sun2021b, Sun2021c, Zhang2022b}
seeks to train a binary classification algorithm to predict
whether a data element belongs to an optimal solution.
These approaches
consider graph-based optimization problems and
train a linear classifer on vertices
by hand-crafting a set of local features for each vertex,
such as centrality measures,
solutions to linear programming relaxations, and so on; a classifier, such as random forests, is then trained on small instances that can be exactly solved. The goal is to correctly predict when a vertex does \textit{not} belong to an optimal solution,
and hence can be safely pruned. 
These methods rely on hand-crafted features
customized to each problem; which themselves may
be superlinear to compute, such as a
solution to an LP relaxation of the problem
or the eigenvector centrality on the original (unpruned)
graph.
A recent generalization (COMBHelper) of this approach uses a GCN (graph convolution network)
in place of a linear classifer \citep{Tian2024}.
We compare to COMBHelper empirically in Section \ref{sec:exp}. 

\textbf{Other heuristics for pruning.}
GCOMB \citep{Manchanda2020a}
uses a weighted degree heuristic
with a probabilistic greedy algorithm,
combined with a supervised learning approach, to produce a pruned
ground set. On the other hand, LeNSE \citep{Ireland2022} trains an RL agent to navigate from an initial subgraph to a pruned subgraph that most likely contains the optimal solution for a specific optimization problem via a local search procedure on subgraphs.
Another heuristic is that of \citet{Zhou2017}, which uses ideas from submodularity
to extract a promising pruned universe. We empirically compare our algorithm
to each of these methods in Section \ref{sec:exp}. In contrast to our work,
all of these are formulated
for a single instance of size-constrained maximization and have no theoretical
guarantees of any kind, \textit{including that a nonempty feasible solution remains after the
pruning process}. 


\subsection{Preliminaries and Notation} \label{sec:prelim}
For natural number $n$, we use the notation $[n] = \{0, 1, \ldots, n-1 \}$,
and for a function $f$ understood from context: $\marge{T}{S} = f( S \cup T ) - f(S)$
is the \textit{marginal gain} of set $T$ to set $S$. 
A non-negative set function on ground set $\uni$, $f : 2^\uni \to \mathbb R^+$,
is \textit{submodular} iff
$\forall S \subseteq \uni, \forall T \subseteq \uni, f( S \cap T ) + f( S \cup T ) \le f(S) + f(T).$
Roughly speaking, this means the whole is not greater than the sum of its parts.
An equivalent characterization is the following notion of diminishing-returns:
$ \forall S \subseteq T \subseteq \uni, \forall x \not \in T, \marge{x}{T} \le \marge{x}{S}.$
A non-negative, set function $f$ 
is \textit{monotone} iff
$\forall S \subseteq T \subseteq \uni, f( S ) \le f( T ).$
A submodular function is not necessarily monotone.

In this work, we consider the following relaxation of
submodularity: the monotone function $f$ is $\gamma$-submodular
iff $\gamma$ is the maximum
value in $[0,1]$ such that $\forall S \subseteq T \subseteq \uni, \forall x \not \in T, \gamma \marge{x}{T} \le \marge{x}{S}.$
Observe that $f$ is submodular iff $\gamma = 1$. 
Occasionally in the literature, $\gamma$ is termed the \textit{diminishing-returns ratio} of a function
\cite{Bian2017b,Kuhnle2018}.
We remark that monotonicity is needed for the definition of $\gamma$-submodular,
and throughout our analysis, we assume that the objective function $f$ is monotone. 


\section{QUICKPRUNE: A PRUNING ALGORITHM WITH GUARANTEES} \label{sec:alg} 
In this section, we describe the pruning algorithms introduced in this paper.
In Section \ref{sec:prune-single},
we describe the pruning algorithm for a single knapsack constraint.
The main pruning algorithm (\qs, Alg. \ref{alg:multi}) runs multiple
copies of the single constraint algorithm in parallel, as detailed
in Section \ref{sec:prune-multi}.

To prove theoretical guarantees for \qs, we make
the following assumption on the costs. Intuitively, the assumption
says that no element of an optimal solution consumes a very large
fraction of the total budget. 
\begin{assumption}[No Huge Items (\nhi)]
  Given an instance $(f,c,\kappa)$ of \sm,
  the instance satisfies the assumption with $\eta > 0$
  if there exists an optimal solution $O$ to the
  instance, such that, for all $o \in O$, $c(o) \le \kappa (1-\eta)$.
\end{assumption}
For example, in the case of size constraint, this assumption is
satisfied if $\kappa \ge 2$ and $\eta \le 1/2$. 

Our theoretical guarantees for the main pruning algorithm are
summarized in Theorem \ref{thm:multi}, which is proved in Section \ref{sec:prune-multi}.
Observe that we show a constant factor $C$ of the optimal value is retained
for all budgets in the range $[\kappa_{\min}, \kappa_{\max}]$; the desired
range of budgets to support by the user. The constant
depends mainly on $\gamma$, the submodularity parameter of $f$,
and the input parameter $\delta$; $C$ is optimized
with $\delta = 1$ at the value $1/24 - \epsi$.
Also, we show a size bound on the size of $\uni'$, the pruned ground set,
which depends logarithmically on the size $n$ of $\uni$, as well as the values
of the maximum and minimum budgets and the minimum cost of an element. 
\begin{theorem} \label{thm:multi}
  Let $\kappa_{\min} < \kappa_{\max}$, let $0 < \eta \le 1/2$, and let $f, c$ be
  $\gamma$-submodular and modular functions, respectively.
  Suppose that for all $\kappa' \in [\kappa_{\min}, \kappa_{\max}]$,
  the instance $(f,c,\kappa')$ satisfies the \nhi assumption with $\eta$.
  Let $\uni'$ be the output of \qs (Alg. \ref{alg:multi})
  with parameters
  $(f,c,\kappa_{\min},\kappa_{\max}, \delta, \epsi, \eta)$.
  Then, for all $\kappa' \in [\kappa_{\min}, \kappa_{\max}]$,
  it holds that
  $f_{\kappa'}( \uni' ) \ge
  \frac{\delta \gamma^5(1 - \epsi \gamma^{-1})}{6(\delta \gamma^2 + 1)( 1 + \gamma^{-1} \delta ) }  f_{\kappa'}( \uni ).$
  And, $| \uni' | = \oh{ \log( \frac{ \kappa_{\max} }{ \kappa_{\min} } ) \left( 1 + \frac{\kappa_{\max} }{\delta c_{min} } \right) \log ( n / \epsi ) }.$
\end{theorem}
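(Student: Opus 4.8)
\emph{Proof plan.} The plan is to prove the quality guarantee for a single budget first, lift it to the whole range $[\kappa_{\min},\kappa_{\max}]$ by a budget-discretization argument, and then bound $|\uni'|$ by a multiplicative-progress counting argument. I read \qs as running one copy of the single-constraint routine \qss per budget on a geometric grid $\kappa_{\min}=\kappa_0<\kappa_1<\cdots<\kappa_L=\kappa_{\max}$ of constant ratio, where each copy maintains a feasible running set $S_i$ and retains an element $u$ (placing it in both $S_i$ and $\uni'$) exactly when its marginal density $\marge{u}{S_i}/c(u)$ clears an adaptive threshold proportional to $\delta f(S_i)/\kappa_i$; then $\uni'=\bigcup_i A_i$, where $A_i$ is the set retained by the $i$-th copy. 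Since $S_i\subseteq A_i$ is feasible, $f_{\kappa_i}(A_i)\ge f(S_i)$, so throughout it suffices to lower bound the value of the accumulated sets.

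\emph{Single-budget quality.} Fix a grid budget $\kappa=\kappa_i$, let $O$ be an optimal solution for $f_\kappa(\uni)$ witnessing \nhi with parameter $\eta$, and partition $O$ according to the fate of each $o$ in the stream: accepted (so $o\in S_i$), rejected by the density test, or blocked because the budget could not accommodate it. For a rejected $o$, the failed test gives $\marge{o}{S'_o}<\tfrac{\delta}{\kappa}f(S'_o)c(o)$ for the prefix $S'_o\subseteq S_i$ present when $o$ was processed; passing from $S'_o$ to $S_i$ costs one factor of $\gamma^{-1}$ by $\gamma$-submodularity, and summing over rejected elements with $c(O)\le\kappa$ bounds their total residual marginal by an $\oh{\gamma^{-2}\delta}$ fraction of $f(S_i)$. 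The \nhi assumption is what controls the blocked elements: since every $o$ has $c(o)\le(1-\eta)\kappa$, a block can only occur once $c(S_i)>\eta\kappa$, and as every accepted element cleared the density threshold we get $f(S_i)\ge\theta\,c(S_i)>\theta\eta\kappa$, so the algorithm cannot halt having accumulated negligible value. Feeding these two bounds into the monotone $\gamma$-submodular inequality $f(O)\le f(S_i)+\gamma^{-1}\sum_{o\in O}\marge{o}{S_i}$ yields $f_\kappa(A_i)\ge f(S_i)\ge C' f_\kappa(\uni)$, where the two denominator factors $(\delta\gamma^2+1)$ and $(1+\gamma^{-1}\delta)$ record, respectively, the density-to-value conversion and the residual-marginal accounting.

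\emph{Discretization.} For arbitrary $\tau\in[\kappa_{\min},\kappa_{\max}]$ I would take the grid budget $\kappa_i\le\tau$ nearest below $\tau$, so $\tau\le(1+\rho)\kappa_i$ with $\rho$ the grid ratio. As $\kappa_i\le\tau$, every $\kappa_i$-feasible set is $\tau$-feasible, hence $f_\tau(\uni')\ge f_{\kappa_i}(\uni')\ge f_{\kappa_i}(A_i)\ge C' f_{\kappa_i}(\uni)$; and a scaling lemma for monotone $\gamma$-submodular functions (replicate an optimal $\tau$-solution across $\lceil\tau/\kappa_i\rceil$ budget-$\kappa_i$ pieces) bounds $f_\tau(\uni)$ by a constant multiple of $f_{\kappa_i}(\uni)$. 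Chaining these gives the stated constant, with the factor $6$ absorbing this constant-ratio discretization loss on top of the single-budget constant, the $\gamma^5$ accumulating the $\gamma$-losses across the chain, and the $(1-\epsi\gamma^{-1})$ term coming from a lowest density threshold proportional to $\epsi$ below which elements are never tested (discarding at most an $\epsi$-fraction of the optimum).

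\emph{Size bound.} The grid has $\oh{\log(\kappa_{\max}/\kappa_{\min})}$ levels, so it remains to bound $|A_i|$ per level. Here the adaptive threshold pays off: each retained $u$ multiplies $f(S_i)$ by at least $1+\tfrac{\delta c(u)}{\kappa_i}\ge 1+\tfrac{\delta c_{\min}}{\kappa_i}$, while $f(S_i)$ only ranges from the first retained value up to $f(\uni)$, a ratio that is $\oh{n/\epsi}$ by monotonicity, subadditivity, and the $\epsi$-cutoff. Taking logarithms gives $|A_i|=\oh{\bigl(1+\tfrac{\kappa_i}{\delta c_{\min}}\bigr)\log(n/\epsi)}$; using $\kappa_i\le\kappa_{\max}$ and summing over levels yields the claim. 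The main obstacle is the single-budget step: correctly accounting the $\gamma$-submodular losses when transferring each failed density test from its prefix $S'_o$ to the final $S_i$, and using \nhi to neutralize the knapsack boundary effect, so that the non-accepted optimal mass is provably only a small multiple of $f(S_i)$. The discretization and counting steps are comparatively routine once this single-budget constant is in hand.
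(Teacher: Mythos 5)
Your high-level architecture matches the paper's: a geometric grid of budgets, one copy of \qss per grid point, a single-budget guarantee lifted to the whole range by discretization, and a per-copy size bound multiplied by the number of grid levels. However, there are two genuine gaps, and both trace to where you placed the \nhi assumption. First, you misread \qss: the paper's single-budget routine never maintains a feasible running set and never ``blocks'' an element for budget reasons --- the only cost check is discarding elements with $c(e) > \kappa$ individually, so the accumulated set $A$ may grow far beyond the budget. Consequently there are no blocked elements to control, and \nhi plays no role whatsoever in the single-budget analysis. The heart of that analysis is instead Lemma \ref{lemma:value-inside-A}: from the (possibly infeasible) final set $A_m$ plus the backup singleton $a^*$, one extracts a feasible subset retaining a $\frac{\delta\gamma^4}{2(\delta\gamma^2+1)}$ fraction of $f(A_m)$, via a telescoping argument over the suffix of $A_m$ whose cost first exceeds $\kappa$. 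Your substitute argument for blocked elements is also internally flawed: with an \emph{adaptive} threshold $\theta \propto \delta f(S_i)/\kappa$, the inequality $f(S_i) \ge \theta\, c(S_i)$ does not hold (early elements were admitted at much lower thresholds), and even granting it, $\theta\eta\kappa$ has no established relation to $f(O)$, so the claimed constant does not follow.

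Second, and more critically, your discretization step invokes a ``scaling lemma'' --- replicate an optimal $\tau$-solution across $\lceil \tau/\kappa_i \rceil$ budget-$\kappa_i$ pieces --- that is \emph{false} for general instances: a single element with cost in $(\kappa_i, \tau]$ fits in no piece, so no such partition exists. This is exactly the step where \nhi is indispensable, and it is the only place the paper uses it. The paper's grid is $\tau_i = \kappa_{\max}(1-\eta)^i$ (the ratio is tied to $\eta$, not an arbitrary constant), so that for any $\kappa'$ there is a grid budget $\tau$ with $\kappa'(1-\eta) \le \tau \le \kappa'$; then Proposition \ref{prop:part} uses \nhi (each $c(o) \le \kappa'(1-\eta)$, together with $\eta \le 1/2$) to partition the optimal $\kappa'$-solution $O$ into at most \emph{three} sets each of cost at most $\kappa'(1-\eta) \le \tau$, hence each feasible for the $\tau$-copy of \qss. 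Applying Theorem \ref{thm:single} to that copy and summing over the three pieces via $\gamma$-subadditivity gives $3f(X) \ge \alpha\gamma f(O)$, which is precisely where the constants in the statement come from: the $6 = 2 \times 3$ and the $\gamma^5 = \gamma^4\cdot\gamma$. Your write-up leaves both the ``6'' and the extra $\gamma$ as unexplained absorption of losses. Your size-bound paragraph is essentially the paper's argument (geometric growth of $f(A)$ between deletions, Fact \ref{fact:1}, summed over $\oh{\log(\kappa_{\max}/\kappa_{\min})}$ copies) and is fine, though your description of the $\epsi$ mechanism as a density floor differs from the paper's checkpoint-deletion rule on Line \ref{line:delete}, whose loss is bounded by Proposition \ref{prop:ahat-adot}.
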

\subsection{Pruning for a Single Knapsack Constraint} \label{sec:prune-single}
First, we target the case of a single constraint.
Given a single budget value $\kappa$, we formulate an algorithm to prune for the
instance $(f, \uni, \kappa,c)$. That is, we produce an algorithm that, for some
$\alpha > 0$, produces $\uni'$ such that $f_{\kappa}( \uni' ) \ge \alpha  f_{\kappa }( \uni )$
and $|\uni'| \le \oh{F( \kappa, c) \polylog(n)}. $

The algorithm \qss is given in Alg. \ref{alg:single}. In addition to the problem
instance, it takes parameter $\delta > 0$, which impacts the size of the resulting
pruned set $\uni'$ by adjusting the condition to add elements to $\uni'$;
and parameter $\epsi > 0$, which controls how aggressively the algorithm deletes elements.
In overview, the algorithm works by taking one pass through the ground set $\uni$ and
processing elements one-by-one. An element is added to $\uni'$ if it meets the condition
on Line \ref{line:add}: $\marge{e}{A} \ge \frac{ \delta c(e) f(A) }{ \kappa }$; intuitively,
this means the element should increase the value of $A$ (the pruned set)
an amount proportional to $c(e) / \kappa$ to be worth retaining.

To ensure the size of $\uni'$ doesn't grow too large, a deletion condition is checked
on Line \ref{line:delete}, which intuitively asks if the value of $A$ has increased by
a large factor from the previous checkpoint; if so, the original elements of $A$ are
discarded. Submodularity is used to bound the amount of value lost from the deletion,
and the condition on element addition bounds the maximal number of elements required for
the increased value.

The theoretical properties are summarized in the following theorem, proven in Section \ref{sec:single-proofs}.
\begin{theorem} \label{thm:single}
  Let Alg. \ref{alg:single} be run on instance $(\uni, c, f, \kappa)$, with parameters $\epsi, \delta > 0,$
  such that $f$ is $\gamma$-submodular. 
  Then, Alg. \ref{alg:single} produces pruned universe $\uni'$, such that
  $| \uni ' | < 2 \left( 1 + \frac{\kappa }{\delta c_{min} } \right) \log ( n / \epsi ) + 3,$
  and there exists $A' \subseteq \uni'$, such that $c(A') \le \kappa$ and
  $f(A') \ge \frac{\delta \gamma^4(1 - \epsi \gamma^{-1})}{2(\delta \gamma^2 + 1)( 1 + \gamma^{-1} \delta ) } \opt$.
\end{theorem}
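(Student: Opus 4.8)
The plan is to prove the two conclusions separately; the size bound follows from a geometric-growth argument, while the value-retention bound is the substantive part and rests on a threshold-style covering argument adapted to the knapsack constraint and to weak submodularity. Note that, unlike Theorem~\ref{thm:multi}, no \nhi assumption is needed here.

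For the size bound, I would first observe that the addition condition on Line~\ref{line:add} forces every element $e$ appended to the working set $A$ to satisfy $f(A \cup \{e\}) \ge \left(1 + \frac{\delta c(e)}{\kappa}\right) f(A) \ge \left(1 + \frac{\delta c_{min}}{\kappa}\right) f(A)$, so $f(A)$ grows geometrically in the number of additions. Consequently, between two consecutive firings of the deletion condition on Line~\ref{line:delete} — which triggers once $f(A)$ has increased by the prescribed factor — the number of retained elements is $\oh{\left(1 + \frac{\kappa}{\delta c_{min}}\right)\log(\text{factor})}$, using $\log(1+x) \ge x/(1+x)$. Since the deletion condition is calibrated so that the value grows by at most a factor of order $n/\epsi$ within a phase, and \qss retains only a bounded number of phases (the current batch and the immediately preceding one), multiplying by two and adding a small constant for boundary effects yields $|\uni'| < 2\left(1 + \frac{\kappa}{\delta c_{min}}\right)\log(n/\epsi) + 3$.

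For the value bound, let $O$ be an optimal feasible solution, so $f(O) = \opt$ and $c(O) \le \kappa$, and let $\hat A$ denote the relevant working set (the final phase, or the checkpoint set against which its rejections were tested). The first key step is a rejection estimate: if $o \in O$ fails the addition test at working-set state $A_o$, then $\marge{o}{A_o} < \frac{\delta c(o) f(A_o)}{\kappa}$, and $\gamma$-submodularity applied to $A_o \subseteq \hat A$ converts this into $\marge{o}{\hat A} \le \gamma^{-1}\marge{o}{A_o} < \frac{\delta c(o) f(\hat A)}{\gamma \kappa}$. The second key step is the covering inequality for weakly submodular functions, $\opt \le f(O \cup \hat A) \le f(\hat A) + \gamma^{-1}\sum_{o \in O \setminus \hat A}\marge{o}{\hat A}$; substituting the rejection estimate and using $\sum_{o \in O} c(o) \le \kappa$ collapses the sum into a single term proportional to $\delta\gamma^{-2} f(\hat A)$, and rearranging gives a lower bound on $f(\hat A)$ proportional to $\opt$, which supplies the $(\delta\gamma^2+1)$ factor in the final denominator. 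Two further reductions produce the remaining factors: because the working set may violate the budget, I would extract a feasible subset of $\hat A$ whose value is a constant fraction of $f(\hat A)$ (contributing the explicit factor $2$), first arguing $c(\hat A) \le 2\kappa$ by reusing the geometric cost-value relation so that a larger cost would have forced an earlier deletion; and the value discarded at each reset, controlled by $\epsi$ together with an application of the addition condition that lower-bounds a retained phase's value in terms of $\delta$, accounts for the numerator $\delta\gamma^4(1-\epsi\gamma^{-1})$ and the second denominator factor $(1 + \gamma^{-1}\delta)$. This is also where the two-sided role of $\delta$ enters — strict thresholds weaken the covering bound while lax thresholds weaken the extraction — explaining why the coefficient is maximized at an interior value of $\delta$.

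The main obstacle is the interaction between the deletion mechanism and the covering argument. The covering inequality composes cleanly only when $A$ grows monotonically, so that $A_o \subseteq \hat A$ and the rejection estimates chain directly; resets destroy this subset relation, since a rejected element may have been tested against a checkpoint set that is no longer contained in the final working set. The crux is therefore to show that each reset forfeits at most an $\epsi$-controlled fraction of the value and that the per-phase rejection bounds still aggregate correctly across phases. This careful bookkeeping across deletions, rather than any single inequality, is the technical heart of the proof.
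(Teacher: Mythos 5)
Your size bound and the skeleton of your covering step (rejection estimate plus weak-submodular covering applied to rejected elements of $O$) do match the paper's Proposition \ref{prop:uprime-size} and Proposition \ref{prop:opt-ahat}, but the value argument has two genuine gaps. The first is the feasibility extraction: your claim that $c(\hat A) \le 2\kappa$ holds ``because a larger cost would have forced an earlier deletion'' is false. The deletion rule on Line \ref{line:delete} monitors only the growth of $f(A)$ relative to the checkpoint $f(A_s)$; it says nothing about cost. Between two resets the working set can accumulate $\Theta\left(\left(1+\frac{\kappa}{\delta c_{min}}\right)\log(n/\epsi)\right)$ elements, each of cost as large as $\kappa$ (any element with $c(e)\le\kappa$ that passes Line \ref{line:add} is kept), so $c(A)$ can exceed $2\kappa$ by an unbounded factor. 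The paper's Lemma \ref{lemma:value-inside-A} does something different: it takes the minimal suffix $A' = \{a_{i'},\dots,a_m\}$ (in insertion order) whose cost exceeds $\kappa$, telescopes the addition condition to show $f(A') > \gamma^2\delta\, f(A_m\setminus A')$ --- i.e., the suffix carries most of the value of $A_m$ --- then drops $a_{i'}$ to restore feasibility ($A''=A'\setminus a_{i'}$) and uses the reserved singleton $a^*$ to compensate, via $f(a^*)+f(A'')\ge f(a_{i'})+f(A'')\ge \gamma f(A')$; the factor $2$ comes from taking the better of $f(a^*)$ and $f(A'')$. Your proposal never invokes $a^*$ at all, yet it is indispensable: the single dropped element $a_{i'}$ may hold essentially all of the suffix's value, and without the backup singleton no constant-fraction feasible subset can be exhibited. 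This is exactly why the algorithm tracks $a^*$ and returns $\uni' = A + a^*$.

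The second gap is the one you yourself flag as ``the crux'': how the rejection estimates survive deletions. You leave it as bookkeeping to be carried out, but it is resolved by a definitional choice you do not make. The paper runs the covering argument against $\hat A := \bigcup_j A_j$, the union of \emph{all} elements ever added, including those later deleted; with this choice $A_{j(o)}\subseteq\hat A$ holds for every rejected $o$ no matter how many resets occur, so Proposition \ref{prop:opt-ahat} yields $f(\hat A)\ge \opt/(1+\gamma^{-1}\delta)$ with no phase-by-phase aggregation whatsoever. Deletions are then paid for exactly once, by Proposition \ref{prop:ahat-adot}: each reset loses at most $\frac{\gamma^{-1}\epsi}{n}f(\hat A)$ by the threshold on Line \ref{line:delete}, and summing over at most $n$ resets gives $f(\dot A)\ge(1-\gamma^{-1}\epsi)f(\hat A)$; Lemma \ref{lemma:value-inside-A} is finally applied to $\dot A + a^*$ (Proposition \ref{prop:aprime}). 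Two smaller inaccuracies: applying $\gamma$-submodularity twice in the covering (once to lift $A_o$ to $\hat A$, once inside the covering sum) yields $1+\delta\gamma^{-2}$ rather than the paper's sharper $1+\gamma^{-1}\delta$, which follows from a single application comparing the marginal on $\hat A\cup\{o_1,\dots,o_{i-1}\}$ directly against $A_{j(o_i)}$; and your attribution of constants is scrambled --- the covering supplies $(1+\gamma^{-1}\delta)$, while $(\delta\gamma^2+1)$, the factor $2$, and the numerator $\delta\gamma^4$ all come from the feasible-subset extraction, not the other way around.
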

\begin{algorithm}[t]
	\caption{\qss:  Pruning for a single constraint.}
	\begin{algorithmic}[1] \label{alg:single}
		\STATE \textbf{Input:} Instance $(f, \uni, c, \kappa)$, size-control parameter $\delta > 0$, deletion parameter $\epsi > 0$
		\STATE \textbf{Output:} Pruned ground set, $\uni'$
		\STATE \textbf{Initialize:} $A \gets \emptyset$, $a^* \gets \emptyset$, $A_s \gets \emptyset$
                \FOR {$e \in \uni$}
                \IF{ $c(e) > \kappa$  }
                \STATE \textbf{continue}
                \ENDIF{}
                \IF{$ \marge{e}{A} \ge \frac{\delta c(e) f(A) }{ \kappa }$ \label{line:add}}
                \STATE $A \gets A + e$
                \ENDIF{}
                \IF{ $f(e) > f(a^*)$  }
                \STATE $a^* \gets e$
                \ENDIF
                \IF{ $f(A) > \frac{ n }{ \epsi } f(A_s) $\label{line:delete} }
                \STATE $A \gets A \setminus A_s$ \label{line:delete-set}
                \STATE $A_s \gets A$
                \ENDIF
                \ENDFOR
                \STATE \textbf{return} $\uni' \gets A + a^*$ 
	\end{algorithmic}
\end{algorithm}
\subsubsection{Proof of Theorem \ref{thm:single}} \label{sec:single-proofs}
Due to space constraints, omitted proofs are provided in
Appendix \ref{apx:proof}. We require the following fact about geometrically increasing sequences of real
numbers. For our purposes, the sequence $(y_i)$ will assume the values of $f(A)$ as elements are
added. 
\begin{fact} \label{fact:1}
  Let $(y_i)_{i=1}^m$ be a sequence of positive real numbers, such that, for some $\beta > 0$, it holds that $y_i \ge (1 + \beta)y_{i-1}$, for all $i \in [m]$. Let $\gamma > 0$. Then, if $m \ge \frac{\beta + 1}{\beta}\log \gamma^{-1} $, it holds that $y_m \ge y_1 / \gamma$.
\end{fact}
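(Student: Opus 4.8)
The plan is to iterate the one-step growth bound into a closed-form lower bound on $y_m$ and then translate the hypothesis on $m$ into exactly the exponent needed. First I would unroll the recurrence: applying $y_i \ge (1+\beta)y_{i-1}$ repeatedly down to $y_1$ yields $y_m \ge (1+\beta)^{m-1}y_1$, so it suffices to prove $(1+\beta)^{m-1}\ge \gamma^{-1}$, which is equivalent to $y_m \ge y_1/\gamma$. I would first dispose of the easy regime: if $\gamma \ge 1$ then $\log \gamma^{-1}\le 0$, and since $\beta>0$ forces the sequence to be nondecreasing, $y_m \ge y_1 \ge y_1/\gamma$ holds with no work. So I may assume $0<\gamma<1$, where $\log \gamma^{-1}>0$.

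In the remaining case I would take logarithms and reduce the target $(1+\beta)^{m-1}\ge \gamma^{-1}$ to the linear condition $(m-1)\log(1+\beta)\ge \log \gamma^{-1}$. The crux — and essentially the only nontrivial ingredient — is the elementary inequality $\log(1+\beta)\ge \frac{\beta}{1+\beta}$ for $\beta>0$, equivalently $\frac{1}{\log(1+\beta)}\le \frac{1+\beta}{\beta}$. This is what converts the algorithm-friendly hypothesis $m\ge \frac{\beta+1}{\beta}\log \gamma^{-1}$, which is phrased without a logarithm in the denominator, into the analytically required threshold on the exponent: multiplying the inequality by $\log \gamma^{-1}>0$ gives $\frac{\log \gamma^{-1}}{\log(1+\beta)}\le \frac{\beta+1}{\beta}\log \gamma^{-1}\le m$, which drives the exponent past $\log \gamma^{-1}$ and closes the argument.

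I expect the proof to be short, so the only real obstacle is bookkeeping rather than mathematics: matching the stated constant $\frac{\beta+1}{\beta}$ exactly requires being careful about the off-by-one in the exponent (whether unrolling contributes $m-1$ or $m$ factors of $(1+\beta)$, given the paper's index convention) and about which base the logarithm is in, since the key inequality $\log(1+\beta)\ge \frac{\beta}{1+\beta}$ holds only for the natural logarithm. I would verify the inequality itself by checking that $g(\beta)=\log(1+\beta)-\frac{\beta}{1+\beta}$ satisfies $g(0)=0$ and $g'(\beta)=\frac{\beta}{(1+\beta)^2}\ge 0$, and then confirm the constant is tight enough under the intended indexing.
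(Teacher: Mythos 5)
First, a point of reference: the paper never actually proves Fact \ref{fact:1} --- it is stated as an unproven ``Fact'' and invoked directly in the proof of Proposition \ref{prop:uprime-size} --- so there is no in-paper argument to compare against. Your plan (unroll the recurrence, dispose of the trivial case $\gamma \ge 1$, take logarithms, and convert the stated threshold via the elementary inequality $\ln(1+\beta) \ge \beta/(1+\beta)$) is certainly the intended standard argument, and each of those ingredients is correct in isolation.

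However, the off-by-one that you defer as ``bookkeeping'' is a genuine gap, and it cannot be closed under the literal reading of the statement. Your unrolling gives $y_m \ge (1+\beta)^{m-1}y_1$ (a sequence indexed $1,\ldots,m$ carries only $m-1$ ratio conditions), so your target is $(m-1)\log(1+\beta) \ge \log\gamma^{-1}$; but the hypothesis $m \ge \frac{\beta+1}{\beta}\log\gamma^{-1}$ combined with $\log(1+\beta) \ge \frac{\beta}{1+\beta}$ delivers only $m\log(1+\beta) \ge \log\gamma^{-1}$ --- one factor of $(1+\beta)$ short, so the final step ``which closes the argument'' is a non sequitur. The shortfall is not repairable, because the statement read this way is false: for $m=1$ the ratio hypothesis is vacuous and the threshold condition holds whenever $\beta$ is large and $\gamma$ is close to $1$ (e.g.\ $\beta = 10$, $\gamma = 0.9$ gives threshold $\approx 0.12 \le 1$), yet $y_1 \ge y_1/\gamma$ fails; less degenerately, with the natural logarithm, take $\beta = 0.1$, $\gamma = 1/2$, $y_i = (1.1)^{i-1}$: the hypothesis holds at $m = 8 \ge 11\ln 2 \approx 7.62$, while $y_8 = (1.1)^7 \approx 1.95 < 2 = y_1/\gamma$. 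What your chain of inequalities actually proves is the corrected statement in which there are $m$ ratio conditions --- i.e.\ the sequence is indexed $(y_i)_{i=0}^m$ and the conclusion is $y_m \ge y_0/\gamma$, or equivalently the threshold is imposed on $m-1$. That corrected version is what the paper really needs (its application in Proposition \ref{prop:uprime-size} inherits the same off-by-one, which only perturbs the size bound by an additive constant), and indeed the paper's own convention $[m] = \{0,\ldots,m-1\}$ makes the quantifier ``for all $i \in [m]$'' ill-formed as written: $i \le 1$ refers to an undefined $y_0$, and the condition tying $y_m$ to $y_{m-1}$ is never even asserted. So the right move is not to massage the constant but to restate the fact with the sequence indexed from $0$; your proof then goes through verbatim.
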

Using Fact \ref{fact:1}, we bound the number of elements in $\uni'$. Intuitively, the geometric increase
in the value of $f(A)$ yields a bound on the maximum number of elements until the deletion condition
on Line \ref{line:delete} is triggered. 
\begin{proposition} \label{prop:uprime-size}
  The size of $\uni'$, as returned by Alg. \ref{alg:single}, satisfies
  $|\uni'| < 2 \left( 1 + \frac{\kappa }{\delta c_{min} } \right) \log ( n / \epsi ) + 3,$
  where $c_{min} = \min_{a \in \uni } c(a).$
\end{proposition}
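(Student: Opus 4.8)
The plan is to bound $|\uni'| = |A \cup \{a^*\}| \le |A| + 1$ by controlling the size of $A$ at termination. The central observation is that the deletion block on Lines~\ref{line:delete}--\ref{line:delete-set} partitions the run into \emph{epochs} (the intervals between consecutive deletions), and that at any moment $A$ holds elements from at most the two most recent epochs. First I would verify the bookkeeping: right after a deletion, the update $A \gets A \setminus A_s$ followed by $A_s \gets A$ leaves both $A$ and $A_s$ equal to exactly the elements inserted since the previous deletion. Consequently, during the next epoch $A$ equals this snapshot together with the newly inserted elements, and the next deletion removes the snapshot again. Hence at termination $A$ is contained in the union of the elements inserted during the final (possibly incomplete) epoch and those inserted during the immediately preceding epoch, so $|A|$ is at most twice the maximum number of insertions that can occur within one epoch.

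Next I would bound the insertions per epoch via the geometric growth forced by the addition rule. Whenever $e$ is added on Line~\ref{line:add}, the condition $\marge{e}{A} \ge \frac{\delta c(e) f(A)}{\kappa}$ gives $f(A + e) = f(A) + \marge{e}{A} \ge f(A)\left(1 + \frac{\delta c(e)}{\kappa}\right) \ge f(A)\left(1 + \frac{\delta c_{min}}{\kappa}\right)$, since $c(e) \ge c_{min}$. Reading off the successive values of $f(A)$ within one epoch as a sequence $(y_i)$, we thus have $y_i \ge (1 + \beta) y_{i-1}$ with $\beta = \frac{\delta c_{min}}{\kappa}$. On the other hand, the deletion test on Line~\ref{line:delete} ends the epoch as soon as $f(A)$ exceeds $\frac{n}{\epsi}$ times the snapshot value $f(A_s)$ recorded at the epoch's start, so throughout an epoch the total growth factor of $f(A)$ stays below $n/\epsi$. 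Applying Fact~\ref{fact:1} in contrapositive form with this $\beta$ and $\gamma^{-1} = n/\epsi$ shows that a single epoch contains at most $\frac{\beta + 1}{\beta}\log(n/\epsi) + 1 = \left(1 + \frac{\kappa}{\delta c_{min}}\right)\log(n/\epsi) + 1$ insertions, the extra $+1$ accounting for the insertion that actually triggers the deletion.

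Combining the two steps yields $|A| \le 2\left(1 + \frac{\kappa}{\delta c_{min}}\right)\log(n/\epsi) + 2$, and adjoining the single element $a^*$ gives the claimed $|\uni'| < 2\left(1 + \frac{\kappa}{\delta c_{min}}\right)\log(n/\epsi) + 3$.

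I expect the main obstacle to be the degenerate start of the run rather than the geometric counting. For every epoch after the first deletion the snapshot value $f(A_s)$ is strictly positive, so the multiplicative argument applies cleanly; but the initial epoch begins with $A_s = \emptyset$, and if $f(\emptyset) = 0$ the deletion threshold is $0$ and the growth bound has no valid positive baseline until $f(A)$ first becomes positive. The hard part will be showing that only boundedly many elements can be inserted while $f(A)$ is still zero (note the addition rule admits any element of nonnegative marginal gain in this regime): I would handle this by arguing that the first deletion fires as soon as $f(A)$ becomes positive and that the preceding zero-value transient contributes only to the constant term, invoking the mild nondegeneracy that every retained singleton has positive value where necessary. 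Care is also needed to confirm that the per-epoch bound for the final, incomplete epoch (where no deletion fires) follows from the same growth-versus-threshold comparison.
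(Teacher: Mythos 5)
Your proof is correct and follows essentially the same route as the paper's: the paper likewise partitions the execution into epochs during which the snapshot $A_s$ is fixed, applies Fact \ref{fact:1} with $\beta = \delta c_{\min}/\kappa$ and threshold $n/\epsi$ to bound the insertions per epoch by $m^* = \left(1 + \frac{\kappa}{\delta c_{\min}}\right)\log(n/\epsi) + 1$, and concludes $|A| \le 2m^*$ because $A$ always consists of the current epoch's insertions together with the previous epoch's snapshot, giving $|\uni'| \le 2m^* + 1$. The degenerate zero-baseline start you flag is a real corner case, but the paper's own proof does not treat it either — it invokes Fact \ref{fact:1} (stated for positive sequences) on $y_1 = f(A_s)$ without comment — so your proposal matches, and in this respect exceeds, the paper's level of rigor.
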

Next, we bound how much value may have been lost from deletion throughout the execution
of the algorithm. In the following proposition, $\hat A$ is all elements ever added to
$A$, and $\dot A$ is the actual set obtained by the algorithm after all deletions. We
show at most an $\epsi\gamma^{-1}$-fraction of value is lost. 
\begin{proposition} \label{prop:ahat-adot}
  Suppose $f$ is $\gamma$-submodular.
  Let $A_i$ be the value of $A$ after the execution of iteration $i$ of the \textbf{for} loop,
  for $i = 1$ to $n$, and $A_0 = \emptyset$. 
  Let $\hat A = \bigcup A_j, \dot A = A_n$. Then $f( \dot A ) \ge (1 - \gamma^{-1} \epsi )f( \hat A)$. 
\end{proposition}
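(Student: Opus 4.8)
The plan is to analyze the bookkeeping of the deletion mechanism (Lines \ref{line:delete}--\ref{line:delete-set}) and show that only a small fraction of value is ever discarded. I would first organize the run of the algorithm into consecutive \emph{phases} delimited by the deletion events. Tracking how $A$ and $A_s$ co-evolve, one sees that immediately after each deletion the set $A$ equals exactly the elements added since the previous deletion (the previous checkpoint $A_s$ is subtracted off on Line \ref{line:delete-set}), and that $A_s$ is then reset to this set. Writing $Q_1, Q_2, \ldots$ for the pairwise disjoint blocks of elements added between consecutive deletions, the $j$-th deletion removes exactly $Q_{j-1}$, with the first deletion removing the empty set. Consequently $\hat A = \bigcup_i Q_i$, the surviving set $\dot A$ consists of the final block(s) added after the second-to-last deletion, and the discarded elements are $R := \hat A \setminus \dot A = \bigcup_{i=1}^{k-1} Q_i$, where $k$ is the number of deletions.

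Next I would convert the deletion condition into a value bound on each discarded block. At the deletion that removes $Q_i$, we have $A_s = Q_i$ together with the triggering inequality $f(A) > \frac{n}{\epsi} f(A_s)$, so $f(Q_i) < \frac{\epsi}{n} f(A)$; since the current $A \subseteq \hat A$, monotonicity yields $f(Q_i) < \frac{\epsi}{n} f(\hat A)$ for every discarded block. Because the \textbf{for} loop runs $n$ iterations and triggers at most one deletion per iteration, $k \le n$, so $R$ is a union of fewer than $n$ such cheap blocks.

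The final step bounds the total value lost by telescoping. I would first establish the $\gamma$-submodular subadditivity estimate $f(S \cup T) - f(S) \le \gamma^{-1} f(T)$: order the elements of $T \setminus S$, write both differences as sums of incremental marginals, and apply $\gamma\, \marge{x}{S \cup P} \le \marge{x}{P}$ term-by-term (where $P$ is the set of earlier $T$-elements), using $f(\emptyset) \ge 0$. Adding the discarded blocks back onto $\dot A$ one block at a time and summing this estimate gives $f(\hat A) - f(\dot A) \le \gamma^{-1}\sum_{i=1}^{k-1} f(Q_i) < \gamma^{-1} \cdot n \cdot \frac{\epsi}{n} f(\hat A) = \gamma^{-1}\epsi\, f(\hat A)$, which rearranges to $f(\dot A) \ge (1 - \gamma^{-1}\epsi) f(\hat A)$, the claim.

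I expect the main obstacle to be the first step: correctly untangling the interaction between the addition and deletion rules so as to prove that each deletion removes precisely one full block $Q_{i-1}$, and that the blocks are genuinely disjoint. The deletion test compares all of $A$ against the single checkpoint $A_s$, so I must argue carefully that $A_s$ always coincides with the most recently completed block; once this phase structure is pinned down, the per-block value bound and the $\gamma$-submodular telescoping are routine.
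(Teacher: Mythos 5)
Your proposal is correct and takes essentially the same approach as the paper: the paper likewise identifies the sequence of deleted checkpoint sets $\dot B_i$, bounds each one's value by $\frac{\epsi}{n} f(\hat A)$ via the deletion condition on Line \ref{line:delete} plus monotonicity, and telescopes with the $\gamma$-submodular estimate $f(S \cup T) - f(S) \le \gamma^{-1} f(T)$ and the bound of at most $n$ deletions. The only differences are presentational: you telescope upward from $\dot A$ by re-adding the discarded blocks (and spell out the phase structure and the subadditivity estimate explicitly), whereas the paper telescopes downward from $\hat A$ by successively removing the deleted sets.
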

So far, we have established a size bound on $\uni'$ and bounded the value of
$\uni'$ lost from deletion. Next, we turn to showing that there exists a feasible set
inside $\uni'$ that has a constant fraction of the optimal value.
We start by showing in Lemma \ref{lemma:value-inside-A}
and Proposition \ref{prop:aprime} that
a set $A^*$ exists within $\uni'$ that has a constant fraction
of $f(\uni')$. 
\begin{lemma} \label{lemma:value-inside-A}
  Let $f$ be $\gamma$-submodular.
  Let $\delta, \kappa > 0$. Suppose $A_i = \{ a_1, \ldots a_i \}$ is a sequence of sets satisfying
  (1) $c(a_i) \le \kappa$, and (2) $\marge{ a_i }{ A_{i - 1} } \ge \gamma \frac{\delta c( a_i ) f( A_{i -1 } ) }{ \kappa }$,
  for each $i$ from $1$ to $m$. Let $f( a^* ) \ge \max_{i \in [m]} f(a_i)$, and $c(a^*) \le \kappa$.
  Then there exists $A^* \subseteq A_m + a^*$, such that $c( A^* ) \le \kappa$, and
  $f( A^* ) \ge \frac{ \delta \gamma^4}{2(\delta\gamma^2 + 1)} f(A_m)$.
\end{lemma}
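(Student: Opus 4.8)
The plan is to exploit the density condition (2) to show that the value of the greedy sequence grows geometrically with accumulated cost, so that almost all of $f(A_m)$ is concentrated in a feasible \emph{suffix} of total cost at most $\kappa$; a fallback to the single best element $a^*$ then absorbs the one ``straddling'' element that must be dropped to keep the suffix feasible. First I would dispose of the trivial case: if $c(A_m)\le\kappa$ then $A_m$ is itself feasible and $A^*=A_m$ satisfies the claim outright. Otherwise, let $j$ be the largest index for which $\{a_j,\dots,a_m\}$ is still infeasible, so that $S:=\{a_{j+1},\dots,a_m\}$ has $c(S)\le\kappa$ while $c(\{a_j,\dots,a_m\})>\kappa$. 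Both $S$ and the singleton $\{a^*\}$ are feasible, and I will show one of them already carries a constant fraction of $f(A_m)$; the output $A^*$ is then whichever of the two is larger.

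The core step, and the heart of the argument, is a geometric-growth bound. Condition (2) rewrites as $f(A_i)\ge\bigl(1+\tfrac{\gamma\delta c(a_i)}{\kappa}\bigr)f(A_{i-1})$; multiplying from $i=j$ to $m$, using $\prod_i(1+x_i)\ge 1+\sum_i x_i$ and the fact that $\sum_{i=j}^m c(a_i)=c(\{a_j,\dots,a_m\})>\kappa$, yields $f(A_{j-1})\le f(A_m)/(1+\gamma\delta)$. This says precisely that everything strictly preceding the (almost-)feasible suffix is a vanishing fraction of the total. This is where the main obstacle lies: without the density condition, a greedy sequence may be arbitrarily long with total cost far exceeding $\kappa$, so no single feasible set need retain a constant fraction of $f(A_m)$ (indeed no feasible \emph{prefix} can be controlled). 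It is condition (2) alone that forces the value to pile up toward the end and makes a feasible suffix competitive.

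Two applications of $\gamma$-submodularity then finish the job. Since $A_m=A_j\cup S$ is a disjoint union, the set version of the diminishing-returns inequality gives $\marge{S}{A_j}\le\gamma^{-1}\marge{S}{\emptyset}=\gamma^{-1}f(S)$, hence $f(S)\ge\gamma\bigl(f(A_m)-f(A_j)\bigr)$. For the one dropped element I would write $f(A_j)=f(A_{j-1})+\marge{a_j}{A_{j-1}}\le f(A_{j-1})+\gamma^{-1}f(a_j)\le f(A_m)/(1+\gamma\delta)+\gamma^{-1}f(a^*)$, charging $a_j$ to the best element $a^*$. Combining the two displays gives $f(S)+f(a^*)\ge\tfrac{\gamma^2\delta}{1+\gamma\delta}f(A_m)$, so $\max\bigl(f(S),f(a^*)\bigr)\ge\tfrac{\gamma^2\delta}{2(1+\gamma\delta)}f(A_m)$, and taking $A^*\in\{S,\{a^*\}\}$ to be the larger produces a feasible set with a constant fraction of $f(A_m)$.

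I expect the remaining work to be purely bookkeeping rather than conceptual. Tracking the $\gamma$-submodularity conversions carefully (and, if $f(\emptyset)\neq 0$, absorbing the additive constant in the marginal-to-value steps), together with whatever slack is taken in the product inequality over the straddling span, is exactly what turns the clean estimate $\tfrac{\gamma^2\delta}{2(1+\gamma\delta)}$ sketched above into the stated constant $\tfrac{\delta\gamma^4}{2(\delta\gamma^2+1)}$; the factor $\tfrac12$ and the role of $a^*$ both trace back to the single boundary element excluded to preserve feasibility.
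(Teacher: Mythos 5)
Your proof is correct and follows essentially the same route as the paper's: split $A_m$ at the last straddling index into a feasible suffix plus the preceding prefix, use condition (2) to show the prefix carries at most a $\frac{1}{1+\gamma\delta}$ fraction of $f(A_m)$ (the paper derives this via a telescoping sum of marginal gains plus monotonicity rather than your product bound — the two are equivalent), and fall back on $a^*$ to absorb the dropped element. The only substantive difference is how you charge the straddling element $a_j$: you use $\gamma \marge{a_j}{A_{j-1}} \le f(a_j) \le f(a^*)$, whereas the paper uses $f(a_{i'}) + f(A'') \ge \gamma f(A')$ after first lower-bounding $f(A')$; your version saves two factors of $\gamma$, and the constant you obtain, $\frac{\gamma^2\delta}{2(1+\gamma\delta)}$, dominates the stated $\frac{\gamma^4\delta}{2(\gamma^2\delta+1)}$ for all $\gamma \in (0,1]$, so your closing worry about losing slack in the bookkeeping is unfounded — your bound is, if anything, slightly stronger than the lemma requires.
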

\begin{proposition} \label{prop:aprime}
  Let $\uni' = \dot A + a^*$ have its value at termination of \qss.
  There exists $A' \subseteq \uni' = \dot A + a^*$, such that  $f(A') \ge \frac{\delta \gamma^4}{2(\delta \gamma^2 + 1)} f(\dot A )$
  and $c(A') \le \kappa$. 
\end{proposition}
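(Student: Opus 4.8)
The plan is to apply Lemma \ref{lemma:value-inside-A} directly to the final surviving set $\dot A$, by exhibiting the elements of $\dot A$ as a sequence that satisfies the hypotheses of that lemma. First I would enumerate the elements of $\dot A$ as $a_1, \ldots, a_m$ in the order in which they were added to $A$ during the single pass, and set $A_{i-1} = \{ a_1, \ldots, a_{i-1} \}$. The role of the distinguished singleton is already played by the element tracked on the $f(e) > f(a^*)$ line: since that update is only reached for elements surviving the \textbf{continue} guard, we have $c(a^*) \le \kappa$ and $f(a^*) \ge f(a_i)$ for every $i$, so the hypotheses on $a^*$ in Lemma \ref{lemma:value-inside-A} hold. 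Condition (1), $c(a_i) \le \kappa$, is likewise immediate from that guard.

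The crux is verifying condition (2), namely $\marge{a_i}{A_{i-1}} \ge \gamma \frac{\delta c(a_i) f(A_{i-1})}{\kappa}$, because the algorithm only enforces the weaker, $\gamma$-free inequality $\marge{a_i}{A} \ge \frac{\delta c(a_i) f(A)}{\kappa}$, and it enforces it against the current working set $A$ rather than against the prefix $A_{i-1}$. The key structural observation is that $A_{i-1} \subseteq A$ at the moment $a_i$ is processed: every $a_j$ with $j < i$ belongs to $\dot A$, hence survived to termination, hence was present in $A$ when $a_i$ was examined. With $A_{i-1} \subseteq A$ and $a_i \notin A$ in hand, $\gamma$-submodularity gives $\marge{a_i}{A_{i-1}} \ge \gamma \marge{a_i}{A}$, while monotonicity gives $f(A) \ge f(A_{i-1})$. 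Chaining these with the algorithm's addition test yields $\marge{a_i}{A_{i-1}} \ge \gamma \marge{a_i}{A} \ge \gamma \frac{\delta c(a_i) f(A)}{\kappa} \ge \gamma \frac{\delta c(a_i) f(A_{i-1})}{\kappa}$, which is exactly condition (2).

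Having checked all hypotheses, I would invoke Lemma \ref{lemma:value-inside-A} with $A_m = \dot A$ to obtain $A^* \subseteq \dot A + a^* = \uni'$ with $c(A^*) \le \kappa$ and $f(A^*) \ge \frac{\delta \gamma^4}{2(\delta \gamma^2 + 1)} f(\dot A)$; taking $A' = A^*$ proves the proposition. I expect the main obstacle to be the bookkeeping around deletions: one must argue carefully that the surviving elements of $\dot A$ really do form a consistent greedy sequence, i.e., that deletions only ever remove elements and so cannot retroactively invalidate the containment $A_{i-1} \subseteq A$ used above, so that the single factor of $\gamma$ separating the algorithm's test from the lemma's requirement is exactly absorbed by one application of the $\gamma$-submodularity inequality.
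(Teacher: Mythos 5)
Your proposal is correct and takes essentially the same route as the paper: the paper's proof also enumerates $\dot A$ in order of addition, uses the containment $\dot A_{i-1} \subseteq A_{j(i)}$ to chain $\gamma$-submodularity, the Line~\ref{line:add} test, and monotonicity into condition (2) of Lemma~\ref{lemma:value-inside-A}, and then invokes that lemma together with $a^*$. Your closing observation---that deletions are permanent, so elements surviving to $\dot A$ were present in $A$ continuously after their addition---is exactly the justification for that containment which the paper leaves implicit.
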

Next, we relate the value of $f(\uni')$ to
$\opt = f_\kappa ( \uni )$ in Proposition \ref{prop:opt-ahat}.
\begin{proposition} \label{prop:opt-ahat}
  Suppose Alg. \ref{alg:single} is run on instance $(\uni,c,f,\kappa )$ with
  parameters $\delta, \epsi > 0$. 
  Let $\hat A = \bigcup_{j \in [n]} A_j$ be all elements added to $A$.
  Then $f( \hat A ) \ge \opt / (1 + \gamma^{-1}\delta )$, where $\opt = \max_{S \subseteq \uni: c(S) \le \kappa } f(S).$
\end{proposition}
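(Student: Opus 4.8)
The plan is to compare $f(\hat A)$ directly against an optimal solution $O$ (so $c(O) \le \kappa$ and $f(O) = \opt$) through a greedy-style argument, exploiting the fact that every element of $O$ is actually \emph{processed} by the for loop. Since $c$ is modular and non-negative, $c(o) \le c(O) \le \kappa$ for each $o \in O$, so no element of $O$ is skipped by the check $c(e) > \kappa$. I would split $O$ into the elements that were added to $A$ at some point, which therefore lie in $\hat A$, and the set $D = O \setminus \hat A$ of those that were rejected. By monotonicity $f(O) \le f(O \cup \hat A)$, so it suffices to bound the increment $f(O \cup \hat A) - f(\hat A)$.

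Next I would order $D = \{o_1, \dots, o_m\}$ and telescope
\[
  f(O \cup \hat A) - f(\hat A) = \sum_{i=1}^m \marge{o_i}{\hat A \cup \{o_1, \dots, o_{i-1}\}}.
\]
For each rejected $o_i$, let $B_i$ denote the value of $A$ immediately before $o_i$ is processed; since the addition test on Line \ref{line:add} failed, $\marge{o_i}{B_i} < \delta c(o_i) f(B_i)/\kappa$. The crucial observation is that $B_i$ equals one of the sets $A_j$, so $B_i \subseteq \hat A \subseteq \hat A \cup \{o_1, \dots, o_{i-1}\}$. This lets me apply $\gamma$-submodularity a \emph{single} time, with $S = B_i$ and $T = \hat A \cup \{o_1, \dots, o_{i-1}\}$, giving $\gamma \marge{o_i}{T} \le \marge{o_i}{B_i}$, after which monotonicity supplies $f(B_i) \le f(\hat A)$. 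Combining these,
\[
  \marge{o_i}{\hat A \cup \{o_1,\dots,o_{i-1}\}} < \gamma^{-1} \frac{\delta c(o_i) f(\hat A)}{\kappa}.
\]
Summing over $i$ and using $\sum_i c(o_i) \le c(O) \le \kappa$ yields $f(O \cup \hat A) - f(\hat A) < \gamma^{-1}\delta f(\hat A)$, whence $\opt = f(O) \le f(O \cup \hat A) < (1 + \gamma^{-1}\delta) f(\hat A)$, which rearranges to the claim.

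The step I expect to require the most care is precisely this single application of $\gamma$-submodularity. The naive route would first bound $\marge{o_i}{\hat A}$ from the rejection condition (one factor of $\gamma^{-1}$) and then pay a \emph{second} $\gamma^{-1}$ when telescoping the marginals over the growing sets $\hat A \cup \{o_1,\dots,o_{i-1}\}$, producing the weaker constant $1 + \gamma^{-2}\delta$. Avoiding this double loss — by noting that the rejection set $B_i$ is already contained in the telescoped set $T$, so one inequality simultaneously absorbs the rejection bound and the diminishing-returns relaxation — is exactly what delivers the stated $1 + \gamma^{-1}\delta$, and hence the $\gamma^4$ (rather than $\gamma^3$) appearing in Theorem \ref{thm:single}. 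I would also record the harmless edge cases: if $f(\emptyset) = 0$ then the first singleton of positive value is added, so $\hat A$ is nonempty whenever $\opt > 0$, and the $D = \emptyset$ case makes the increment vanish and the bound trivial.
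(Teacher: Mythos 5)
Your proof is correct and follows essentially the same route as the paper's: monotonicity to pass to $f(O \cup \hat A)$, a telescoping sum over the rejected elements $O \setminus \hat A$, a single application of $\gamma$-submodularity relating each telescoped marginal to $\marge{o}{A_{j(o)}}$ (the state of $A$ when $o$ was rejected), then the failed condition on Line \ref{line:add}, monotonicity, and $c(O) \le \kappa$. Your added observations — that no element of $O$ is skipped by the $c(e) > \kappa$ check, and that the single $\gamma^{-1}$ loss (rather than $\gamma^{-2}$) is what the argument hinges on — are exactly the details implicit in the paper's chain of inequalities (a)–(e).
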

\begin{proof}[Proof of Theorem \ref{thm:single}]
  Finally, we are ready to put all the pieces together.
  Assume the hypotheses of the theorem statement. 
  The size bound on $\uni'$ follows by Proposition \ref{prop:uprime-size}.
  Let $A_j$ be the value of the set
  $A$ at the beginning of the $j$th iteration of \qss.
  Let $\dot A = A_{n+1}$ be the final value of $A$,
  $\hat A = \bigcup_{i \in [n + 1]} A_i$.
  Let $A'$ be as guaranteed by Proposition \ref{prop:aprime}.
  Then$ f(A') \overset{(a)}{\ge} \frac{\delta \gamma^4 }{2(\delta \gamma^2 + 1)} f( \dot A ) 
  \overset{(b)}{\ge} \frac{\delta \gamma^4(1 - \epsi \gamma^{-1})}{2(\delta \gamma^2 + 1)} f( \hat A )$,
  where (a) is by Prop. \ref{prop:aprime},
  (b) is by Prop. \ref{prop:ahat-adot},
  and the result follows from Prop. \ref{prop:opt-ahat}.
\end{proof}

\subsection{The Pruning Algorithm: \qs} \label{sec:prune-multi} 
\begin{algorithm}[t]
	\caption{\qs: The Pruning Algorithm.}
	\begin{algorithmic}[1] \label{alg:multi}
          \STATE \textbf{Input:} Instances $(f, \uni, c, [\kappa_{\min}, \kappa_{\max}])$,  parameters $\delta > 0$, $\epsi > 0$,
          $0 < \eta \le 1/2$.
		\STATE \textbf{Output:} Pruned ground set, $\uni'$
		\STATE $\mathcal B = \{ \tau_i = \kappa_{\max}(1 - \eta )^i : i \in \mathbb Z, (1 - \eta) \kappa_{\min} \le \tau_i \le \kappa_{\max}\}$\label{line:multi-budgets}
                \STATE Initialize a copy of $\mathcal QP_{\tau}$ of \qss for each $\tau \in \mathcal B$, with parameters $\epsi, \delta$.
                \FOR {$e \in \uni$}
                \STATE Pass $e$ to $\mathcal QP_{\tau}$ for all $\tau \in \mathcal B$.
                \ENDFOR
                \STATE \textbf{return} the union of all sets returned by all instances $\mathcal QP_{\tau}$
	\end{algorithmic}
\end{algorithm}
In this section, we describe the main pruning algorithm \qs (Alg. \ref{alg:multi})
and prove Theorem \ref{thm:multi}. 

The algorithm \qs works as follows. As input, it takes instances with a range of
budgets $[\kappa_{\min}, \kappa_{\max}]$, the parameters $\epsi, \delta$ for
\qss, and a parameter $\eta > 0,$ such that all of the instances satisfy
Assumption \nhi with $\eta$. The algorithm then runs $\log ( \kappa_{\max} / \kappa_{\min} )$
copies of \qss in parallel -- one each for budget $\tau_i = \kappa_{max}(1 - \eta )^i$ in the
range $[( 1 - \eta ) \kappa_{\min}, \kappa_{\max} ]$. It returns the union of all pruned sets
returned from each copy of \qss.

To establish Theorem \ref{thm:multi}, we first show the following proposition,
with implications to how Assumption \nhi can relate an optimal solution of a given
budget to one of the instances handled by one of the copies of \qss. 
\begin{proposition} \label{prop:part}
  Suppose instance $(f,c,\kappa)$ satisfies the \nhi assumption
  with $0 < \eta \le 1/2$. 
  Then, there exists an optimal solution $O$ to the instance, such
  that $O$ can be partitioned into at most three sets $\{ O_i : i \in [3] \}$, such
  that $c(O_i) \le \kappa (1 - \eta )$, for each $i \in [3]$. 
\end{proposition}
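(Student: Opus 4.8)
The plan is to take the optimal solution $O$ furnished by the \nhi assumption — the one for which every $o \in O$ satisfies $c(o) \le \kappa(1-\eta)$ — and to build the desired partition by a single greedy (next-fit) sweep that packs the elements of $O$ into bins of capacity $\kappa(1-\eta)$. The only properties of $O$ I will need are that it is feasible, so $c(O) \le \kappa$, and that each of its elements is small, so that opening a new bin is never forced by one oversized element.

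Concretely, I would order the elements of $O$ arbitrarily and fill a current bin $B_j$ until the next element $o$ fails to fit — \ie until adding $o$ would push the load above $\kappa(1-\eta)$ — at which point I close $B_j$ and open a new bin $B_{j+1}$ containing $o$. Since $c(o) \le \kappa(1-\eta)$ for every $o$, each element fits into a freshly opened (empty) bin, so the procedure is well defined and every bin has cost at most $\kappa(1-\eta)$, which is exactly the bound required of each $O_i$.

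The crucial quantitative step is the standard next-fit observation that any two consecutive bins have combined cost strictly exceeding $\kappa(1-\eta)$: when $B_{j+1}$ is opened with element $o$, this happens precisely because $c(B_j) + c(o) > \kappa(1-\eta)$, and as $o \in B_{j+1}$ we get $c(B_j) + c(B_{j+1}) > \kappa(1-\eta)$. I then argue by contradiction: if the sweep produced at least four bins, summing this bound over the disjoint pairs $(B_1,B_2)$ and $(B_3,B_4)$ gives $c(O) \ge c(B_1)+c(B_2)+c(B_3)+c(B_4) > 2\kappa(1-\eta)$. Invoking $\eta \le 1/2$ gives $2\kappa(1-\eta) \ge \kappa$, so $c(O) > \kappa$, contradicting feasibility of $O$. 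Hence at most three bins are produced, and after padding with empty sets if fewer than three are nonempty, they form the partition $\{O_i : i \in [3]\}$.

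The argument is elementary, so I expect no deep obstacle; the points needing care are (i) preserving strictness in the ``does not fit'' condition — an element still fits when it brings the load to exactly $\kappa(1-\eta)$ — since this strictness is what forces the contradiction already at four bins rather than only at five, and (ii) the correct use of $\eta \le 1/2$ to get $2\kappa(1-\eta) \ge \kappa$. I would verify the boundary case $\eta = 1/2$, where $2\kappa(1-\eta) = \kappa$ exactly and it is precisely the strict inequality that still excludes a fourth bin.
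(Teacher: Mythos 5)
Your proof is correct, and it follows a genuinely different route from the paper's. The paper gives a direct construction: it takes a \emph{maximal} subset $X \subseteq O$ with $c(X) \le \kappa(1-\eta)$ (nonempty by \nhi), picks any leftover $o \in O \setminus X$, and notes that maximality forces $c(X) + c(o) > \kappa(1-\eta)$, so the remainder $O \setminus (X \cup \{o\})$ has cost less than $\kappa - \kappa(1-\eta) = \eta\kappa \le (1-\eta)\kappa$; the three parts are then $X$, the singleton $\{o\}$, and the remainder. Your next-fit sweep replaces this bespoke construction with a standard bin-packing template: the same key inequality (a closed bin plus the element that failed to fit strictly overflows the capacity) is applied at every bin boundary rather than just once, and a counting argument over the disjoint pairs $(B_1,B_2)$, $(B_3,B_4)$ rules out a fourth bin by contradiction with $c(O) \le \kappa$. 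The paper's version buys brevity and slightly more structure (the second part is a single element and the third has cost strictly below $\eta\kappa$, smaller than required); your version buys generality, since the next-fit analysis needs only \nhi{} and feasibility and immediately extends to bound the number of parts for other capacities or other relations between $c(O)$ and $\kappa$, where the ad hoc three-part split would have to be redone. Your two flagged points of care are exactly the right ones, and the strictness issue you identify at $\eta = 1/2$ is also implicitly what saves the paper's chain $c(O \setminus (X \cup \{o\})) < \eta\kappa \le (1-\eta)\kappa$ in that boundary case.
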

\begin{proof}[Proof of Theorem \ref{thm:multi}]
  Assume the hypotheses of the theorem, and let
  $\kappa' \in [\kappa_{\min}, \kappa_{\max}]$. By the choice of
  the set $\mathcal B$ on Line \ref{line:multi-budgets}, there
  exists a $\tau \in \mathcal B$, such that $\kappa'(1 - \eta) \le \tau \le \kappa'$.
  Let $O \subseteq \uni$ be an optimal solution to $f_{\kappa'}( \uni )$
  satisfying Assumption \nhi. By Proposition \ref{prop:part}, $O$ can be partitioned
  into at most three sets $\{ O_i \}$, such that $c( O_i ) \le \kappa ( 1 - \eta ) \le \tau$.

  Moreover, by Theorem \ref{thm:single}, there exists $X \subseteq \uni'$, such that
  1) $c(X) \le \tau$, and 
  2) $f(X) \ge
  \alpha f( O_i )$,
  with $\alpha =  \frac{\delta \gamma^4(1 - \epsi \gamma^{-1})}{2(\delta \gamma^2 + 1)( 1 + \gamma^{-1} \delta ) }$.
  Therefore, $3f(X) \ge \alpha \sum_{i = 1}^3 f( O_i ) \ge \alpha \gamma f(O)$, where the last inequality follows from submodularity.
  Thus, $f(X) \ge \alpha \gamma f(O) / 3$.
  Finally, the size bound on $\uni'$
  follows from Prop. \ref{prop:uprime-size} and the size of $\mathcal B$. 
\end{proof}

\section{EMPIRICAL EVALUATION} \label{sec:exp}

\begin{table*}[t]
\centering
\caption{Comparison of pruning algorithms for size constraint experiments (best combined metric in bold); *Values as reported in  \cite{ireland2022lense} and ``--" denotes no reasonable result is achieved by the corresponding algorithm under the time constraint.}
\vspace{1em}
\label{tab:size_constraint}
\resizebox{\textwidth}{!}{%
\begin{tabular}{|ccccccccccccccccccc|}
\hline
\multicolumn{1}{|c|}{}         & \multicolumn{3}{c|}{\qs}                                               & \multicolumn{3}{c|}{SS}                                               & \multicolumn{3}{c|}{\gcomb*}                                         & \multicolumn{3}{c|}{\comb}                                            & \multicolumn{3}{c|}{\lense*}                                          & \multicolumn{3}{c|}{\gnn}                        \\ \hline
\multicolumn{1}{|c|}{Graph}    & $P_r \uparrow$  & $P_g$ $\uparrow$ & \multicolumn{1}{c|}{$C \uparrow$} & $P_r \uparrow$  & $P_g$ $\uparrow$ & \multicolumn{1}{c|}{$C\uparrow$} & $P_r \uparrow$ & $P_g$ $\uparrow$ & \multicolumn{1}{c|}{$C\uparrow$} & $P_r \uparrow$  & $P_g$ $\uparrow$ & \multicolumn{1}{c|}{$C\uparrow$} & $P_r \uparrow$ & $P_g$ $\uparrow$ & \multicolumn{1}{c|}{$C\uparrow$} & $P_r \uparrow$  & $P_g$ $\uparrow$ & $C\uparrow$ \\ \hline
& \multicolumn{18}{c|}{\textbf{Maximum Cover}}  \\ \hline
\multicolumn{1}{|c||}{Facebook}& 1.0000& 0.9953& \multicolumn{1}{c||}
{\textbf{0.9953}}& 0.9884& 0.9027& \multicolumn{1}{c||}
{0.8922}& 0.9270& 0.0700& \multicolumn{1}{c||}
{0.0649}& 1.0000& 0.3840& \multicolumn{1}{c||}
{0.3840}& 0.9660& 0.0700& \multicolumn{1}{c||}
{0.0676}& 1.0000& 0.7697& \multicolumn{1}{c|}
{0.7697}\\
\multicolumn{1}{|c||}{Wiki}& 0.9998& 0.9422& \multicolumn{1}{c||}
{\textbf{0.9420}}& 0.9559& 0.9346& \multicolumn{1}{c||}
{0.8934}& 0.9900& 0.0300& \multicolumn{1}{c||}
{0.0297}& 1.0000& 0.4528& \multicolumn{1}{c||}
{0.4528}& 1.0940& 0.3400& \multicolumn{1}{c||}
{0.3720}& 1.0000& 0.9199& \multicolumn{1}{c|}
{0.9199}\\
\multicolumn{1}{|c||}{Deezer}& 0.9606& 0.9797& \multicolumn{1}{c||}
{0.9411}& 0.9870& 0.9855& \multicolumn{1}{c||}
{\textbf{0.9727}}& 0.9940& 0.1300& \multicolumn{1}{c||}
{0.1292}& 1.0000& 0.7278& \multicolumn{1}{c||}
{0.7278}& 0.9790& 0.7500& \multicolumn{1}{c||}
{0.7343}& 1.0000& 0.9151& \multicolumn{1}{c|}
{0.9151}\\
\multicolumn{1}{|c||}{Slashdot}& 1.0000& 0.9925& \multicolumn{1}{c||}
{\textbf{0.9925}}& 0.9824& 0.9889& \multicolumn{1}{c||}
{0.9715}& 1.0000& 0.0200& \multicolumn{1}{c||}
{0.0200}& 1.0000& 0.9844& \multicolumn{1}{c||}
{0.9844}& 0.9790& 0.6900& \multicolumn{1}{c||}
{0.6755}& 1.0000& 0.9810& \multicolumn{1}{c|}
{0.9810}\\
\multicolumn{1}{|c||}{Twitter}& 0.9929& 0.9911& \multicolumn{1}{c||}
{\textbf{0.9841}}& 0.9306& 0.9893& \multicolumn{1}{c||}
{0.9206}& 0.9970& 0.1700& \multicolumn{1}{c||}
{0.1695}& 1.0000& 0.5654& \multicolumn{1}{c||}
{0.5654}& 0.9890& 0.3300& \multicolumn{1}{c||}
{0.3264}& 0.9987& 0.9793& \multicolumn{1}{c|}
{0.9780}\\
\multicolumn{1}{|c||}{DBLP}& 0.9951& 0.9957& \multicolumn{1}{c||}
{\textbf{0.9908}}& 0.9945& 0.9963& \multicolumn{1}{c||}
{\textbf{0.9908}}& 0.9990& 0.0300& \multicolumn{1}{c||}
{0.0300}& 1.0000& 0.1818& \multicolumn{1}{c||}
{0.1818}& 0.9900& 0.9000& \multicolumn{1}{c||}
{0.8910}& 1.0000& 0.8705& \multicolumn{1}{c|}
{0.8705}\\
\multicolumn{1}{|c||}{YouTube}& 1.0000& 0.9995& \multicolumn{1}{c||}
{\textbf{0.9995}}& 0.9999& 0.9987& \multicolumn{1}{c||}
{0.9986}& 0.9980& 0.0700& \multicolumn{1}{c||}
{0.0699}& 1.0000& 0.9572& \multicolumn{1}{c||}
{0.9572}& 0.9820& 0.7900& \multicolumn{1}{c||}
{0.7758}& 0.9947& 0.9967& \multicolumn{1}{c|}
{0.9914}\\
\multicolumn{1}{|c||}{Skitter}& 0.9985& 0.9997& \multicolumn{1}{c||}
{0.9982}& 0.9857& 0.9991& \multicolumn{1}{c||}
{0.9848}& 0.9990& 0.1000& \multicolumn{1}{c||}
{0.0999}& --& --& \multicolumn{1}{c||}
{--}& 0.9760& 0.7000& \multicolumn{1}{c||}
{0.6832}& 0.9993& 0.9891& \multicolumn{1}{c|}
{\textbf{0.9884}}\\
\hline
& \multicolumn{18}{c|}{\textbf{Maximum Cut}}  \\ \hline
\multicolumn{1}{|c||}{Facebook}& 0.9886& 0.7935& \multicolumn{1}{c||}
{0.7845}& 0.9928& 0.9027& \multicolumn{1}{c||}
{\textbf{0.8962}}& 0.8130& 0.9500& \multicolumn{1}{c||}
{0.7723}& 1.0000& 0.1538& \multicolumn{1}{c||}
{0.1538}& 1.0000& 0.0700& \multicolumn{1}{c||}
{0.0700}& 1.0000& 0.6774& \multicolumn{1}{c|}
{0.6774}\\
\multicolumn{1}{|c||}{Wiki}& 0.9985& 0.9037& \multicolumn{1}{c||}
{0.9023}& 0.9981& 0.9346& \multicolumn{1}{c||}
{\textbf{0.9328}}& 0.9200& 0.9600& \multicolumn{1}{c||}
{0.8832}& 1.0000& 0.7011& \multicolumn{1}{c||}
{0.7011}& 0.9810& 0.3900& \multicolumn{1}{c||}
{0.3826}& 1.0000& 0.9004& \multicolumn{1}{c|}
{0.9004}\\
\multicolumn{1}{|c||}{Deezer}& 0.9996& 0.9730& \multicolumn{1}{c||}
{0.9726}& 0.9999& 0.9855& \multicolumn{1}{c||}
{\textbf{0.9854}}& 0.8500& 0.9900& \multicolumn{1}{c||}
{0.8415}& 1.0000& 0.3754& \multicolumn{1}{c||}
{0.3754}& 0.9750& 0.7400& \multicolumn{1}{c||}
{0.7215}& 1.0000& 0.9188& \multicolumn{1}{c|}
{0.9188}\\
\multicolumn{1}{|c||}{Slashdot}& 1.0000& 0.9904& \multicolumn{1}{c||}
{\textbf{0.9904}}& 1.0000& 0.9889& \multicolumn{1}{c||}
{0.9889}& 0.6320& 0.9900& \multicolumn{1}{c||}
{0.6257}& 0.7935& 0.9929& \multicolumn{1}{c||}
{0.7879}& 0.9900& 0.6200& \multicolumn{1}{c||}
{0.6138}& 1.0000& 0.9797& \multicolumn{1}{c|}
{0.9797}\\
\multicolumn{1}{|c||}{Twitter}& 1.0000& 0.9900& \multicolumn{1}{c||}
{\textbf{0.9900}}& 1.0000& 0.9893& \multicolumn{1}{c||}
{0.9893}& 0.6280& 0.9900& \multicolumn{1}{c||}
{0.6217}& 1.0000& 0.5542& \multicolumn{1}{c||}
{0.5542}& 0.9870& 0.4800& \multicolumn{1}{c||}
{0.4738}& 1.0000& 0.9740& \multicolumn{1}{c|}
{0.9740}\\
\multicolumn{1}{|c||}{DBLP}& 1.0000& 0.9954& \multicolumn{1}{c||}
{0.9954}& 1.0000& 0.9963& \multicolumn{1}{c||}
{\textbf{0.9963}}& 0.6460& 0.9900& \multicolumn{1}{c||}
{0.6395}& 1.0000& 0.1825& \multicolumn{1}{c||}
{0.1825}& 0.9930& 0.9200& \multicolumn{1}{c||}
{0.9136}& 1.0000& 0.8623& \multicolumn{1}{c|}
{0.8623}\\
\multicolumn{1}{|c||}{YouTube}& 1.0000& 0.9994& \multicolumn{1}{c||}
{\textbf{0.9994}}& 1.0000& 0.9987& \multicolumn{1}{c||}
{0.9987}& 0.5360& 0.9900& \multicolumn{1}{c||}
{0.5306}& 0.9990& 0.9705& \multicolumn{1}{c||}
{0.9695}& 0.9870& 0.7900& \multicolumn{1}{c||}
{0.7797}& 0.9936& 0.9968& \multicolumn{1}{c|}
{0.9904}\\
\multicolumn{1}{|c||}{Skitter}& 1.0000& 0.9995& \multicolumn{1}{c||}
{\textbf{0.9995}}& 1.0000& 0.9991& \multicolumn{1}{c||}
{0.9991}& 0.4270& 0.9900& \multicolumn{1}{c||}
{0.4227}& --& --& \multicolumn{1}{c||}
{--}& 0.9740& 0.7100& \multicolumn{1}{c||}
{0.6915}& 1.0000& 0.9884& \multicolumn{1}{c|}
{0.9884}\\
\hline
& \multicolumn{18}{c|}{\textbf{Influence Maximization}}  \\ \hline
\multicolumn{1}{|c||}{Facebook}& 0.9919& 0.7450& \multicolumn{1}{c||}
{0.7390}& 0.9208& 0.9027& \multicolumn{1}{c||}
{\textbf{0.8312}}& 0.9510& 0.7300& \multicolumn{1}{c||}
{0.6942}& 1.0039& 0.3248& \multicolumn{1}{c||}
{0.3261}& 0.9790& 0.0900& \multicolumn{1}{c||}
{0.0881}& 0.9938& 0.4917& \multicolumn{1}{c|}
{0.4887}\\
\multicolumn{1}{|c||}{Wiki}& 1.0269& 0.8831& \multicolumn{1}{c||}
{0.9069}& 0.9863& 0.9346& \multicolumn{1}{c||}
{\textbf{0.9218}}& 0.9690& 0.9000& \multicolumn{1}{c||}
{0.8721}& 0.9969& 0.6066& \multicolumn{1}{c||}
{0.6047}& 0.9600& 0.5100& \multicolumn{1}{c||}
{0.4896}& 1.0011& 0.8964& \multicolumn{1}{c|}
{0.8974}\\
\multicolumn{1}{|c||}{Deezer}& 0.9384& 0.9698& \multicolumn{1}{c||}
{0.9101}& 1.0262& 0.9855& \multicolumn{1}{c||}
{\textbf{1.0113}}& 0.8050& 0.5500& \multicolumn{1}{c||}
{0.4428}& 0.9837& 0.1093& \multicolumn{1}{c||}
{0.1075}& 0.9720& 0.7600& \multicolumn{1}{c||}
{0.7387}& 0.9984& 0.8478& \multicolumn{1}{c|}
{0.8464}\\
\multicolumn{1}{|c||}{Slashdot}& 0.9984& 0.9881& \multicolumn{1}{c||}
{0.9865}& 0.9959& 0.9889& \multicolumn{1}{c||}
{0.9848}& 0.9660& 0.9800& \multicolumn{1}{c||}
{0.9467}& 1.0076& 0.9875& \multicolumn{1}{c||}
{\textbf{0.9950}}& 0.9660& 0.7700& \multicolumn{1}{c||}
{0.7438}& 0.9993& 0.9797& \multicolumn{1}{c|}
{0.9790}\\
\multicolumn{1}{|c||}{Twitter}& 1.0045& 0.9965& \multicolumn{1}{c||}
{\textbf{1.0010}}& 0.9575& 0.9893& \multicolumn{1}{c||}
{0.9473}& 0.9200& 0.9800& \multicolumn{1}{c||}
{0.9016}& 0.9972& 0.4947& \multicolumn{1}{c||}
{0.4933}& 0.9660& 0.4000& \multicolumn{1}{c||}
{0.3864}& 0.9842& 0.9758& \multicolumn{1}{c|}
{0.9604}\\
\multicolumn{1}{|c||}{DBLP}& 0.8603& 0.9946& \multicolumn{1}{c||}
{0.8557}& 1.0495& 0.9963& \multicolumn{1}{c||}
{\textbf{1.0456}}& 0.8630& 0.9900& \multicolumn{1}{c||}
{0.8544}& 0.9873& 0.0194& \multicolumn{1}{c||}
{0.0192}& 0.9690& 0.8900& \multicolumn{1}{c||}
{0.8624}& 0.7558& 0.3325& \multicolumn{1}{c|}
{0.2513}\\
\multicolumn{1}{|c||}{YouTube}& 0.9677& 0.9994& \multicolumn{1}{c||}
{0.9671}& 0.9846& 0.9987& \multicolumn{1}{c||}
{0.9833}& 0.9330& 0.9900& \multicolumn{1}{c||}
{0.9237}& 0.9992& 0.9705& \multicolumn{1}{c||}
{0.9697}& 0.9710& 0.7500& \multicolumn{1}{c||}
{0.7282}& 1.0018& 0.9966& \multicolumn{1}{c|}
{\textbf{0.9984}}\\
\multicolumn{1}{|c||}{Skitter}& 0.9813& 0.9999& \multicolumn{1}{c||}
{0.9812}& 1.0032& 0.9991& \multicolumn{1}{c||}
{\textbf{1.0023}}& 0.8830& 0.9900& \multicolumn{1}{c||}
{0.8742}& --& --& \multicolumn{1}{c||}
{--}& 0.9830& 0.7800& \multicolumn{1}{c||}
{0.7667}& 1.0023& 0.9889& \multicolumn{1}{c|}
{0.9912}\\
\hline
\end{tabular}%
}
\end{table*}

\begin{figure}[t] 
  \centering
  \includegraphics[width=0.5\textwidth]{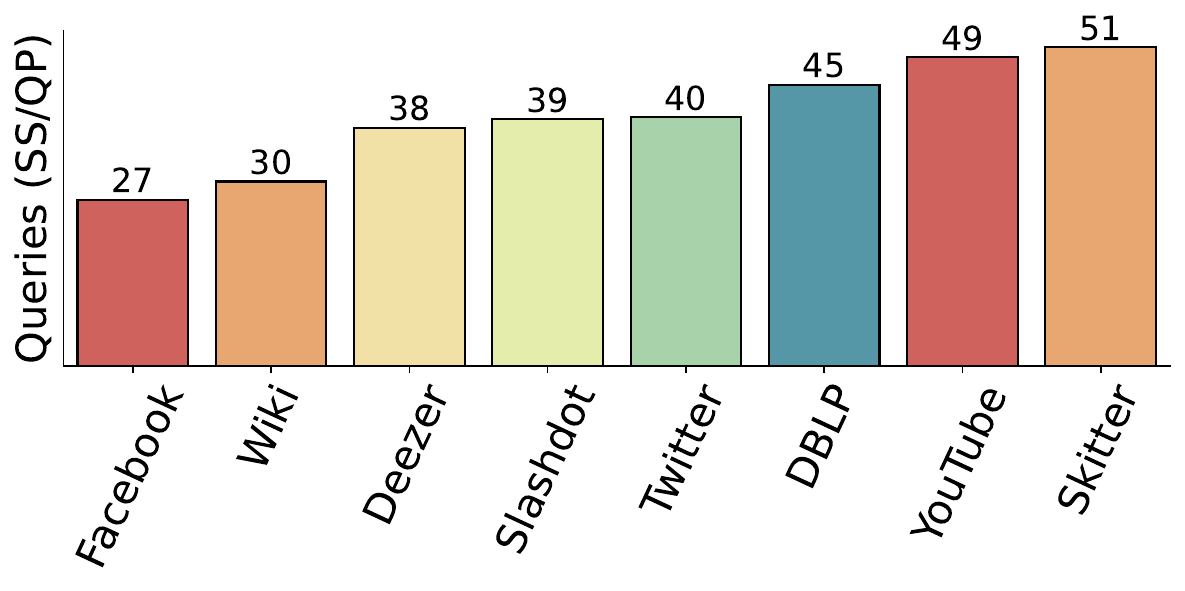}  
  \caption{ Comparison of the number of oracle calls between SS and \qs{}.}
  \label{fig:oracle_calls}

\end{figure}

In this section, we compare the empirical performance of \qs to
existing methods for the pruning problem across four different
objective functions, for both size and knapsack constraints.
The size constraint is the special case of the knapsack constraint
when $c(u) = 1$ for all $u \in \uni$ -- as the previous methods
for pruning are typically formulated for size constraint, this
allows for the fairest comparison with existing work. We provide the code and detailed instructions for reproducing the experiments in the supplementary material.

\textbf{Summary of results.} 
For both size and knapsack constraints,  we demonstrate
that \qs{} typically prunes over 90\% of the ground set
and sacrifices very little in terms of objective value,
achieving the highest combined metric (defined below) of
all pruning methods in
the majority of instances. 
The algorithm that is most competitive with \qs{} is
the Submodular Sparsification (SS) algorithm of  \citet{zhou2017scaling}.
However, SS requires $30$ times more oracle queries than \qs{}, as shown in Fig. \ref{fig:oracle_calls}.
For the ML-based pruning methods, a baseline \gnn that we introduce outperforms
the methods of
\gcomb{} \citep{manchanda2020gcomb}, \lense{} \citep{ireland2022lense}, and \comb{} \citep{tian2024combhelper}
on nearly every instance that we evaluated. 
Finally, in Section \ref{sec:multi_vs_single},
we show that  \qs{} improves over \qss{} by 5-10\% in objective value retained,
while increasing the size of the reduced ground set by less than 1\% percent. 

\textbf{Evaluation metrics.} We evaluate the algorithms using three performance metrics, where higher values indicate better performance in all cases. The first metric is the
\textit{pruning approximation ratio} $P_r$,
defined as the ratio of the objective value obtained from the pruned ground set $\uni'$ to the objective value from the original ground set $\uni$, both computed by a heuristic $\mathcal{H}$. Specifically, $P_r = f( \mathcal{H} (\uni')) / f( \mathcal{H} (\uni))$. The second metric is the
\textit{pruned fraction} $P_g$ of the original ground set that is pruned.
Finally, since an algorithm could trivially obtain the highest in one metric
by totally disregarding the other (pruning nothing or pruning everything),
we consider the \textit{combined metric}, $C = P_rP_g$.  


In Appendix  \ref{appendix:heuristics},  we summarize the heuristics used for each application, noting that our heuristic-agnostic algorithm is flexible for use with any heuristic. We use only a single attempt to prune the ground set for all algorithms due to the computational cost.

 \textbf{Applications.} 
We evaluate our algorithm on four applications: Maximum Cover (MaxCover), Maximum Cut (MaxCut), Influence Maximization (IM), and information retrieval. For detailed specifications of these applications, see Appendix \ref{appendix:problem_formulation}.

\textbf{Baselines and Prior Methods.} We compare our algorithm against algorithms that accelerate heuristics by pruning the ground set rather than directly optimizing the objective function. Our evaluation include the following classical and learning based methods: Submodular Sparsification (SS) \citep{zhou2017scaling}, \gcomb{} \citep{manchanda2020gcomb}, \lense{} \citep{ireland2022lense}, \comb{} \citep{tian2024combhelper}, and \gnn{}, a proposed baseline, which is a modified version of \comb{}. While the authors of \comb{} stated that they used GCN \citep{kipf2016semi} as their choice of GNN, we observed that the implementation of the algorithm actually employs SAGEConv \citep{hamilton2017inductive}. To ensure a fair comparison, we have included both GNN options in our evaluation. Empirically, we find that replacing SAGECONV  in \comb{} with GCN  significantly improves the performance of \comb{}, as GCN retains information from all neighbors, unlike SAGECONV. Hence, our proposed baseline, \gnn, a modified version of \comb{}, uses a GCN with fewer layers and  uses random numbers as node features, eliminating the need for domain knowledge and extensive feature engineering. Moreover,  \gnn also outperforms the other ML methods, \lense{} and \gcomb{}.
 
For the general knapsack constraint, the prior methods do not easily generalize. Since \gnn{} does easily generalize and
outperformed the other ML-based methods on size constraints, we only compare to \gnn for general knapsack constraints.
Instead of using random numbers as node features, we provide the degree, cost, and degree-to-cost ratio of each node as input features
for \gnn. Additionally, we compare to the \topk{} approach, which selects a set (where the size of the set is equal to the size of the pruned ground set by \qs{}) consisting of the top $k$ degree-to-cost ratios. Further descriptions of each algorithm can be found in Appendix \ref{appendix:baselines}.

\textbf{Datasets.} We evaluate our approach using real-world datasets from the Stanford Large Network Dataset Collection \citep{leskovec2016snap} for the traditional CO problems on graph, following the experimental setup of \cite{ireland2022lense}. For experiments related to the
retrieval system, we use the Beans \citep{beansdata}, CIFAR100 \citep{Krizhevsky09learningmultiple},  FOOD101 \citep{bossard14} and UCF101 dataset \citep{soomro2012ucf101dataset101human}. A summary of all datasets used can be found in Appendix \ref{appendix:dataset}.

\begin{figure*}[ht] 

  \subfigure{ 
    \includegraphics[width=0.23\textwidth]{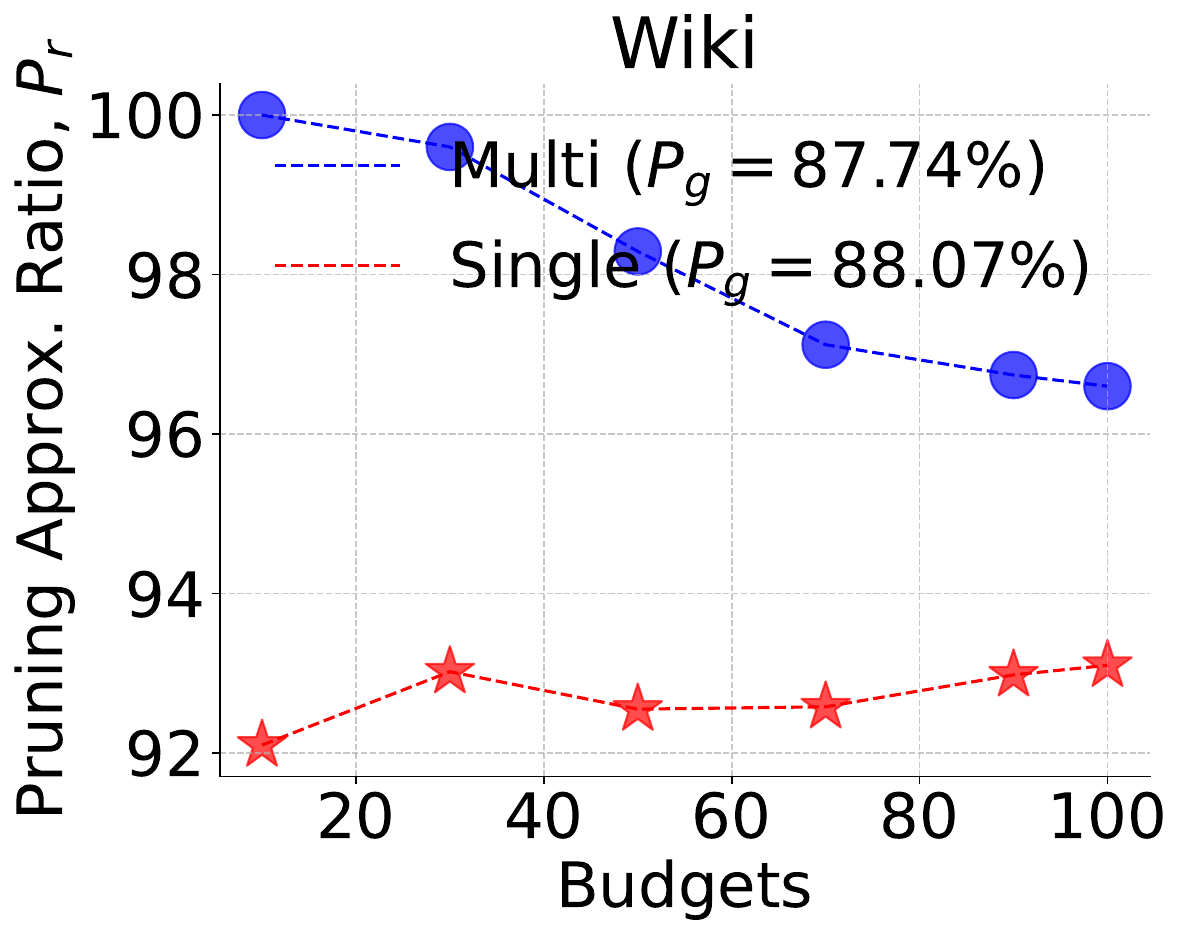}
  }
  \subfigure{ 
    \includegraphics[width=0.23\textwidth]{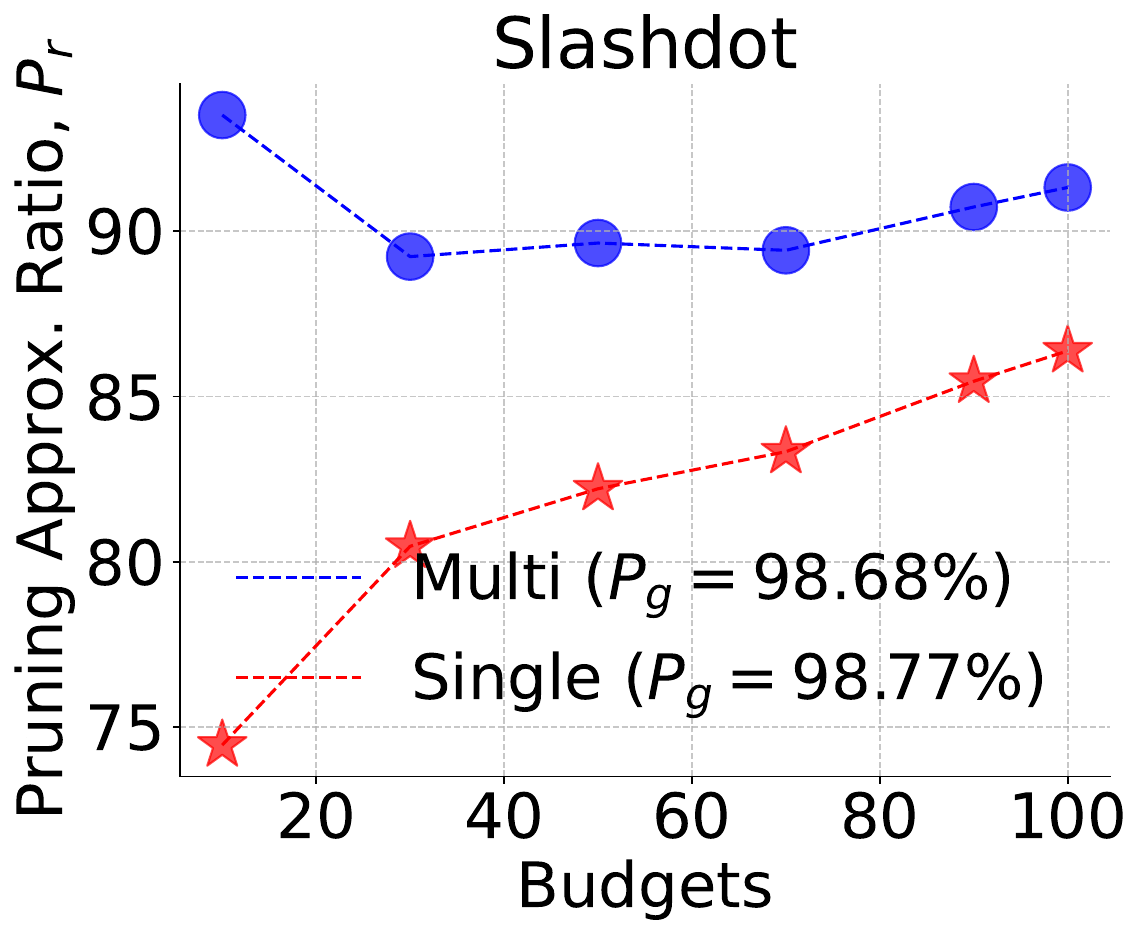}
  }
  \subfigure{ 
    \includegraphics[width=0.23\textwidth]{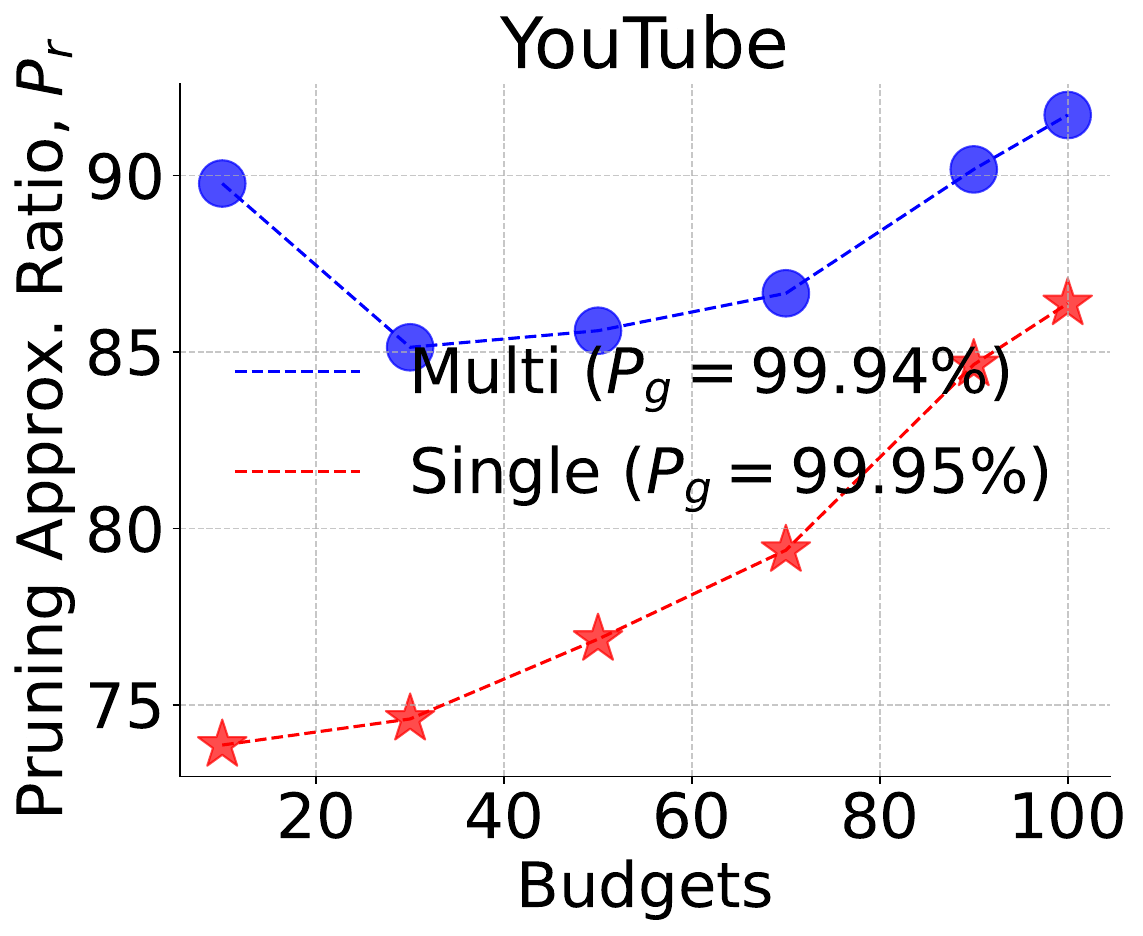}
  }
  \subfigure { 
    \includegraphics[width=0.23\textwidth]{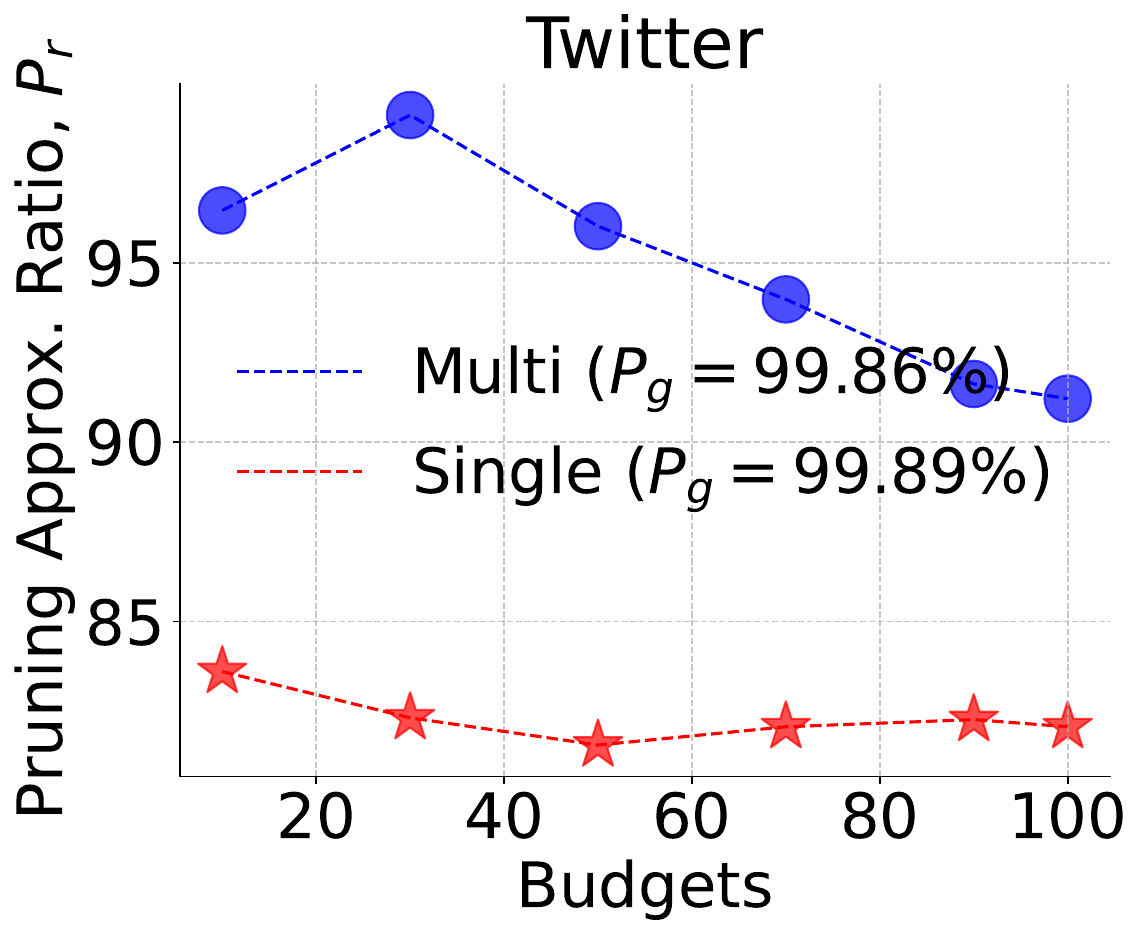}
  }
    
  \caption{Comparison between \qs and \qss  on a selection of data sets for IM.}
  \label{fig:main_paper_IM}
\end{figure*}
\subsection{Set 1: Evaluation on Size Constraints} \label{sec:size}
In this section, we evaluate the pruning methods for size constraints;
first for classical problems on graphs, and then for the image retrieval
application. 

\paragraph{Classical Problems on Graph.}
We consider traditional problems on graphs: Maximum Cover (MaxCover), Maximum Cut (MaxCut), and Influence Maximization (IM),
with a budget of $b=100$, following \cite{ireland2022lense}. From Table \ref{tab:size_constraint}, we observe that all algorithms achieve similar pruning approximation ratios. However, \qs{} is able to prune the original ground set by orders of magnitude more than most algorithms. Interestingly, and perhaps surprisingly, the IMM heuristic \citep{tang2015influence} occasionally achieves better solutions for IM
on the reduced ground set compared to the unpruned ground set (\eg Twitter, Wiki) -- which raises the intriguing
possibility of pruning to enhance the performance of a specific algorithm, which is out of the scope of this work. 

We observe that SS outperforms our method on some instances, particularly for IM.
However, as shown in Fig. \ref{fig:oracle_calls}, SS typically requires over $30$ times more oracle calls than \qs{}. Compared to other algorithms, \qs{} often surpasses these methods by orders of magnitude in pruning the ground set size while maintaining the same pruning approximation ratio. For example, \qs{} significantly outperforms \gcomb{} in pruning efficiency for MaxCover and MaxCut (we directly report the performance of \gcomb{} on these instances as reported in \cite{ireland2022lense}).

We remark that, in contrast to our method, the learning-based approaches frequently require domain knowledge and extensive feature engineering. For instance, \lense{} uses eigenvector centrality as a node feature, which is a costly calculation for large graphs and necessitates extensive hyperparameter tuning for each dataset and problem. Similarly, \comb{} employs problem-specific boosting, requiring domain knowledge.
Nevertheless, even the best-performing machine learning approaches are less optimal than our method and require significantly higher computational overhead. While LeNSE shows the lowest combined GPU and CPU memory usage among the ML approaches, \qs only utilizes CPU resources, which is less than the CPU usage of LeNSE alone. Please refer to Appendix \ref{appendix:scalability}.
\paragraph{Image Retrieval System Results.}

Next, we present our results for the image retrieval system under size constraint. We randomly sampled five images from the query dataset and averaged the performance over ten such queries. Fig. \ref{fig:retrival_system}, provides an overview of the pipeline and results for the retrieval system. We compare our algorithm with SS and \random{} (which selects a random set of elements from the candidate dataset, with a size equal to that of \qs{}). Notice that on the CIFAR100 dataset, SS fails to complete pruning within the three-hour cut-off time. We observe that \qs{} outperforms \random{} but fails to outperform SS. Although SS performs better than our approach on the FOOD101 dataset, it returns a ground set that is $14$ times larger than ours.
\subsection{Set 2: Performance on Knapsack Constraint}\label{sec:knapsack}
In this subsection, we compare our algorithm with the baselines for general knapsack constraints.

\textbf{Classical Problems on Graph.} We set the cost of each node following \cite{yaroslavtsev2020bring} (more details in Appendix \ref{appendix:problem_formulation}). In Table \ref{tab:knapsack_constraint}, we observe that \qs achieves the highest combined metric on 45\% of all instances tested. Specifically, for MaxCover and MaxCut, \qs often outperforms the simple baseline \topk{} by 20\% in retaining the objective value, except for Deezer-MaxCover and Skitter-MaxCut. The \topk{} approach does not account for the relationships between vertices, whereas \qs considers the problem more holistically, ensuring that the selected vertices collectively form a well-balanced subgraph. While \topk outperforms \qs for IM in most cases, its poor performance across MaxCut and MaxCover shows that \topk is not a robust approach across applications and datasets. On the other hand, \gnn performs quite well on some datasets but unexpectedly poorly on others.

\paragraph{Video Retrieval System Results.} We set the cost of each video proportional to its length.  For the video retrieval system, \gnn is not a suitable approach, as it cannot take a set of items as input. Therefore, we compare our approach with \random{} and \topk{} approach.
From Figure \ref{fig:video_summ}, we observe that \qs outperforms \random{} by 80\% while performing better than the \topk{} approach.

\subsection{Set 3: Multi-Budget and  Single-Budget Comparison} \label{sec:multi_vs_single}

Finally, we compare the performance of \qs and \qss (which prunes for the maximum budget) for a range of budget.
Following the multi-budget analysis in \cite{ireland2022lense}, we vary the budget $b$ from $10$ to $100$.
We remark that for size constraint, no significant improvement is observed for  \qs{} over \qss{} in the size-constrained experiments
-- we speculate that this may be at least partially explained by the efficacy of the standard greedy
algorithm for the size constraint, and a greedy solution of smaller size is automatically
contained in greedy solutions of larger sizes. 
However, for general knapsack constraint,
we observe that \qs  typically improves by 3\%-10\% over \qss{} for IM (Figure \ref{fig:main_paper_IM}), 
while only increasing the size of the ground set less than 1\% percent. \textcolor{black}{For MaxCover, we observe qualitatively similar
  improvements on some datasets; however, for MaxCut, \qs and \qss perform similarly.}
We provide results for each application and dataset in Appendix \ref{appendix:multi_vs_single}.


\section{CONCLUDING REMARKS AND FUTURE DIRECTIONS}  \label{sec:con}
In this work, we introduce \qs, the first theoretically principled
approach to pruning large ground sets for constrained discrete maximization
problems. Moreover, the theoretical guarantees extend to a range of budgets so
that the pruned ground set is applicable to more than a single optimization problem.
In addition to the theoretical properties, our algorithm exhibits excellent empirical
performance, and outperforms pre-existing heuristics for the pruning problem.
Future directions include improving the theoretical properties, as our ratio $\alpha$,
while constant, is rather small. Also, upper bounds on the pruning problem would shed
light on the complexity of the problem. Currently, we are unaware of existing hardness
results that would imply that even a ratio of $\alpha = 1$ is impossible.
Finally, extending the framework to handle more than one objective function, as well
as deletions to the ground set, would be an interesting avenue for future work. 
\clearpage


\bibliography{reference,pruning}

\begin{thebibliography}{66}
\providecommand{\natexlab}[1]{#1}
\providecommand{\url}[1]{\texttt{#1}}
\expandafter\ifx\csname urlstyle\endcsname\relax
  \providecommand{\doi}[1]{doi: #1}\else
  \providecommand{\doi}{doi: \begingroup \urlstyle{rm}\Url}\fi

\bibitem[Wei et~al.(2015)Wei, Iyer, and Bilmes]{Wei2015}
Kai Wei, Rishabh Iyer, and Jeff Bilmes.
\newblock Submodularity in {{Data Subset Selection}} and {{Active Learning}}.
\newblock In \emph{Proceedings of the 32nd {{International Conference}} on {{Machine Learning}}}, pages 1954--1963. PMLR, June 2015.

\bibitem[Kim and Boukouvala(2020)]{Kim2020}
Sun~Hye Kim and Fani Boukouvala.
\newblock Machine learning-based surrogate modeling for data-driven optimization: A comparison of subset selection for regression techniques.
\newblock \emph{Optimization Letters}, 14\penalty0 (4):\penalty0 989--1010, June 2020.
\newblock ISSN 1862-4480.
\newblock \doi{10.1007/s11590-019-01428-7}.

\bibitem[Mehrotra and Vishnoi(2023)]{Mehrotra2023}
Anay Mehrotra and Nisheeth~K. Vishnoi.
\newblock Maximizing {{Submodular Functions}} for {{Recommendation}}~in~the~{{Presence}} of {{Biases}}.
\newblock In \emph{Proceedings of the {{ACM Web Conference}} 2023}, {{WWW}} '23, pages 3625--3636, New York, NY, USA, April 2023. Association for Computing Machinery.
\newblock ISBN 978-1-4503-9416-1.
\newblock \doi{10.1145/3543507.3583195}.

\bibitem[Ko et~al.(2022)Ko, Lee, Park, and Choi]{Ko2022}
Hyeyoung Ko, Suyeon Lee, Yoonseo Park, and Anna Choi.
\newblock A {{Survey}} of {{Recommendation Systems}}: {{Recommendation Models}}, {{Techniques}}, and {{Application Fields}}.
\newblock \emph{Electronics}, 11\penalty0 (1):\penalty0 141, January 2022.
\newblock ISSN 2079-9292.
\newblock \doi{10.3390/electronics11010141}.

\bibitem[Kempe et~al.(2003{\natexlab{a}})Kempe, Kleinberg, and Tardos]{Kempe2003a}
David Kempe, Jon Kleinberg, and {\'E}va Tardos.
\newblock Maximizing the spread of influence through a social network.
\newblock In \emph{Proceedings of the Ninth {{ACM SIGKDD}} International Conference on {{Knowledge}} Discovery and Data Mining}, {{KDD}} '03, pages 137--146, New York, NY, USA, August 2003{\natexlab{a}}. Association for Computing Machinery.
\newblock ISBN 978-1-58113-737-8.
\newblock \doi{10.1145/956750.956769}.

\bibitem[Zhou et~al.(2017{\natexlab{a}})Zhou, Ouyang, Bilmes, Chang, and Guestrin]{Zhou2017}
Tianyi Zhou, Hua Ouyang, Jeff Bilmes, Yi~Chang, and Carlos Guestrin.
\newblock Scaling {{Submodular Maximization}} via {{Pruned Submodularity Graphs}}.
\newblock In \emph{Proceedings of the 20th {{International Conference}} on {{Artificial Intelligence}} and {{Statistics}}}, pages 316--324. PMLR, April 2017{\natexlab{a}}.

\bibitem[Manchanda et~al.(2020{\natexlab{a}})Manchanda, MITTAL, Dhawan, Medya, Ranu, and Singh]{Manchanda2020a}
Sahil Manchanda, {\relax AKASH}~MITTAL, Anuj Dhawan, Sourav Medya, Sayan Ranu, and Ambuj Singh.
\newblock {{GCOMB}}: {{Learning Budget-constrained Combinatorial Algorithms}} over {{Billion-sized Graphs}}.
\newblock In \emph{Advances in {{Neural Information Processing Systems}}}, volume~33, pages 20000--20011. Curran Associates, Inc., 2020{\natexlab{a}}.

\bibitem[Ireland and Montana(2022{\natexlab{a}})]{Ireland2022}
David Ireland and Giovanni Montana.
\newblock {{LeNSE}}: {{Learning To Navigate Subgraph Embeddings}} for {{Large-Scale Combinatorial Optimisation}}, May 2022{\natexlab{a}}.

\bibitem[Tian et~al.(2024{\natexlab{a}})Tian, Medya, and Ye]{Tian2024}
Hao Tian, Sourav Medya, and Wei Ye.
\newblock {{COMBHelper}}: {{A Neural Approach}} to {{Reduce Search Space}} for {{Graph Combinatorial Problems}}.
\newblock \emph{Proceedings of the AAAI Conference on Artificial Intelligence}, 38\penalty0 (18):\penalty0 20812--20820, March 2024{\natexlab{a}}.
\newblock ISSN 2374-3468.
\newblock \doi{10.1609/aaai.v38i18.30070}.

\bibitem[Dem(2021)]{Demeniconi2021}
A {{Practical Online Framework}} for {{Extracting Running Video Summaries}} under a {{Fixed Memory Budget}}.
\newblock In Carlotta Demeniconi and Ian Davidson, editors, \emph{Proceedings of the 2021 {{SIAM International Conference}} on {{Data Mining}} ({{SDM}})}, Philadelphia, PA, January 2021. {Society for Industrial and Applied Mathematics}.
\newblock ISBN 978-1-61197-670-0.
\newblock \doi{10.1137/1.9781611976700}.

\bibitem[Feige and Vondrak()]{Feige}
Uriel Feige and Jan Vondrak.
\newblock The {{Submodular Welfare Problem}} with {{Demand Queries}}.
\newblock \emph{THEORY OF COMPUTING}.

\bibitem[Feige and Kilian(1996)]{Feige1996}
Uriel Feige and Joe Kilian.
\newblock Zero {{Knowledge}} and the {{Chromatic Number}}.
\newblock In \emph{Conference on {{Computational Complexity}}}, pages 278--287, 1996.

\bibitem[Feige et~al.(2011{\natexlab{a}})Feige, Mirrokni, and Vondr{\'a}k]{Feige2011}
Uriel Feige, Vahab~S. Mirrokni, and Jan Vondr{\'a}k.
\newblock Maximizing {{Non-Monotone Submodular Functions}}.
\newblock \emph{SIAM Journal on Computing}, 40\penalty0 (4):\penalty0 1133--1153, 2011{\natexlab{a}}.
\newblock ISSN 01386557.
\newblock \doi{10.1137/090750688}.

\bibitem[Feige et~al.(2011{\natexlab{b}})Feige, Mirrokni, and Vondr{\'a}k]{Feige2011a}
Uriel Feige, Vahab~S. Mirrokni, and Jan Vondr{\'a}k.
\newblock Maximizing {{Non-monotone Submodular Functions}}.
\newblock \emph{SIAM Journal on Computing}, 40\penalty0 (4):\penalty0 1133--1153, January 2011{\natexlab{b}}.
\newblock ISSN 0097-5397, 1095-7111.
\newblock \doi{10.1137/090779346}.

\bibitem[Feige and Izsak(2013)]{Feige2013}
Uriel Feige and Rani Izsak.
\newblock Welfare {{Maximization}} and the {{Supermodular Degree}}.
\newblock In \emph{Proceedings of the 4th Conference on {{Innovations}} in {{Theoretical Computer Science}} ({{ITCS}})}, pages 247--256, 2013.
\newblock ISBN 978-1-4503-1859-4.
\newblock \doi{10.1145/2422436.2422466}.

\bibitem[Gharan and Vondr{\'a}k(2010)]{Gharan2010}
Shayan~Oveis Gharan and Jan Vondr{\'a}k.
\newblock Submodular {{Maximization}} by {{Simulated Annealing}}.
\newblock pages 1098--1116, 2010.
\newblock \doi{10.1137/1.9781611973082.83}.

\bibitem[Gharan and Vondr{\'a}k(2011{\natexlab{a}})]{Gharan2011}
Shayan~Oveis Gharan and Jan Vondr{\'a}k.
\newblock Submodular maximization by simulated annealing.
\newblock \emph{ACM-SIAM Symposium on Discrete Algorithms (SODA)}, 2011{\natexlab{a}}.
\newblock \doi{10.1137/1.9781611973082.83}.

\bibitem[Gharan and Vondr{\'a}k(2011{\natexlab{b}})]{Gharan2011a}
Shayan~Oveis Gharan and Jan Vondr{\'a}k.
\newblock Submodular {{Maximization}} by {{Simulated Annealing}}.
\newblock In \emph{Symposium on {{Discrete Algorithms}} ({{SODA}})}, pages 1098--1116, 2011{\natexlab{b}}.
\newblock ISBN 978-0-89871-993-2.
\newblock \doi{10.1137/1.9781611973082.83}.

\bibitem[Gharan and Vondr{\'a}k(2011{\natexlab{c}})]{Gharan2011b}
Shayan~Oveis Gharan and Jan Vondr{\'a}k.
\newblock Submodular {{Maximization}} by {{Simulated Annealing}}.
\newblock In \emph{Proceedings of the 2011 {{Annual ACM-SIAM Symposium}} on {{Discrete Algorithms}} ({{SODA}})}, Proceedings, pages 1098--1116. {Society for Industrial and Applied Mathematics}, January 2011{\natexlab{c}}.
\newblock ISBN 978-0-89871-993-2.
\newblock \doi{10.1137/1.9781611973082.83}.

\bibitem[Gupta et~al.(2010)Gupta, Roth, Schoenebeck, and Talwar]{Gupta2010a}
Anupam Gupta, Aaron Roth, Grant Schoenebeck, and Kunal Talwar.
\newblock Constrained non-monotone submodular maximization: {{Offline}} and secretary algorithms.
\newblock In \emph{International {{Workshop}} on {{Internet}} and {{Network Economics}} ({{WINE}})}, 2010.

\bibitem[Khot(2001)]{Khot2001}
S~Khot.
\newblock Improved {{Inapproximability Results}} for {{MaxClique}}, {{Chromatic Number}} and {{Approximate Graph Coloring}}.
\newblock \emph{42Nd Annual Symposium on Foundations of Computer Science, Proceedings}, pages 600--609, 2001.
\newblock ISSN 0272-5428.
\newblock \doi{10.1109/SFCS.2001.959936}.

\bibitem[Kothawade et~al.(2022)Kothawade, Kaushal, Ramakrishnan, Bilmes, and Iyer]{Kothawade2022}
Suraj Kothawade, Vishal Kaushal, Ganesh Ramakrishnan, Jeff Bilmes, and Rishabh Iyer.
\newblock {{PRISM}}: {{A Rich Class}} of {{Parameterized Submodular Information Measures}} for {{Guided Subset Selection}}, March 2022.

\bibitem[Lee et~al.(2009)Lee, Mirrokni, Nagarjan, and Sviridenko]{Lee2009}
Jon Lee, Vahab Mirrokni, Viswanath Nagarjan, and Maxim Sviridenko.
\newblock Non-monotone submodular maximization under matroid and knapsack constraints, February 2009.

\bibitem[Nemhauser et~al.(1978{\natexlab{a}})Nemhauser, Wolsey, and Fisher]{Nemhauser1978b}
G.~L. Nemhauser, L.~A. Wolsey, and M.~L. Fisher.
\newblock An analysis of approximations for maximizing submodular set functions---{{I}}.
\newblock \emph{Mathematical Programming}, 14\penalty0 (1):\penalty0 265--294, December 1978{\natexlab{a}}.
\newblock ISSN 1436-4646.
\newblock \doi{10.1007/BF01588971}.

\bibitem[Uziahu and Feige(2023)]{Uziahu2023}
Gilad~Ben Uziahu and Uriel Feige.
\newblock On {{Fair Allocation}} of {{Indivisible Goods}} to {{Submodular Agents}}, March 2023.

\bibitem[Vondr{\'a}k(2013)]{Vondrak2013}
Jan Vondr{\'a}k.
\newblock Symmetry and {{Approximability}} of {{Submodular Maximization Problems}}.
\newblock \emph{SIAM Journal on Computing}, 42\penalty0 (1):\penalty0 265--304, January 2013.
\newblock ISSN 0097-5397.
\newblock \doi{10.1137/110832318}.

\bibitem[Elenberg et~al.(2018)Elenberg, Khanna, Dimakis, and Negahban]{Elenberg2018}
Ethan~R. Elenberg, Rajiv Khanna, Alexandros~G. Dimakis, and Sahand Negahban.
\newblock Restricted strong convexity implies weak submodularity.
\newblock \emph{Annals of Statistics}, 46\penalty0 (6B):\penalty0 3539--3568, 2018.
\newblock ISSN 00905364.
\newblock \doi{10.1214/17-AOS1679}.

\bibitem[Killamsetty et~al.(2023)Killamsetty, Evfimievski, Pedapati, Kate, Popa, and Iyer]{Killamsetty2023}
Krishnateja Killamsetty, Alexandre~V. Evfimievski, Tejaswini Pedapati, Kiran Kate, Lucian Popa, and Rishabh Iyer.
\newblock {{MILO}}: {{Model-Agnostic Subset Selection Framework}} for {{Efficient Model Training}} and {{Tuning}}, June 2023.

\bibitem[Braverman et~al.(2022)Braverman, {Cohen-Addad}, Jiang, Krauthgamer, Schwiegelshohn, Toftrup, and Wu]{Braverman2022}
Vladimir Braverman, Vincent {Cohen-Addad}, H.-C.~Shaofeng Jiang, Robert Krauthgamer, Chris Schwiegelshohn, Mads~Bech Toftrup, and Xuan Wu.
\newblock The {{Power}} of {{Uniform Sampling}} for {{Coresets}}.
\newblock In \emph{2022 {{IEEE}} 63rd {{Annual Symposium}} on {{Foundations}} of {{Computer Science}} ({{FOCS}})}, pages 462--473, October 2022.
\newblock \doi{10.1109/FOCS54457.2022.00051}.

\bibitem[Feldman(2020)]{Feldman2020b}
Dan Feldman.
\newblock Introduction to {{Core-sets}}: An {{Updated Survey}}, November 2020.

\bibitem[Indyk et~al.(2014)Indyk, Mahabadi, Mahdian, and Mirrokni]{Indyk2014}
Piotr Indyk, Sepideh Mahabadi, Mohammad Mahdian, and Vahab~S. Mirrokni.
\newblock Composable core-sets for diversity and coverage maximization.
\newblock In \emph{Proceedings of the 33rd {{ACM SIGMOD-SIGACT-SIGART}} Symposium on {{Principles}} of Database Systems}, {{PODS}} '14, pages 100--108, New York, NY, USA, June 2014. Association for Computing Machinery.
\newblock ISBN 978-1-4503-2375-8.
\newblock \doi{10.1145/2594538.2594560}.

\bibitem[Kogan and Krauthgamer(2015)]{Kogan2015}
Dmitry Kogan and Robert Krauthgamer.
\newblock Sketching {{Cuts}} in {{Graphs}} and {{Hypergraphs}}.
\newblock In \emph{Proceedings of the 2015 {{Conference}} on {{Innovations}} in {{Theoretical Computer Science}}}, {{ITCS}} '15, pages 367--376, New York, NY, USA, January 2015. Association for Computing Machinery.
\newblock ISBN 978-1-4503-3333-7.
\newblock \doi{10.1145/2688073.2688093}.

\bibitem[Liu et~al.(2019)Liu, Safavi, Dighe, and Koutra]{Liu2019}
Yike Liu, Tara Safavi, Abhilash Dighe, and Danai Koutra.
\newblock Graph {{Summarization Methods}} and {{Applications}}: {{A Survey}}.
\newblock \emph{ACM Computing Surveys}, 51\penalty0 (3):\penalty0 1--34, May 2019.
\newblock ISSN 0360-0300, 1557-7341.
\newblock \doi{10.1145/3186727}.

\bibitem[Mirrokni and Zadimoghaddam(2015)]{Mirrokni2015}
Vahab Mirrokni and Morteza Zadimoghaddam.
\newblock Randomized {{Composable Core-sets}} for {{Distributed Submodular Maximization}}.
\newblock In \emph{Proceedings of the Forty-Seventh Annual {{ACM}} Symposium on {{Theory}} of {{Computing}}}, {{STOC}} '15, pages 153--162, New York, NY, USA, June 2015. Association for Computing Machinery.
\newblock ISBN 978-1-4503-3536-2.
\newblock \doi{10.1145/2746539.2746624}.

\bibitem[Mirzasoleiman et~al.(2020)Mirzasoleiman, Bilmes, and Leskovec]{Mirzasoleiman2020}
Baharan Mirzasoleiman, Jeff Bilmes, and Jure Leskovec.
\newblock Coresets for {{Data-efficient Training}} of {{Machine Learning Models}}.
\newblock In \emph{Proceedings of the 37th {{International Conference}} on {{Machine Learning}}}, pages 6950--6960. PMLR, November 2020.

\bibitem[Tukan et~al.(2020)Tukan, Maalouf, and Feldman]{Tukan2020}
Murad Tukan, Alaa Maalouf, and Dan Feldman.
\newblock Coresets for {{Near-Convex Functions}}.
\newblock In \emph{Advances in {{Neural Information Processing Systems}}}, volume~33, pages 997--1009. Curran Associates, Inc., 2020.

\bibitem[Tukan et~al.(2022)Tukan, Mualem, and Maalouf]{Tukan2022}
Murad Tukan, Loay Mualem, and Alaa Maalouf.
\newblock Pruning {{Neural Networks}} via {{Coresets}} and {{Convex Geometry}}: {{Towards No Assumptions}}.
\newblock \emph{Advances in Neural Information Processing Systems}, 35:\penalty0 38003--38019, December 2022.

\bibitem[Yang et~al.(2023)Yang, Kang, and Mirzasoleiman]{Yang2023a}
Yu~Yang, Hao Kang, and Baharan Mirzasoleiman.
\newblock Towards {{Sustainable Learning}}: {{Coresets}} for {{Data-efficient Deep Learning}}.
\newblock In \emph{Proceedings of the 40th {{International Conference}} on {{Machine Learning}}}, 2023.

\bibitem[Zhang et~al.(2022)Zhang, Tatti, and Gionis]{Zhang2022e}
Guangyi Zhang, Nikolaj Tatti, and Aristides Gionis.
\newblock Coresets remembered and items forgotten: Submodular maximization with deletions.
\newblock In \emph{2022 {{IEEE International Conference}} on {{Data Mining}} ({{ICDM}})}, pages 676--685, November 2022.
\newblock \doi{10.1109/ICDM54844.2022.00078}.

\bibitem[Lauri and Dutta(2019)]{Lauri2019}
Juho Lauri and Sourav Dutta.
\newblock Fine-{{Grained Search Space Classification}} for {{Hard Enumeration Variants}} of {{Subset Problems}}.
\newblock \emph{Proceedings of the AAAI Conference on Artificial Intelligence}, 33\penalty0 (01):\penalty0 2314--2321, July 2019.
\newblock ISSN 2374-3468.
\newblock \doi{10.1609/aaai.v33i01.33012314}.

\bibitem[Lauri et~al.(2023)Lauri, Dutta, Grassia, and Ajwani]{Lauri2023}
Juho Lauri, Sourav Dutta, Marco Grassia, and Deepak Ajwani.
\newblock Learning fine-grained search space pruning and heuristics for combinatorial optimization.
\newblock \emph{Journal of Heuristics}, 29\penalty0 (2):\penalty0 313--347, June 2023.
\newblock ISSN 1572-9397.
\newblock \doi{10.1007/s10732-023-09512-z}.

\bibitem[Sun et~al.(2021{\natexlab{a}})Sun, Ernst, Li, and Weiner]{Sun2021b}
Yuan Sun, Andreas Ernst, Xiaodong Li, and Jake Weiner.
\newblock Generalization of machine learning for problem reduction: A case study on travelling salesman problems.
\newblock \emph{OR Spectrum}, 43\penalty0 (3):\penalty0 607--633, September 2021{\natexlab{a}}.
\newblock ISSN 1436-6304.
\newblock \doi{10.1007/s00291-020-00604-x}.

\bibitem[Sun et~al.(2021{\natexlab{b}})Sun, Li, and Ernst]{Sun2021c}
Yuan Sun, Xiaodong Li, and Andreas Ernst.
\newblock Using {{Statistical Measures}} and {{Machine Learning}} for {{Graph Reduction}} to {{Solve Maximum Weight Clique Problems}}.
\newblock \emph{IEEE Transactions on Pattern Analysis and Machine Intelligence}, 43\penalty0 (5):\penalty0 1746--1760, May 2021{\natexlab{b}}.
\newblock ISSN 1939-3539.
\newblock \doi{10.1109/TPAMI.2019.2954827}.

\bibitem[Zhang and Ajwani(2022)]{Zhang2022b}
Jiwei Zhang and Deepak Ajwani.
\newblock Learning to {{Prune Instances}} of {{Steiner Tree Problem}} in {{Graphs}}, October 2022.

\bibitem[Bian et~al.(2017)Bian, Buhmann, Krause, and Tschiatschek]{Bian2017b}
Andrew~An Bian, Joachim~M. Buhmann, Andreas Krause, and Sebastian Tschiatschek.
\newblock Guarantees for {{Greedy Maximization}} of {{Non-submodular Functions}} with {{Applications}}.
\newblock In \emph{Proceedings of the 34th {{International Conference}} on {{Machine Learning}} ({{ICML}})}, 2017.

\bibitem[Kuhnle et~al.(2018)Kuhnle, Smith, Crawford, and Thai]{Kuhnle2018}
Alan Kuhnle, J.~David Smith, Victoria~G. Crawford, and My~T. Thai.
\newblock Fast {{Maximization}} of {{Non-Submodular}}, {{Monotonic Functions}} on the {{Integer Lattice}}.
\newblock In \emph{International {{Conference}} on {{Machine Learning}} ({{ICML}})}, 2018.

\bibitem[Ireland and Montana(2022{\natexlab{b}})]{ireland2022lense}
David Ireland and Giovanni Montana.
\newblock Lense: Learning to navigate subgraph embeddings for large-scale combinatorial optimisation.
\newblock In \emph{International conference on machine learning}, pages 9622--9638. PMLR, 2022{\natexlab{b}}.

\bibitem[Zhou et~al.(2017{\natexlab{b}})Zhou, Ouyang, Bilmes, Chang, and Guestrin]{zhou2017scaling}
Tianyi Zhou, Hua Ouyang, Jeff Bilmes, Yi~Chang, and Carlos Guestrin.
\newblock Scaling submodular maximization via pruned submodularity graphs.
\newblock In \emph{Artificial Intelligence and Statistics}, pages 316--324. PMLR, 2017{\natexlab{b}}.

\bibitem[Manchanda et~al.(2020{\natexlab{b}})Manchanda, Mittal, Dhawan, Medya, Ranu, and Singh]{manchanda2020gcomb}
Sahil Manchanda, Akash Mittal, Anuj Dhawan, Sourav Medya, Sayan Ranu, and Ambuj Singh.
\newblock Gcomb: Learning budget-constrained combinatorial algorithms over billion-sized graphs.
\newblock \emph{Advances in Neural Information Processing Systems}, 33:\penalty0 20000--20011, 2020{\natexlab{b}}.

\bibitem[Tian et~al.(2024{\natexlab{b}})Tian, Medya, and Ye]{tian2024combhelper}
Hao Tian, Sourav Medya, and Wei Ye.
\newblock Combhelper: A neural approach to reduce search space for graph combinatorial problems.
\newblock In \emph{Proceedings of the AAAI Conference on Artificial Intelligence}, volume~38, pages 20812--20820, 2024{\natexlab{b}}.

\bibitem[Kipf and Welling(2016)]{kipf2016semi}
Thomas~N Kipf and Max Welling.
\newblock Semi-supervised classification with graph convolutional networks.
\newblock \emph{arXiv preprint arXiv:1609.02907}, 2016.

\bibitem[Hamilton et~al.(2017)Hamilton, Ying, and Leskovec]{hamilton2017inductive}
Will Hamilton, Zhitao Ying, and Jure Leskovec.
\newblock Inductive representation learning on large graphs.
\newblock \emph{Advances in neural information processing systems}, 30, 2017.

\bibitem[Leskovec and Sosi{\v{c}}(2016)]{leskovec2016snap}
Jure Leskovec and Rok Sosi{\v{c}}.
\newblock Snap: A general-purpose network analysis and graph-mining library.
\newblock \emph{ACM Transactions on Intelligent Systems and Technology (TIST)}, 8\penalty0 (1):\penalty0 1--20, 2016.

\bibitem[Lab(2020)]{beansdata}
Makerere~AI Lab.
\newblock Bean disease dataset, January 2020.
\newblock URL \url{https://github.com/AI-Lab-Makerere/ibean/}.

\bibitem[Krizhevsky(2009)]{Krizhevsky09learningmultiple}
Alex Krizhevsky.
\newblock Learning multiple layers of features from tiny images.
\newblock Technical report, 2009.

\bibitem[Bossard et~al.(2014)Bossard, Guillaumin, and Van~Gool]{bossard14}
Lukas Bossard, Matthieu Guillaumin, and Luc Van~Gool.
\newblock Food-101 -- mining discriminative components with random forests.
\newblock In \emph{European Conference on Computer Vision}, 2014.

\bibitem[Soomro et~al.(2012)Soomro, Zamir, and Shah]{soomro2012ucf101dataset101human}
Khurram Soomro, Amir~Roshan Zamir, and Mubarak Shah.
\newblock Ucf101: A dataset of 101 human actions classes from videos in the wild, 2012.
\newblock URL \url{https://arxiv.org/abs/1212.0402}.

\bibitem[Tang et~al.(2015)Tang, Shi, and Xiao]{tang2015influence}
Youze Tang, Yanchen Shi, and Xiaokui Xiao.
\newblock Influence maximization in near-linear time: A martingale approach.
\newblock In \emph{Proceedings of the 2015 ACM SIGMOD international conference on management of data}, pages 1539--1554, 2015.

\bibitem[Yaroslavtsev et~al.(2020)Yaroslavtsev, Zhou, and Avdiukhin]{yaroslavtsev2020bring}
Grigory Yaroslavtsev, Samson Zhou, and Dmitrii Avdiukhin.
\newblock “bring your own greedy”+ max: near-optimal 1/2-approximations for submodular knapsack.
\newblock In \emph{International Conference on Artificial Intelligence and Statistics}, pages 3263--3274. PMLR, 2020.

\bibitem[Kempe et~al.(2003{\natexlab{b}})Kempe, Kleinberg, and Tardos]{kempe2003maximizing}
David Kempe, Jon Kleinberg, and {\'E}va Tardos.
\newblock Maximizing the spread of influence through a social network.
\newblock In \emph{Proceedings of the ninth ACM SIGKDD international conference on Knowledge discovery and data mining}, pages 137--146, 2003{\natexlab{b}}.

\bibitem[Chen et~al.(2009)Chen, Wang, and Yang]{chen2009efficient}
Wei Chen, Yajun Wang, and Siyu Yang.
\newblock Efficient influence maximization in social networks.
\newblock In \emph{Proceedings of the 15th ACM SIGKDD international conference on Knowledge discovery and data mining}, pages 199--208, 2009.

\bibitem[Iyer et~al.(2021)Iyer, Khargoankar, Bilmes, and Asanani]{iyer2021submodular}
Rishabh Iyer, Ninad Khargoankar, Jeff Bilmes, and Himanshu Asanani.
\newblock Submodular combinatorial information measures with applications in machine learning.
\newblock In \emph{Algorithmic Learning Theory}, pages 722--754. PMLR, 2021.

\bibitem[Nguyen and Zheng(2013)]{nguyen2013budgeted}
Huy Nguyen and Rong Zheng.
\newblock On budgeted influence maximization in social networks.
\newblock \emph{IEEE Journal on Selected Areas in Communications}, 31\penalty0 (6):\penalty0 1084--1094, 2013.

\bibitem[Nemhauser et~al.(1978{\natexlab{b}})Nemhauser, Wolsey, and Fisher]{DBLP:journals/mp/NemhauserWF78}
George~L. Nemhauser, Laurence~A. Wolsey, and Marshall~L. Fisher.
\newblock An analysis of approximations for maximizing submodular set functions - {I}.
\newblock \emph{Math. Program.}, 14\penalty0 (1):\penalty0 265--294, 1978{\natexlab{b}}.

\bibitem[Khuller et~al.(1999)Khuller, Moss, and Naor]{khuller1999budgeted}
Samir Khuller, Anna Moss, and Joseph~Seffi Naor.
\newblock The budgeted maximum coverage problem.
\newblock \emph{Information processing letters}, 70\penalty0 (1):\penalty0 39--45, 1999.

\bibitem[Pham et~al.(2023)Pham, Tran, Ha, and Thai]{pham2023linear}
Canh~V Pham, Tan~D Tran, Dung~TK Ha, and My~T Thai.
\newblock Linear query approximation algorithms for non-monotone submodular maximization under knapsack constraint.
\newblock \emph{arXiv preprint arXiv:2305.10292}, 2023.

\end{thebibliography}

\onecolumn

\appendix

\section{Proofs for Section \ref{sec:alg}} \label{apx:proof}
\begin{proof}[Proof of Prop. \ref{prop:uprime-size}]
  Let $m^* = \left( 1 + \frac{\kappa }{\delta c_{min} } \right) \log ( n / \epsi ) + 1.$
  
  First, we show that at any time during the execution of Alg. \ref{alg:single}, $|A| \le m^* + |A_s|$.
  Fix a value $B$ of $A_s$, and let $A_i, 1 \le i \le m$ assume all distinct values of $A$ while $A_s$
  has value $B$, in the order in which they were assigned. Thus, $A_1 = A_s$, $A_2 = A_s + e_1, \ldots, A_m = A_s + e_1 + \ldots + e_{m-1}$.
  Then, let $(y_i = f(A_i))_{i=1}^m$. By Line \ref{line:add}, $y_i \ge \left( 1 + \frac{\delta c_{min}}{\kappa } \right) y_{i-1}$, for each $i \in [m]$. Let $\beta = \frac{\delta c_{min}}{\kappa }$; then by Fact \ref{fact:1}, if $i \ge m^* - 1$, then $y_i \ge \frac{n}{\epsi} y_1$, and a deletion is triggered on Line \ref{line:delete}, which changes the value of $A_s$. Hence, we have $m \le m^*$, which shows $|A| \le |A_s| = m^*$.

  Observe that after a deletion, $|A_s| = m \le m^*$. Since $A_s$ has initial value $\emptyset$, it holds that $|A_s| \le m^*$ throughout the execution of the algorithm. Therefore, $|A| \le 2m^*$, and thus $|\uni'| \le 2m^* +1$. 
\end{proof}
\begin{proof}[Proof of Prop. \ref{prop:ahat-adot}]
  Let $( \dot B_i )_{i=1}^m$ be the sequence of sets deleted by the algorithm on
  Line \ref{line:delete-set}, in the order in which they were deleted.
  Let $j(i)$ be the iteration of the \textbf{for} loop on which $B_i$
  was deleted, and let $A_{j(i)}$ be the value of the set $A$ at the
  beginning of iteration $j(i)$. Let $B_0 = \hat A$, and let
  $B_i = B_{i-1} \setminus \hat B_i$, for $i = 1$ to $m$.
  Observe that $B_m = \dot A$.

  We have that
  \begin{align*}
    f( B_{i-1} ) - f(B_i) &\overset{(a)}{\le} \gamma^{-1} f( \dot B_i ) \\
                          &\overset{(b)}{<} \frac{\gamma^{-1}\epsi}{n} f( A_{j(i)} ) \\
                          &\overset{(c)}{\le} \frac{\gamma^{-1}\epsi}{n} f( \hat A ),
  \end{align*}
  where Inequality (a) follows from $\gamma$-submodularity,
  Inequality (b) is from the condition to delete set $\dot B_i$ on Line \ref{line:delete},
  and Inequality (c) follows from monotonicity.
  From here,
  \begin{align*}
    f( \hat A ) - f( \dot A ) &= \sum_{i=1}^m f( B_{i-1} ) - f( B_i ) \\
    &\le m \cdot \frac{\gamma^{-1}\epsi}{n} f( \hat A ) \le \gamma^{-1}\epsi f( \hat A ). \qedhere
  \end{align*}
\end{proof}
\begin{proof}[Proof of Lemma \ref{lemma:value-inside-A}]
  If $c(A_m) \le \kappa$, let $A^* = A_m$ and there is nothing to show.
  Otherwise, let $A' =  \{a_{i'}, \ldots, a_m \}$, where $i' = \max_{i \in [m]} \{ i: c( \{ a_i, \ldots, a_m \} ) > \kappa \}$.
  We have
  \begin{align*}
    f(A') &\overset{(a)}{\ge} \gamma ( f(A_m) - f( A_m \setminus A') ) \\
          &\overset{(b)}{=} \gamma \sum_{i=i'}^m \marge{ a_i }{ A_{i-1} } \\
          &\overset{(c)}{\ge} \gamma^2 \sum_{i=i'}^m \frac{ \delta c(a_i) f( A_{i-1} )}{\kappa} \\
          &\overset{(d)}{>} \gamma^2 \delta f(A_{i'-1} ) = \gamma^2 \delta f( A \setminus A' ),
  \end{align*}
  where Inequality (a) follows from $\gamma$-submodularity and nonnegativity, Equality (b)
  is a telescoping sum, Inequality (c) is from the assumed condition (2), and Inequality
  (d) is from monotonicity and the fact that $c( A' ) > \kappa$.
  Thus
  \begin{align*}
    f(A_m) &\le \gamma^{-1} (f(A') + f( A_m \setminus A')) \\
           &< \gamma^{-1}(1 + 1/(\delta \gamma^2 )) f(A') = \frac{\delta \gamma^2  +1}{\delta \gamma^3 } f(A').
  \end{align*}
  Let $A'' = A' \setminus a_{i'}$; by choice of $A'$, it holds that $c(A'') \le \kappa$.
  Finally, from $\gamma$-submodularity,
  \begin{align*}
    f(a^*) + f(A'') &\ge f(a_{i'}) + f(A'') \\
                    &\ge \gamma f(A') \ge \frac{\delta \gamma^4}{\delta \gamma^2 + 1} f(A_m).
  \end{align*}
  Thus, set $A^* = \argmax \{ f( a^* ), f(A'') \}$. 
\end{proof}
\begin{proof}[Proof of Prof. \ref{prop:aprime}]
  Let $\dot A = \{\dot a_1, \ldots, \dot a_m \}$ be in the order in which the elements were added into set $A$.
  That is, if $j(i)$ the iteration of the \textbf{for} loop in which $\dot a_i$ is considered, $i < i'$ implies
  $j(i) < j(i')$. Let $A_j$ be the value of the set $A$ at the beginning of iteration $j$ of the \textbf{for}
  loop.
  Then
  \begin{align*}
    \marge{ \dot a_i }{ \dot A_{i-1} } &\overset{(a)}{\ge} \gamma \marge{ \dot a_i }{ A_{j(i)} } \\
                                       &\overset{(b)}{\ge} \gamma \frac{ \delta c( \dot a_i ) f( A_{j(i)} )}{\kappa} \\
                                       &\overset{(c)}{\ge} \gamma \frac{ \delta c( \dot a_i ) f( \dot A_{i - 1} )}{\kappa},
  \end{align*}
  where (a) follows from $\gamma$-submodularity since $\dot A_{i-1} \subseteq A_{j(i)}$,
  (b) follows from the condition on Line \ref{line:add},
  (c) follows from monotonicity. 
  Also $c(\dot a_i ) \le \kappa$, for all $i$. Therefore, $\dot A_i$ and $a^*$ satisfy the conditions of
  Lemma \ref{lemma:value-inside-A}, which implies the result. 
\end{proof}
\begin{proof}[Proof of Prop. \ref{prop:opt-ahat}]
  Let $O \subseteq \uni$, such that $c(O) \le \kappa$ and $f(O) = \opt$.
  For each $e \in \uni$, let $j(e)$ be the iteration of the \textbf{for}
  loop in which $e$ was considered, and let $A_j$ be the value of the set
  $A$ at the beginning of the $j$th iteration. 
  Then
  \begin{align*}
    \opt - f( \hat A ) &\overset{(a)}{\le } f( O \cup \hat A ) - f( \hat A) \\
                       &\overset{(b)}{\le } \gamma^{-1} \sum_{o \in O \setminus \hat A} \marge{o}{ A_{j(o)} } \\
                       &\overset{(c)}{<} \gamma^{-1} \sum_{o \in O \setminus \hat A} \frac{\delta c(o) f(A_{j(o)})}{\kappa } \\
                       &\overset{(d)}{\le} \gamma^{-1} \sum_{o \in O \setminus \hat A} \frac{\delta c(o) f( \hat A )}{\kappa } \\
                       &\overset{(e)}{\le} \gamma^{-1} \delta f( \hat A ),
  \end{align*}
  where (a) follows from monotonicity,
  (b) follows from $\gamma$-submodularity,
  (c) holds since the condition on Line \ref{line:add} must have failed since $o \not \in \hat A$,
  (d) follows from monotonicity, and (e) holds since $\sum_{o \in O} c(o) \le \kappa$.
\end{proof}
\begin{proof}[Proof of Prop. \ref{prop:part}] 
  Let $X \subseteq O$ be a maximal subset with $c(X) \le \kappa ( 1 - \eta )$.
  By Assumption \nhi, $X \neq \emptyset$. If $X = O$, there is nothing to show.
  Otherwise, let $o \in O \setminus X$. By definition, $c \left( O \setminus (X \cup \{ o \} \right) < \kappa - \kappa ( 1 - \eta ) = \eta \kappa < (1 - \eta) \kappa$, since $\eta \le 1/2$. Therefore, $O_1 = X$, $O_2 = \{ o \}$, and $O_3 = O \setminus (X \cup \{ o \} )$
  form the requisite partition. 
\end{proof}
\section{Experimental Setup} We run all our experiments on a Linux server equipped with an NVIDIA RTX A6000 GPU and an AMD EPYC $7713$ CPU, using PyTorch $2.4.1$ and Python $3.12.7$. Code and data are available at: \url{https://anonymous.4open.science/r/Pruning}.

\section{Problem Formulation}\label{appendix:problem_formulation}
In this section, we formally introduce the four applications mentioned in the paper. A CO problem on graph can be described with an undirected graph $G (V, E)$, where $V$ is the set of vertices, $E$ is the set of edges and each node $v \in V$ is associated with a cost $c(v)$. For knapsack constraint, we set the cost of each node $v \in V$ as $c(v) = \frac{\beta}{|V|} (|N(v)| - \alpha) $ where $N(v)$ is the set of neighbors of $v$ , $\alpha = \frac{1}{20}$ and $\beta$ is a normalizing factor so $c(v) \geq 1$,  so that the cost of each node is roughly proportional to the value of the node following \citet{yaroslavtsev2020bring}.

\textbf{Maximum Cover.} Given a budget $k$ , the goal of this problem is to find a subset of nodes $S \subseteq V$ that maximizes the objective function, $f(S)= |\{v |  v \in S \lor \exists(u,v) \in E , u \in S  \}|$ where $ \sum_{v \in S} c(v) \leq k$.

\textbf{Maximum Cut (MaxCut).} Given a budget $k$, the goal of this problem is to find a subset of nodes $S \subseteq V$ that maximizes the objective function, $f(S)= |\{ (u,v) \in E: v \in S, u \in V \setminus S \}|$ where $ \sum_{v \in S} c(v) \leq k$.

\textbf{Influence Maximization (IM).} Given a budget $k$, probabilities $p(u,v)$ on the edges $(u,v)\in E$ and a cascade model $\mathcal{C}$, the goal of this problem is to find a subset of nodes $S \subseteq V$ that maximizes the expected spread, $f(S)= \mathbf{E}[\sigma(S)]$ where $\sigma(S)$ denotes the spread of $S$ and $ \sum_{v \in S} c(v) \leq k$.  For the experiments, we consider the independent cascade model \citep{kempe2003maximizing} and set $p = 0.01$ following \citet{chen2009efficient}. Because relatively large propagation probability $p$ the influence spread is not very sensitive to different algorithms and heuristics, because a giant connected component exists even after removing every edge with probability $1-p$.

\textbf{Information Retrieval System.} Given a budget $k$, a set of candidate items $C$ (images or videos), and a set of query items $Q$, the objective is to find a subset $S \in C$ that maximizes the graph cut function, $f(S) = \lambda \sum_{i \in Q} \sum_{j \in S} s(i,j) - \sum_{i,j \in S} s(i,j) $  where $s$ is the similarity kernel and $\lambda \geq 2$. We set $\lambda = 10$ for our experiments. Note that the condition on $\lambda$ is to ensure that $f$ remains a monotone submodular function \cite{iyer2021submodular}. For video recommendation, the constraint is $\sum_{v \in S} L(v) \leq k$, where $L(v)$ represents the length of a video normalized the minimum length of the video. For the image retrieval system, the constraint is simply $|S| \leq k$.

\section{Baselines} \label{appendix:baselines}
In this section, we provide details about each algorithm discussed in our paper.

\textbf{GCOMB-P.} The idea is as follows: For a graph $G_{i} = (V, E)$ from training distribution and a budget of $b$, the vertices are initially sorted into descending order based on the sum of outgoing edge-weight (or degree in an unweighted graph), and $rank(v)$ denotes the position of vertex $v$ in this ordered list. A stochastic solver is used $m$ times to obtain $m$ different solution sets $ \{S^{(1)},S^{(2)},...,S^{(m)}\} $ for budget $b$. The goal here is to predict all nodes that could potentially be part of the solution set.  We define $r^{G_i}_{b}= \max_{v \in \cup_{j} S^{(j)}} rank(v)$ to be the highest rank of all vertices among the vertices in each of the $m$ solution sets. We repeat this process for each graph in the training distribution and define $r_{b} = \max_{G_i \in G_{train}} r^{G_i}_{b}$ for budget $b$. To generalize across budgets, we compute $r_b$ for a series of budgets and obtain $(b,r_b)$ pairs. To generalize across graph sizes, budgets and ranks are normalized with respect to the number of nodes in the graph. During testing, for an unseen budget $b$, linear interpolation is used to predict $r_b$.


\textbf{LeNSE.} \citet{ireland2022lense} proposed extracting a small portion of the original graph that most likely contains the optimal solution, where the solution can be extracted using any existing heuristics. The idea is as follows: Given a training distribution, extract the optimal solutions for each graph. Next, generate random subgraphs containing different portions of nodes from the optimal solution, where subgraphs are sets of nodes and their 1-hop neighbors. Assign these subgraphs to different classes based on the ratio of the objective function value from the subgraph to that of the original graph. The number of classes depends on the problem and dataset and usually requires domain knowledge.
In the next step, a reinforcement learning agent is trained to navigate a subgraph induced by a fixed number of random nodes and their 1-hop neighbors, updating the subgraph at each step by replacing a single vertex with its neighbors. By updating the subgraph in this manner, the agent aims to obtain a subgraph close to the optimal subgraph in the embedding space created by the encoder.

\textbf{Submodular Sparsification (SS).}   \citet{zhou2017scaling} proposed a randomized pruning method to reduce the submodular maximization problem via a novel concept called the submodularity graph. The submodularity graph is a weighted directed graph \( G(V,E,w) \) defined by a normalized submodular function \( f: 2^{V} \rightarrow \mathcal{R_+} \), where \( V \) is the set of nodes corresponding to the ground set, and each directed edge \( (u,v) \in E \) has weight \( w(u,v) = f(v \mid u) - f(u \mid V \setminus u) \). As computing all edge weights is quadratic in complexity, they proposed a randomized approach. The idea is as follows: Sample a subset of random nodes from the original ground set, remove these random nodes from the ground set at each step, and add them to the pruned ground set. Then, remove a subset of the top elements from the ground set that are deemed unimportant from the sample nodes in that set. After several stages of pruning, when the original ground set size falls below a certain threshold, the remaining elements are merged with the pruned ground set.

\textbf{COMBHelper.}  \citet{tian2024combhelper} introduced a method to train a GNN for vertex classification in combinatorial problems, predicting nodes likely to be part of the solution. To improve scalability, they applied knowledge distillation to transfer knowledge to a smaller GNN and used problem specific weight boosting to enhance the performance.


\textbf{GNNPruner.} We propose a modified version of \comb, in which we similarly train a vertex GNN classifier to predict solution nodes. Like \comb and \lense, we generate training labels using solutions from  a heuristic. However, unlike these methods, we use random numbers as node features for experiments under size constraints, and replace SAGEConv \citep{hamilton2017inductive} with a lightweight GCN \citep{kipf2016semi} with fewer layers. Empirical evidence shows that this architectural shift helps capture structural relationships in the graph without relying on the handcrafted features used by \comb and \lense. Additionally, it improves scalability by using fewer layers. We simplify prepossessing and improve prediction accuracy through more generalized learning by leveraging random features (shown in Table \ref{tab:size_constraint}). However, in experiments involving knapsack constraints, instead of using random numbers as node features, we provide the degree, cost, and the degree-to-cost ratio of each node as input features. This adjustment appears to improve performance.
\section{Dataset Information} \label{appendix:dataset}

In Table \ref{tab:dataset_distribution}, we list the graphs used in our empirical evaluations along with the sizes of the respective training and testing splits for the traditional CO problems on graphs. We follow \citet{ireland2022lense} to determine what percentage of the original graph edges was used for training and testing.

\begin{table}[h]
\centering
\caption{The real-world graphs used to perform our experiments. 
}

\vspace{1em}
\begin{tabular}{|c|c|c|c|c|}
\hline
         & \multicolumn{2}{c|}{Train Size}        & \multicolumn{2}{c|}{Test Size}           \\ \cline{2-5} 
Graph    & \multicolumn{1}{c|}{Vertices} & Edges  & \multicolumn{1}{c|}{Vertices} & Edges    \\ \hline
Facebook & 3847                          & $26470 (30\%)$  & 4002                          & 61764    \\
Wiki     & 4891                          & $30228 (30\%)$  & 6358                          & 70534    \\
Deezer   & 48870                         & $149460 (30\%)$& 53511                         & 348742   \\
SlashDot & 47546                         & $140566 (30\%)$& 67640                         & 327988   \\
Twitter  & 55827                         & $134229 (10\%)$& 80712                         & 1208067  \\
DBLP     & 63004                         & $41994 (10\%)$ & 315305                        & 1007872  \\
YouTube  & 185193                        & $179257 (06\%)$ & 1098104                       & 2808367  \\
Skitter  & 147604                        & $110952 (01\%)$ & 1694318                       & 10984346 \\ \hline
\end{tabular}

\label{tab:dataset_distribution}
\end{table}

For the information retrieval problem, we use the Beans dataset \citep{beansdata}, CIFAR100 \citep{Krizhevsky09learningmultiple}, FOOD101 \citep{bossard14}, and UCF101 \citep{soomro2012ucf101dataset101human}. Since the Beans dataset contains only three classes, we undersample one class to make the problem more challenging and use that class for querying. For the FOOD101 dataset, we sample 100 images per food category for our candidate set and query from the test dataset. Similarly, for the UCF-101 dataset, we select 100 videos for each action and randomly query from the validation dataset.

\section{Network architectures and hyper-parameters}

\paragraph{QuickPrune.} For all graph problem experiments, we set $\delta = 0.1$, $\epsilon = 0.1$, and $\eta = 0.5$. In the knapsack-constrained MaxCover experiment, we set $\delta = 0.5$. For both image and video retrieval systems, we use $\delta = 0.05$, $\epsilon = 0.1$, and $\eta = 0.5$.

\paragraph{SS.} For all experiments, we set $r=8$ and $c=8$, following \citet{zhou2017scaling}. We find that this configuration empirically performs well.

\paragraph{GNNPruner.} The architecture consists of two layers, each containing a graph convolutional layer with ReLU activation and $16 $ hidden channels. We use the Adam optimizer with a learning rate of $0.001$ and a weight decay of $5 \times 10^{-4}$, training with cross-entropy loss. At each epoch, we randomly sample vertices both from the solution and non-solution sets to prevent the loss from being dominated by the non-solution vertices.

\section{Efficiency and scalability analysis} \label{appendix:scalability}

From Figure \ref{fig:memory_comparsion} , we observe that most algorithms (\qs, SS, \gcomb) only utilize CPU resources. Algorithms like \lense, \comb, and \gnn use both CPU and GPU resources, with \comb showing the highest combined usage, making it the most computationally demanding algorithm , followed by GNNPruner; in contrast, LeNSE exhibits relatively minimal usage. LeNSE only modifies a subgraph, so it does not need to load the entire graph into the memory.

\begin{figure}
    \centering
    \includegraphics[width=0.5\linewidth]{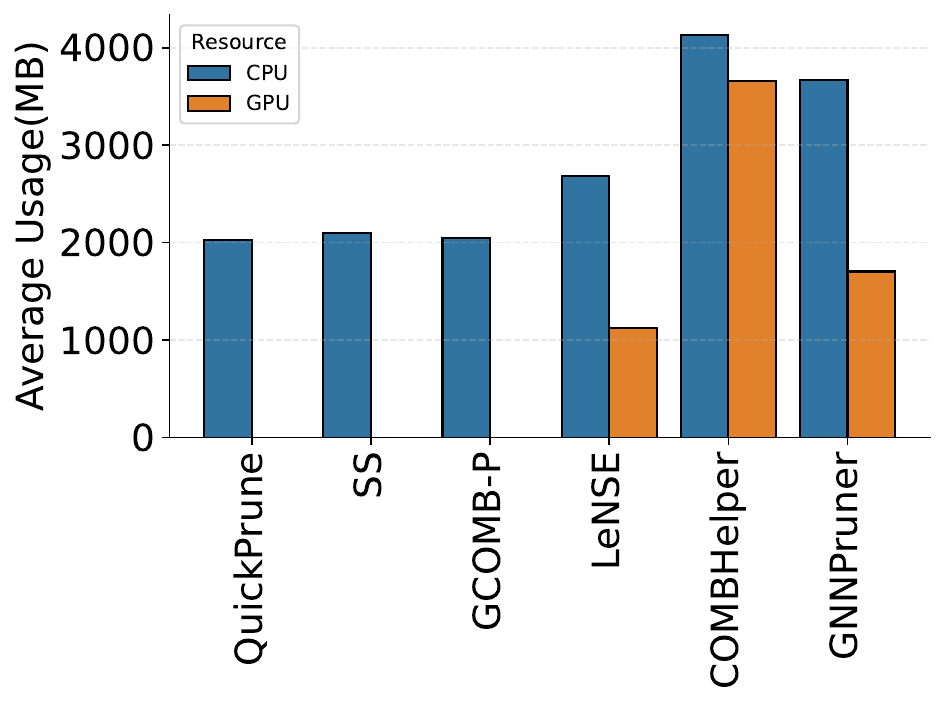}
    \caption{Average GPU and CPU memory utilization among
algorithms for MaxCover on YouTube dataset.}
    \label{fig:memory_comparsion}
\end{figure}

\section{Additional table and plots}

In this section, you can find the tables and plots omitted in the main paper due to space constraints.
\subsection{Heuristics}\label{appendix:heuristics}

In Table \ref{tab:dataset_distribution}, we summarize the algorithms used for each application under size and knapsack constraints. Note that for the knapsack-constrained experiments for IM, we use the improved greedy heuristic (referred to as Algorithm 2 in the original paper) from \citet{nguyen2013budgeted}.
\begin{table}[h]
\centering
\caption{Summary of Heuristic Approaches}
\vspace{1em}
\label{tab:heuristics}

\begin{tabular}{lll}
\toprule
Problem   & Size            & Knapsack                 \\ \midrule
MaxCover  & \citet{DBLP:journals/mp/NemhauserWF78}                & \citet{khuller1999budgeted} \\
MaxCut    & \citet{DBLP:journals/mp/NemhauserWF78}                 &   \citet{pham2023linear}   \\
IM        & \citet{tang2015influence} & \citet{nguyen2013budgeted} \\
Retrieval & \citet{DBLP:journals/mp/NemhauserWF78}                & \citet{khuller1999budgeted}                 \\ \bottomrule
\end{tabular}%

\end{table}

\newpage

\subsection{Image and Video Retrieval System.} \label{appendix:retrieval}

In Figure \ref{fig:retrival_system}, we present the results for information retrieval system. We employ the cosine similarity metric to assess the similarity between two images. Instead of using a generalist model trained on large datasets, we use a model fine-tuned on the specific dataset. This approach allows the underlying model to better understand the input images.

\begin{figure*}[h] \label{fig:retrival_system} 

  \subfigure []{ 
    \includegraphics[width=\textwidth]{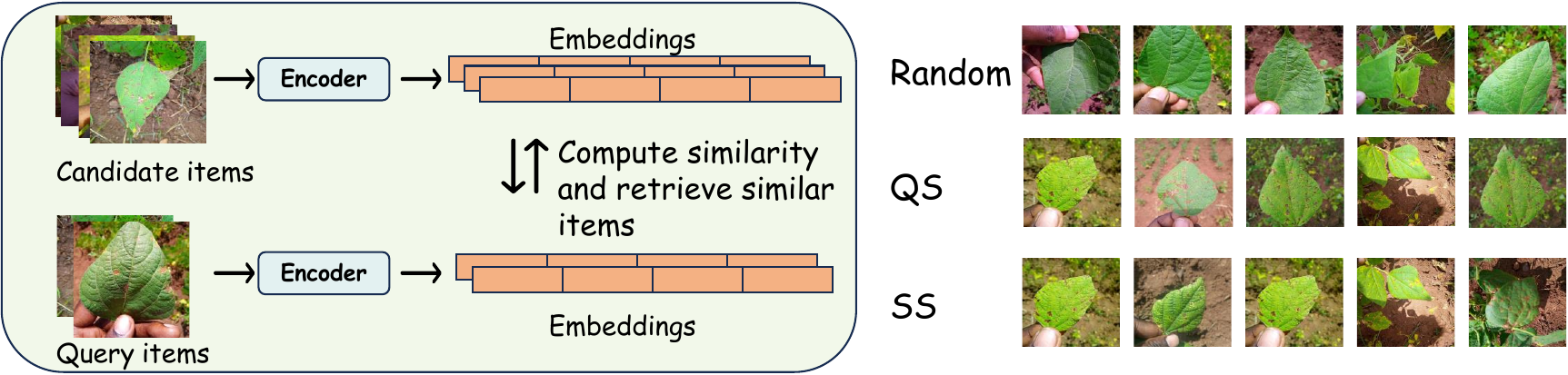}
  }
  \subfigure []{ 
    \includegraphics[width=0.23\textwidth]{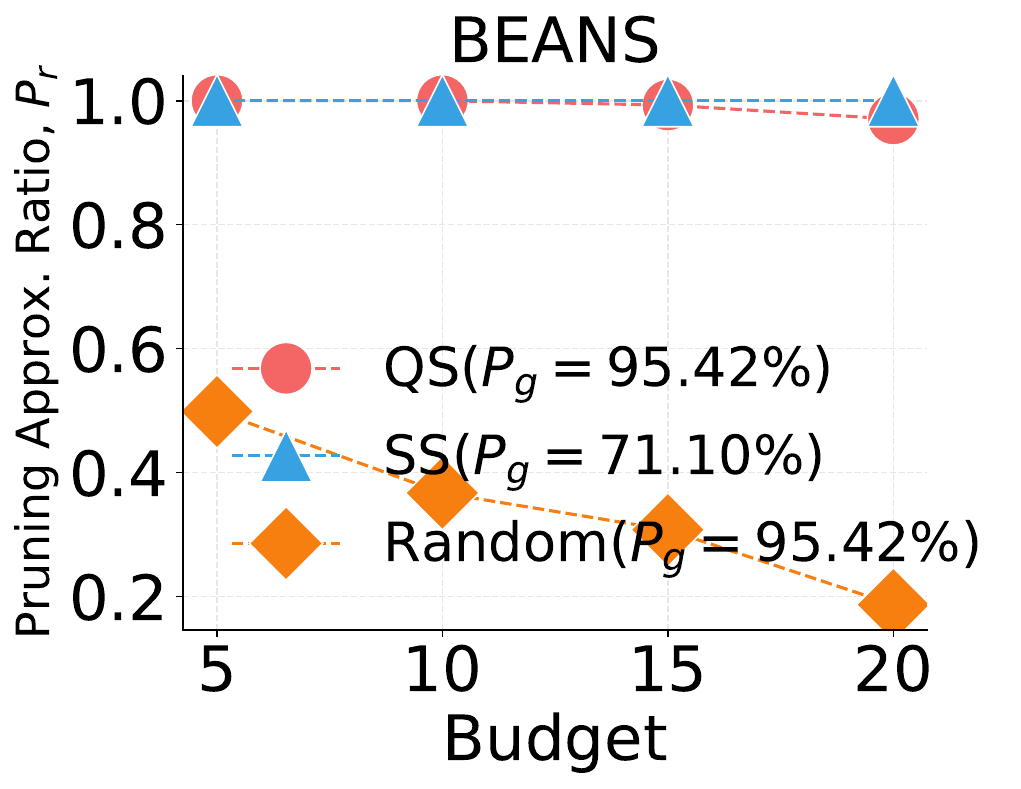}
  }
  \subfigure[]{ 
    \includegraphics[width=0.23\textwidth]{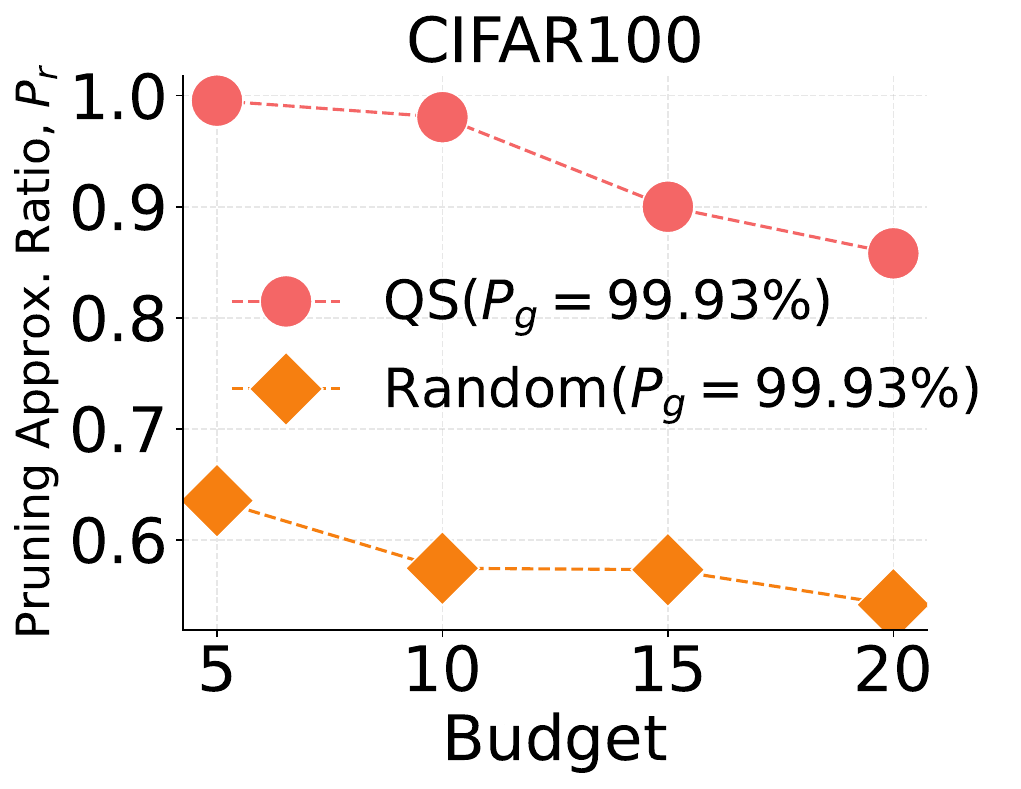}
  }
  \subfigure[]{ 
    \includegraphics[width=0.23\textwidth]{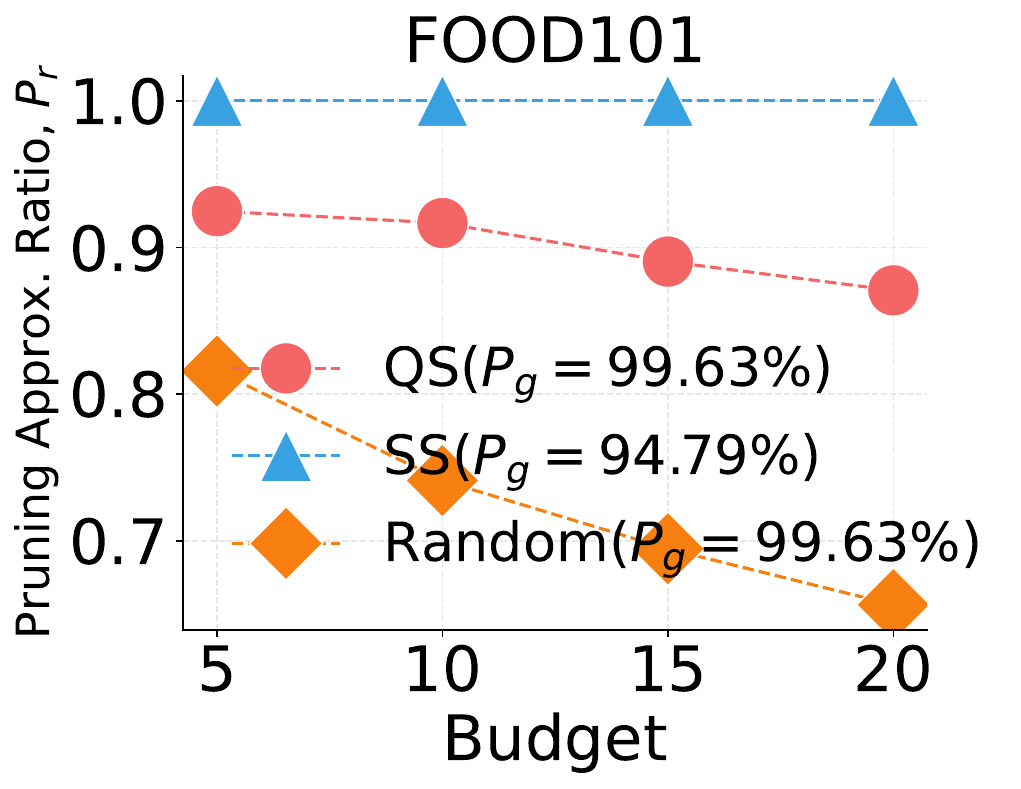}
  }
    \subfigure[]{ \label{fig:video_summ}
    \includegraphics[width=0.23\textwidth]{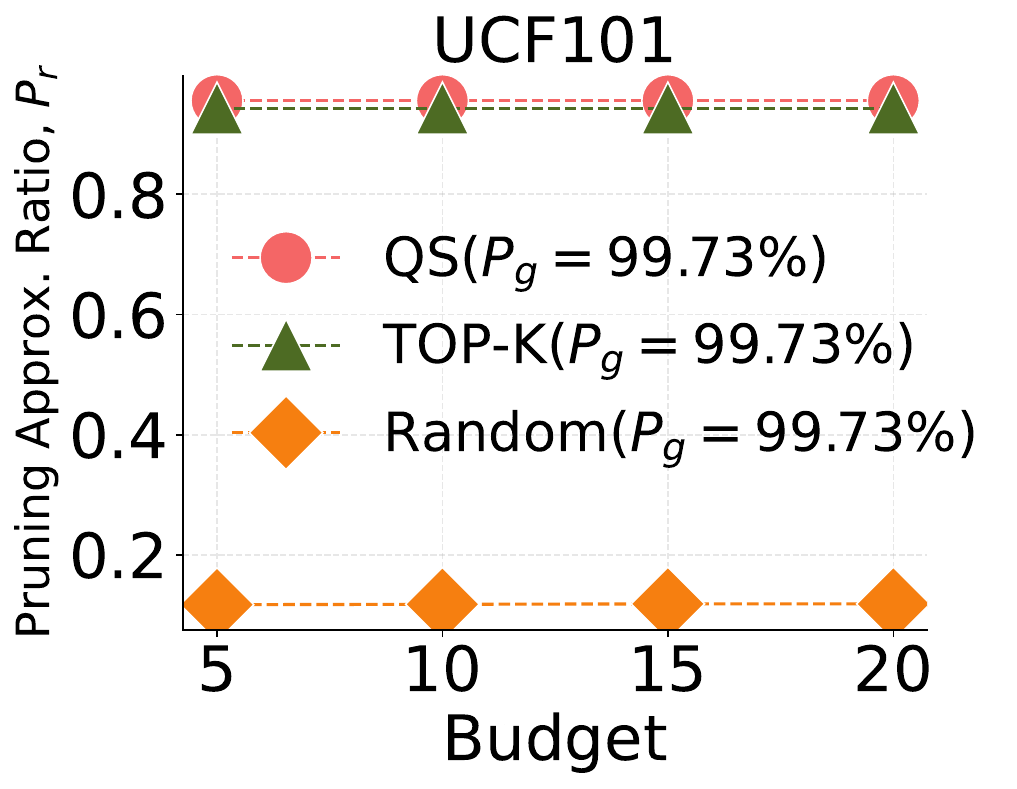}
  }

  \caption{Retrieval system: (a) The pipeline of the retrieval system and a visual representation of images selected by various algorithms for the Beans Dataset. (b-e) Multi-budget analysis of the retrieval system for budgets ranging from $5$ to $20$. Note that $P_g$ represents the percentage of the ground set that has been pruned.}

\end{figure*}

\subsection{Knapsack constrained experiments} \label{appendix:qs_vs_baselines_IM}

In Table \ref{tab:knapsack_constraint},  we present the results of our experiments under knapsack constraint.

\begin{table}[h]
\centering
\caption{Comparison of pruning algorithms for knapsack constraint experiments (best combined metric in bold).}
\label{tab:knapsack_constraint}
\vspace{1em}
\begin{tabular}{|cccccccccc|}
\hline
\multicolumn{1}{|c|}{}         & \multicolumn{3}{c|}{\qs}                                              & \multicolumn{3}{c|}{\topk}                                           & \multicolumn{3}{c|}{\gnn}                       \\ \hline
\multicolumn{1}{|c|}{Graph}    & $P_r \uparrow$ & $P_g$ $\uparrow$ & \multicolumn{1}{c|}{$C \uparrow$} & $P_r \uparrow$ & $P_g$ $\uparrow$ & \multicolumn{1}{c|}{$C\uparrow$} & $P_r \uparrow$ & $P_g$ $\uparrow$ & $C\uparrow$ \\ \hline
& \multicolumn{9}{c|}{\textbf{Maximum Cover}}  \\ \hline
 \multicolumn{1}{|c|}{Facebook} 
& 0.9725& 0.9871& \multicolumn{1}{c|}
{\textbf{0.9600}}& 0.5505& 0.9871& \multicolumn{1}{c|}
{0.5434}& 0.3303& 0.9936& \multicolumn{1}{c|}
{0.3282} \\
 \multicolumn{1}{|c|}{Wiki} 
& 0.9400& 0.9851& \multicolumn{1}{c|}
{\textbf{0.9260}}& 0.7550& 0.9851& \multicolumn{1}{c|}
{0.7438}& 1.0000& 0.8696& \multicolumn{1}{c|}
{0.8696} \\
 \multicolumn{1}{|c|}{Deezer} 
& 0.8750& 0.9983& \multicolumn{1}{c|}
{0.8735}& 0.9150& 0.9983& \multicolumn{1}{c|}
{0.9134}& 1.0000& 0.9725& \multicolumn{1}{c|}
{\textbf{0.9725}} \\
 \multicolumn{1}{|c|}{Slashdot} 
& 0.8950& 0.9988& \multicolumn{1}{c|}
{\textbf{0.8939}}& 0.5700& 0.9988& \multicolumn{1}{c|}
{0.5693}& 1.0000& 0.8325& \multicolumn{1}{c|}
{0.8325} \\
 \multicolumn{1}{|c|}{Twitter} 
& 0.8700& 0.9987& \multicolumn{1}{c|}
{\textbf{0.8689}}& 0.5950& 0.9987& \multicolumn{1}{c|}
{0.5942}& 1.0000& 0.4783& \multicolumn{1}{c|}
{0.4783} \\
 \multicolumn{1}{|c|}{DBLP} 
& 0.8450& 0.9997& \multicolumn{1}{c|}
{0.8447}& 0.7150& 0.9997& \multicolumn{1}{c|}
{0.7148}& 1.0000& 0.9971& \multicolumn{1}{c|}
{\textbf{0.9971}} \\
 \multicolumn{1}{|c|}{YouTube} 
& 0.9150& 0.9999& \multicolumn{1}{c|}
{\textbf{0.9149}}& 0.5200& 0.9999& \multicolumn{1}{c|}
{0.5199}& 1.0000& 0.3626& \multicolumn{1}{c|}
{0.3626} \\
 \multicolumn{1}{|c|}{Skitter} 
& 0.9050& 0.9999& \multicolumn{1}{c|}
{0.9049}& 0.8150& 0.9999& \multicolumn{1}{c|}
{0.8149}& 1.0000& 0.9887& \multicolumn{1}{c|}
{\textbf{0.9887}} \\
\hline
& \multicolumn{9}{c|}{\textbf{Maximum Cut}}  \\ \hline
 \multicolumn{1}{|c|}{Facebook} 
& 0.9899& 0.9666& \multicolumn{1}{c|}
{0.9568}& 1.0000& 0.9666& \multicolumn{1}{c|}
{0.9666}& 0.9697& 0.4984& \multicolumn{1}{c|}
{0.4833} \\
 \multicolumn{1}{|c|}{Wiki} 
& 0.9600& 0.9928& \multicolumn{1}{c|}
{\textbf{0.9531}}& 0.5100& 0.9928& \multicolumn{1}{c|}
{0.5063}& 1.0000& 0.1836& \multicolumn{1}{c|}
{0.1836} \\
 \multicolumn{1}{|c|}{Deezer} 
& 0.9600& 0.9985& \multicolumn{1}{c|}
{0.9586}& 0.8000& 0.9985& \multicolumn{1}{c|}
{0.7988}& 1.0000& 0.9745& \multicolumn{1}{c|}
{\textbf{0.9745}} \\
 \multicolumn{1}{|c|}{Slashdot} 
& 0.9700& 0.9991& \multicolumn{1}{c|}
{\textbf{0.9691}}& 0.6700& 0.9991& \multicolumn{1}{c|}
{0.6694}& 1.0000& 0.4395& \multicolumn{1}{c|}
{0.4395} \\
 \multicolumn{1}{|c|}{Twitter} 
& 0.9500& 0.9996& \multicolumn{1}{c|}
{0.9496}& 0.2900& 0.9996& \multicolumn{1}{c|}
{0.2899}& 1.0000& 0.0222& \multicolumn{1}{c|}
{0.0222} \\
 \multicolumn{1}{|c|}{DBLP} 
& 0.9500& 0.9999& \multicolumn{1}{c|}
{0.9499}& 0.4300& 0.9999& \multicolumn{1}{c|}
{0.4300}& 1.0000& 0.9988& \multicolumn{1}{c|}
{\textbf{0.9988}} \\
 \multicolumn{1}{|c|}{YouTube} 
& 0.9700& 0.9999& \multicolumn{1}{c|}
{\textbf{0.9699}}& 0.7300& 0.9999& \multicolumn{1}{c|}
{0.7299}& 1.0000& 0.9406& \multicolumn{1}{c|}
{0.9406} \\
 \multicolumn{1}{|c|}{Skitter} 
& 0.9700& 0.9999& \multicolumn{1}{c|}
{0.9699}& 1.0000& 0.9999& \multicolumn{1}{c|}
{\textbf{0.9999}}& 1.0000& 0.9993& \multicolumn{1}{c|}
{0.9993} \\
\hline
& \multicolumn{9}{c|}{\textbf{Influence Maximization}}  \\ \hline
 \multicolumn{1}{|c|}{Facebook} 
& 0.9934& 0.9148& \multicolumn{1}{c|}
{\textbf{0.9088}}& 0.9934& 0.9148& \multicolumn{1}{c|}
{\textbf{0.9088}}& 0.1173& 0.9854& \multicolumn{1}{c|}
{0.1156} \\
 \multicolumn{1}{|c|}{Wiki} 
& 0.9660& 0.8774& \multicolumn{1}{c|}
{0.8476}& 1.0000& 0.8774& \multicolumn{1}{c|}
{\textbf{0.8774}}& 0.8280& 0.8491& \multicolumn{1}{c|}
{0.7031} \\
 \multicolumn{1}{|c|}{Deezer} 
& 1.0000& 0.9820& \multicolumn{1}{c|}
{\textbf{0.9820}}& 1.0000& 0.9820& \multicolumn{1}{c|}
{\textbf{0.9820}}& 0.4678& 0.9719& \multicolumn{1}{c|}
{0.4547} \\
 \multicolumn{1}{|c|}{Slashdot} 
& 0.9132& 0.9868& \multicolumn{1}{c|}
{0.9011}& 1.0120& 0.9868& \multicolumn{1}{c|}
{\textbf{0.9986}}& 0.7385& 0.8403& \multicolumn{1}{c|}
{0.6206} \\
 \multicolumn{1}{|c|}{Twitter} 
& 0.9121& 0.9986& \multicolumn{1}{c|}
{0.9108}& 0.9689& 0.9986& \multicolumn{1}{c|}
{\textbf{0.9675}}&  0.1399 & 0.9999 & \multicolumn{1}{c|}
{0.1398} \\
 \multicolumn{1}{|c|}{DBLP} 
& 1.0000& 0.9958& \multicolumn{1}{c|}
{0.9958}& 1.0000& 0.9958& \multicolumn{1}{c|}
{0.9958}& 1.0000& 0.9999& \multicolumn{1}{c|}
{\textbf{0.9999}} \\
 \multicolumn{1}{|c|}{YouTube} 
& 0.9171& 0.9994& \multicolumn{1}{c|}
{0.9165}& 0.9502& 0.9994& \multicolumn{1}{c|}
{\textbf{0.9496}}& 0.9814& 0.3299& \multicolumn{1}{c|}
{0.3238} \\
\multicolumn{1}{|c|}{Skitter}  &    0.8291            &     0.9999             &           \multicolumn{1}{c|}
{0.8290}                        &   0.9391             &    0.9999               &    \multicolumn{1}{c|}
{\textbf{0.9390}}                              &    0.9999            &    0.1399              &     0.1398        \\ \hline
\end{tabular}%

\end{table}

\subsection{Comparison between \qs and \qss} \label{appendix:multi_vs_single}

In this section, we present the comparison between \qs and \qss under knapsack constraint. We run \qss with the maximum budget in the range (in our case, this would be 100).

\begin{figure*}[] 
  \subfigure { 
    \includegraphics[width=0.23\textwidth]{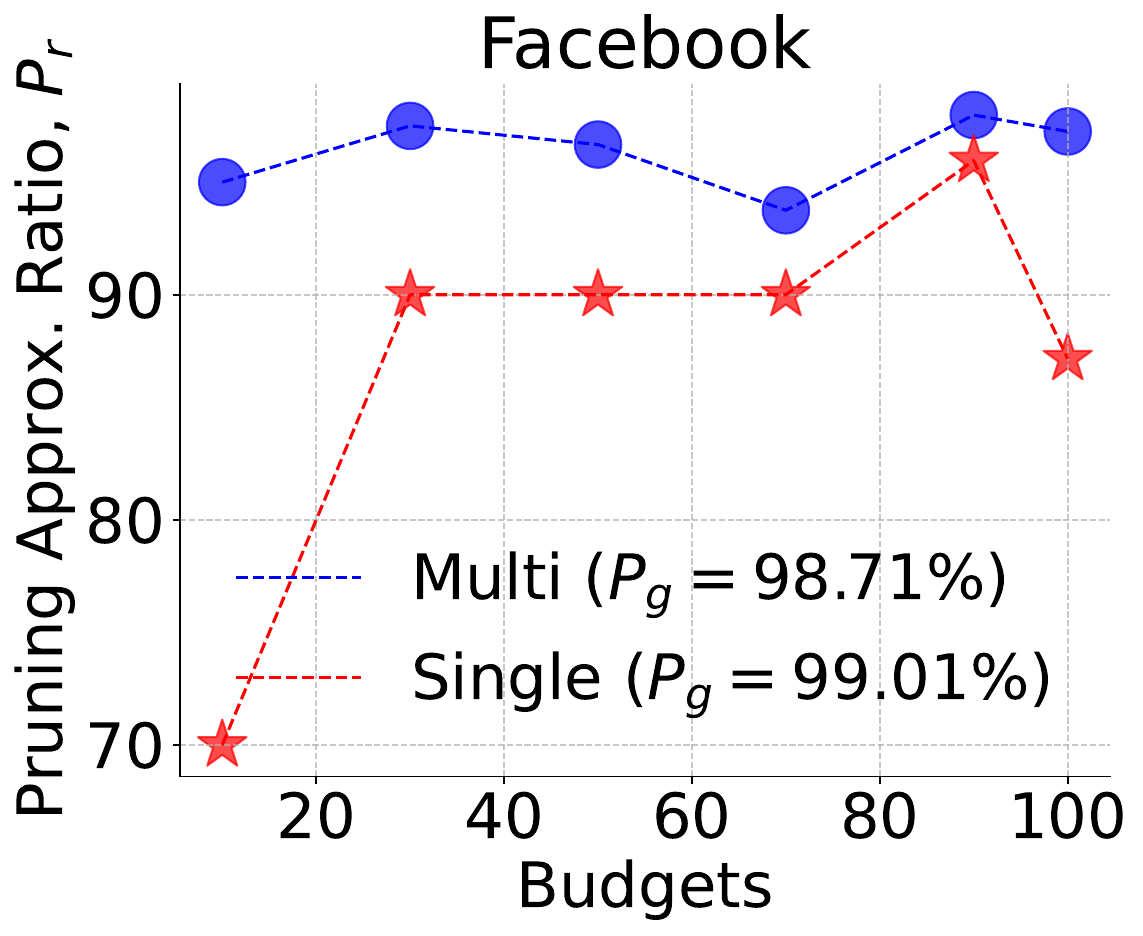}
  }
  \subfigure{ 
    \includegraphics[width=0.23\textwidth]{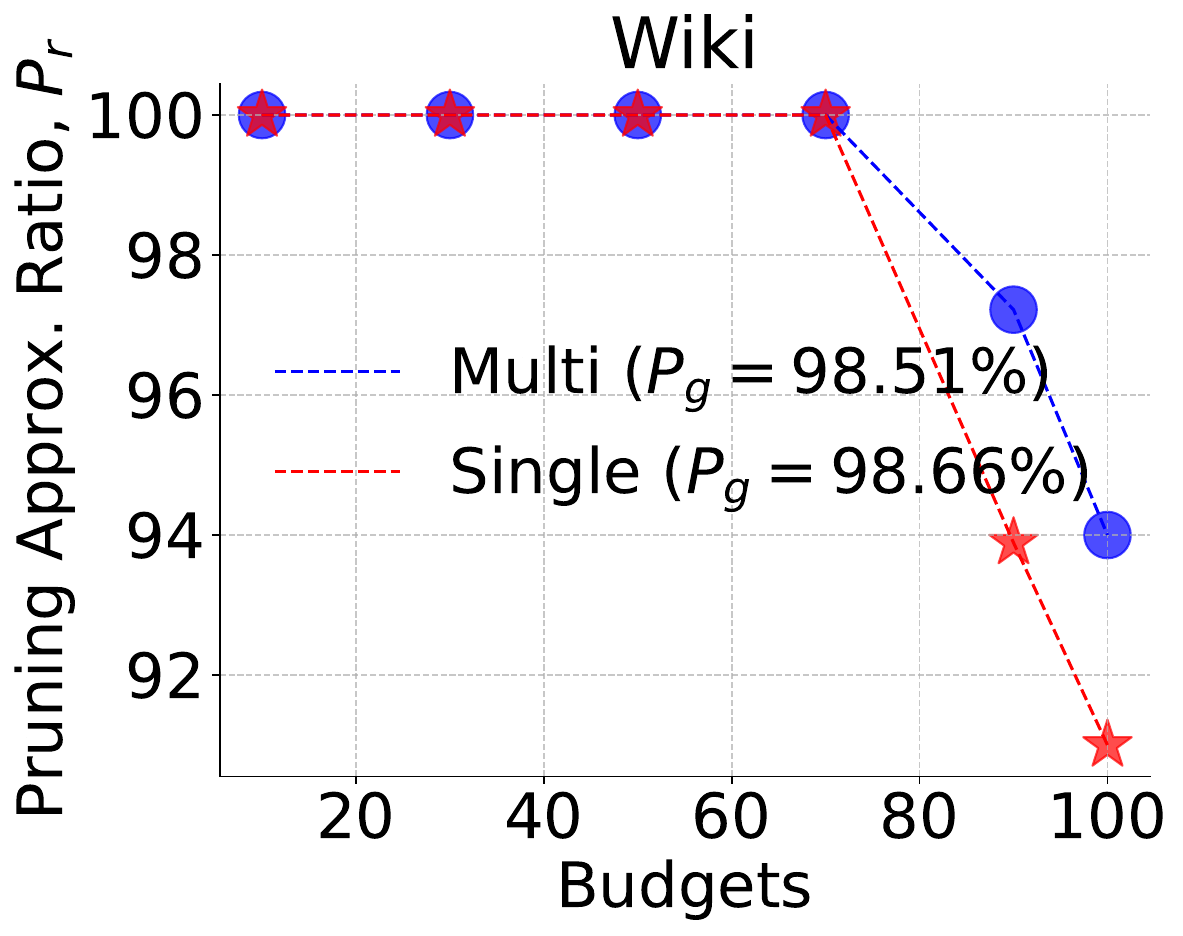}
  }
  \subfigure{ 
    \includegraphics[width=0.23\textwidth]{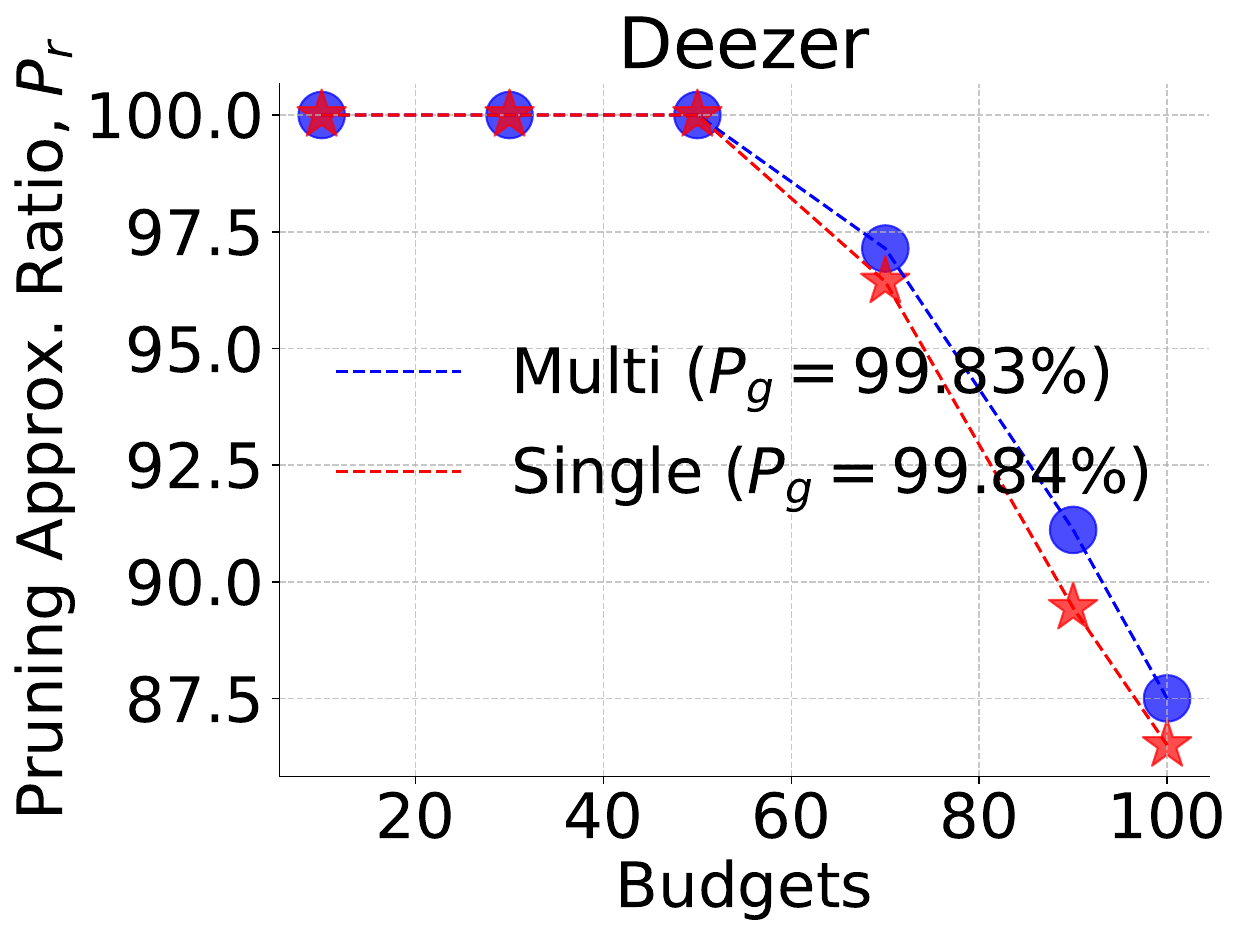}
  }
    \subfigure{ 
    \includegraphics[width=0.23\textwidth]{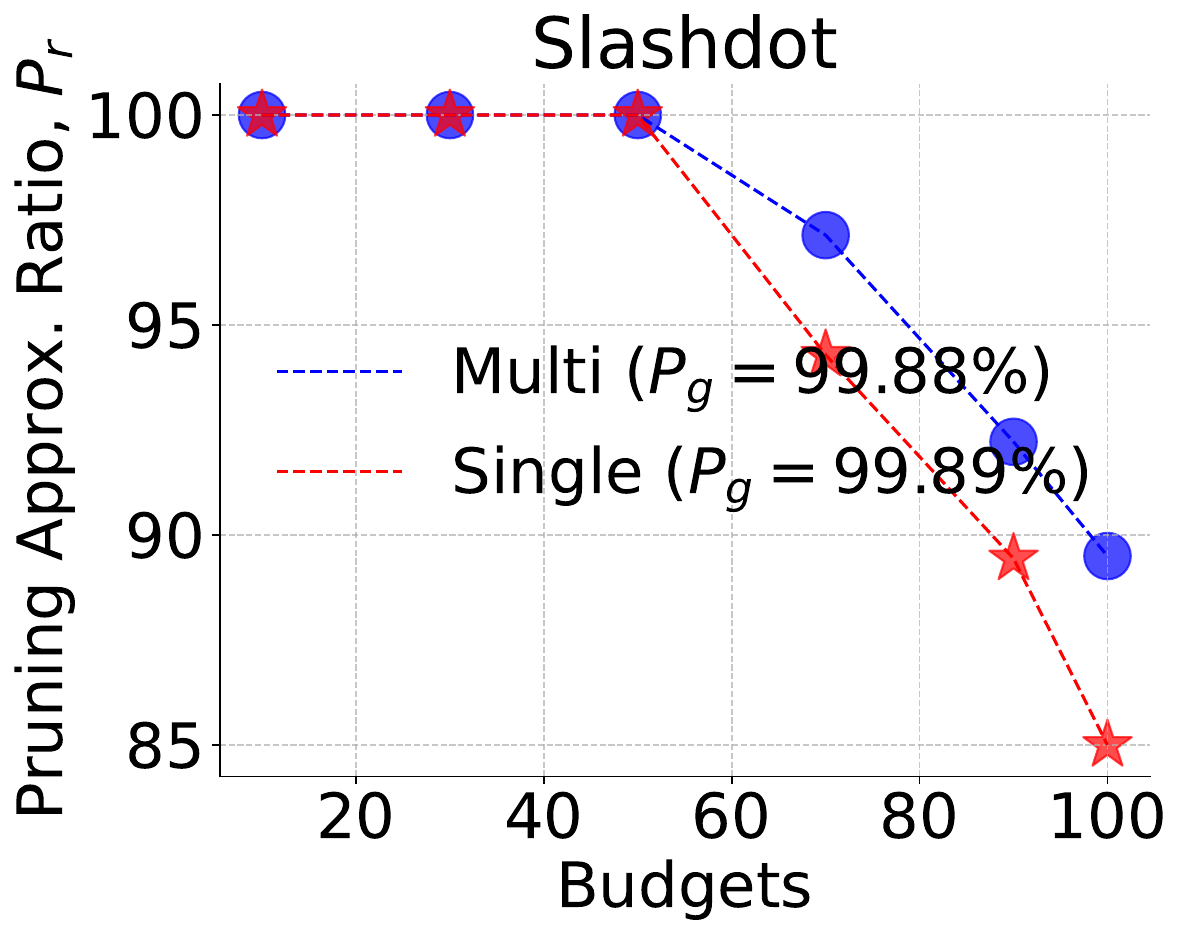}
  }
  \subfigure { 
    \includegraphics[width=0.23\textwidth]{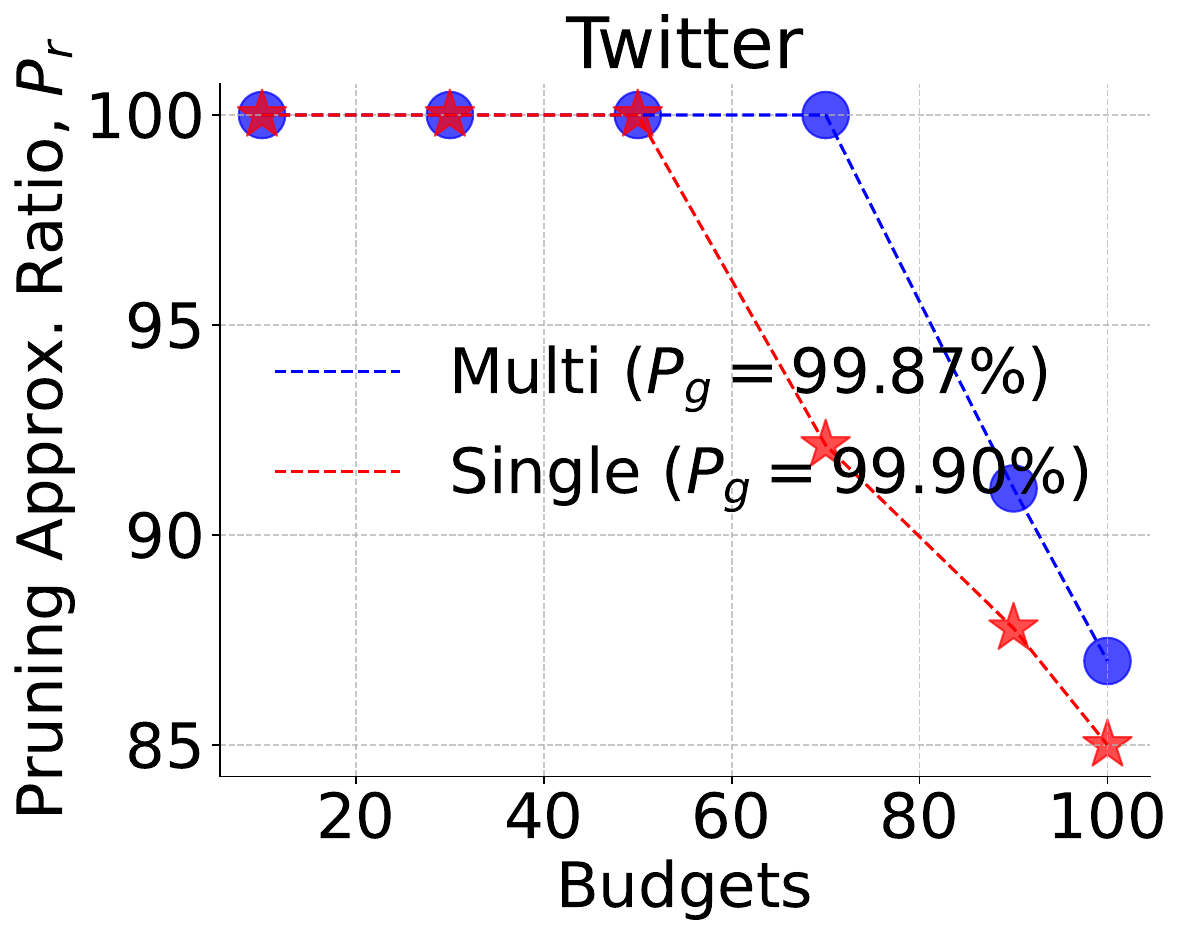}
  }
  \subfigure{ 
    \includegraphics[width=0.23\textwidth]{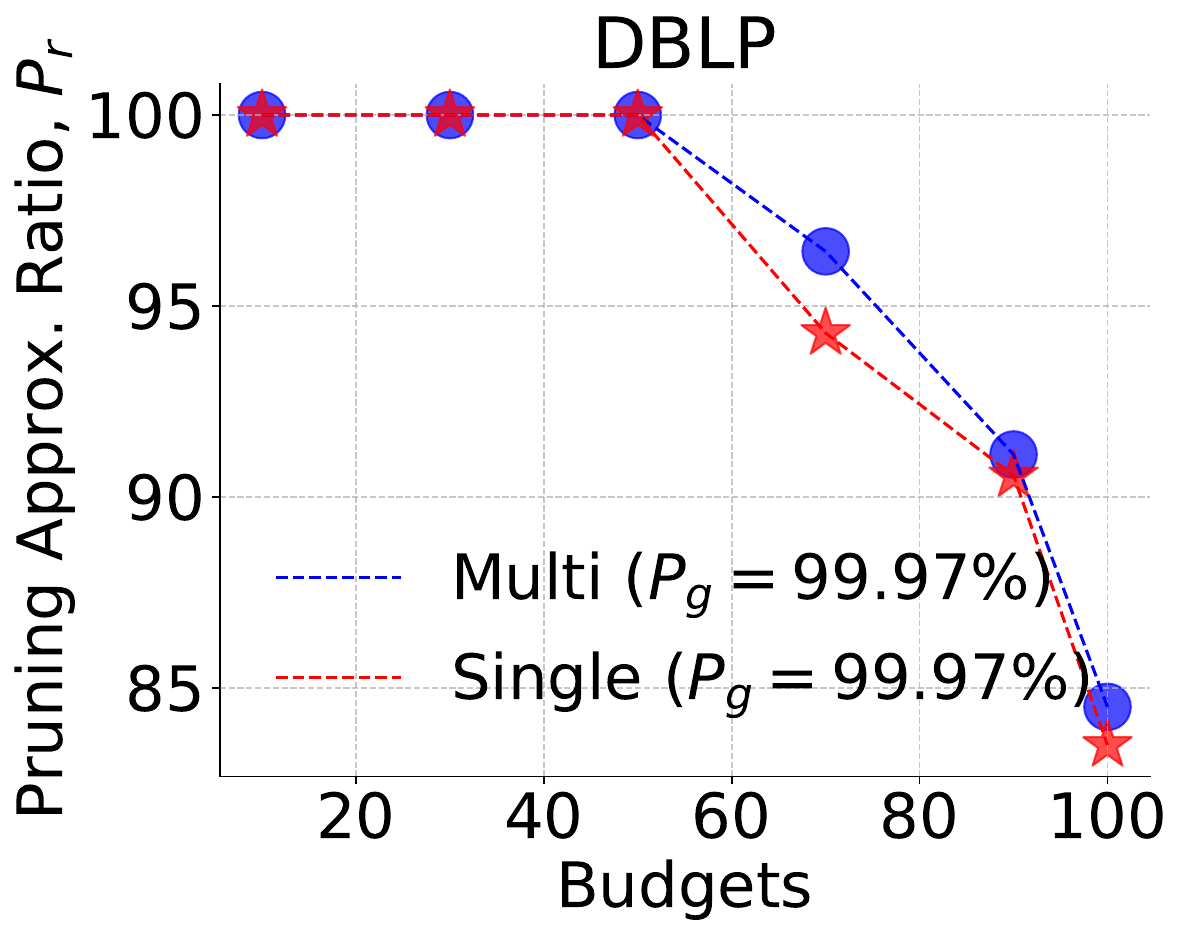}
  }
  \subfigure{ 
    \includegraphics[width=0.23\textwidth]{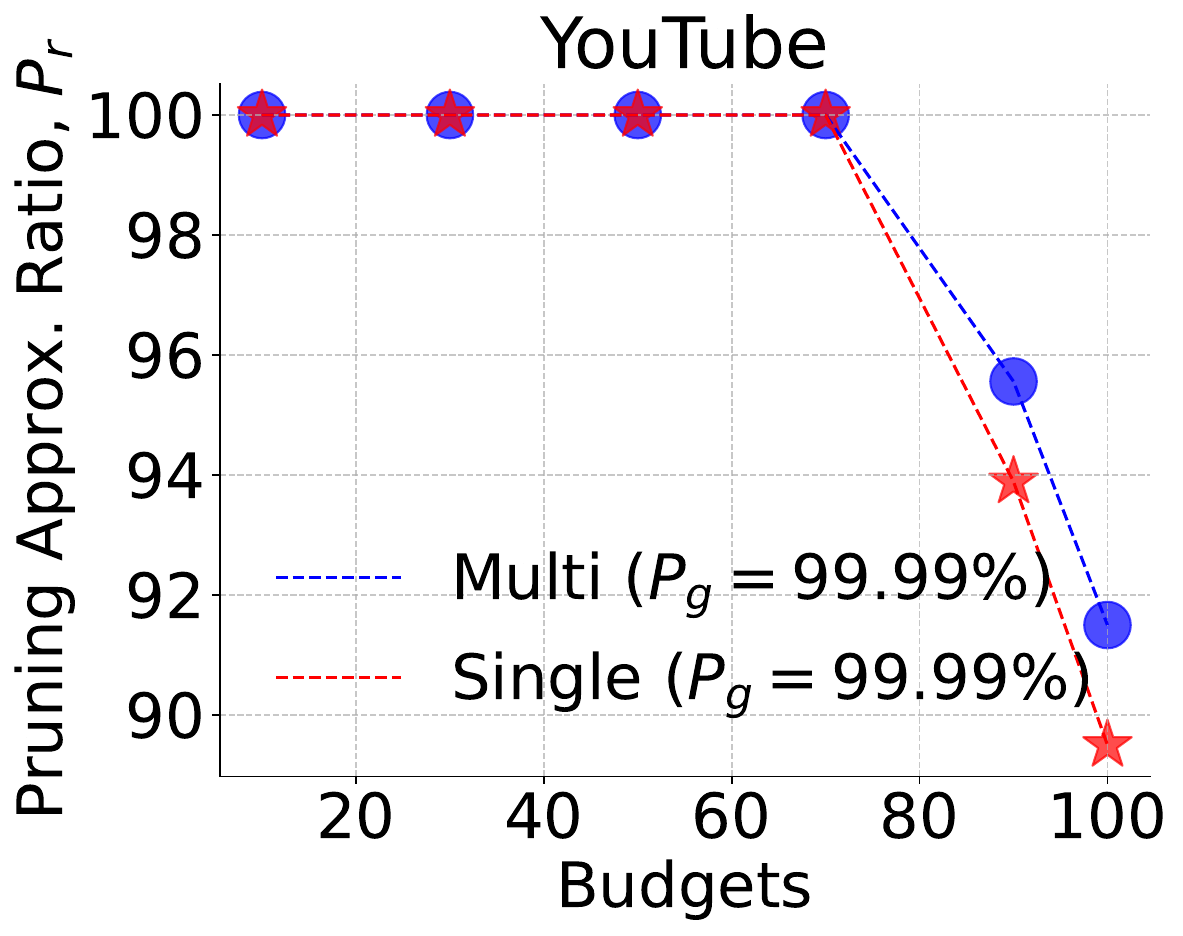}
  }
    \subfigure{ 
    \includegraphics[width=0.23\textwidth]{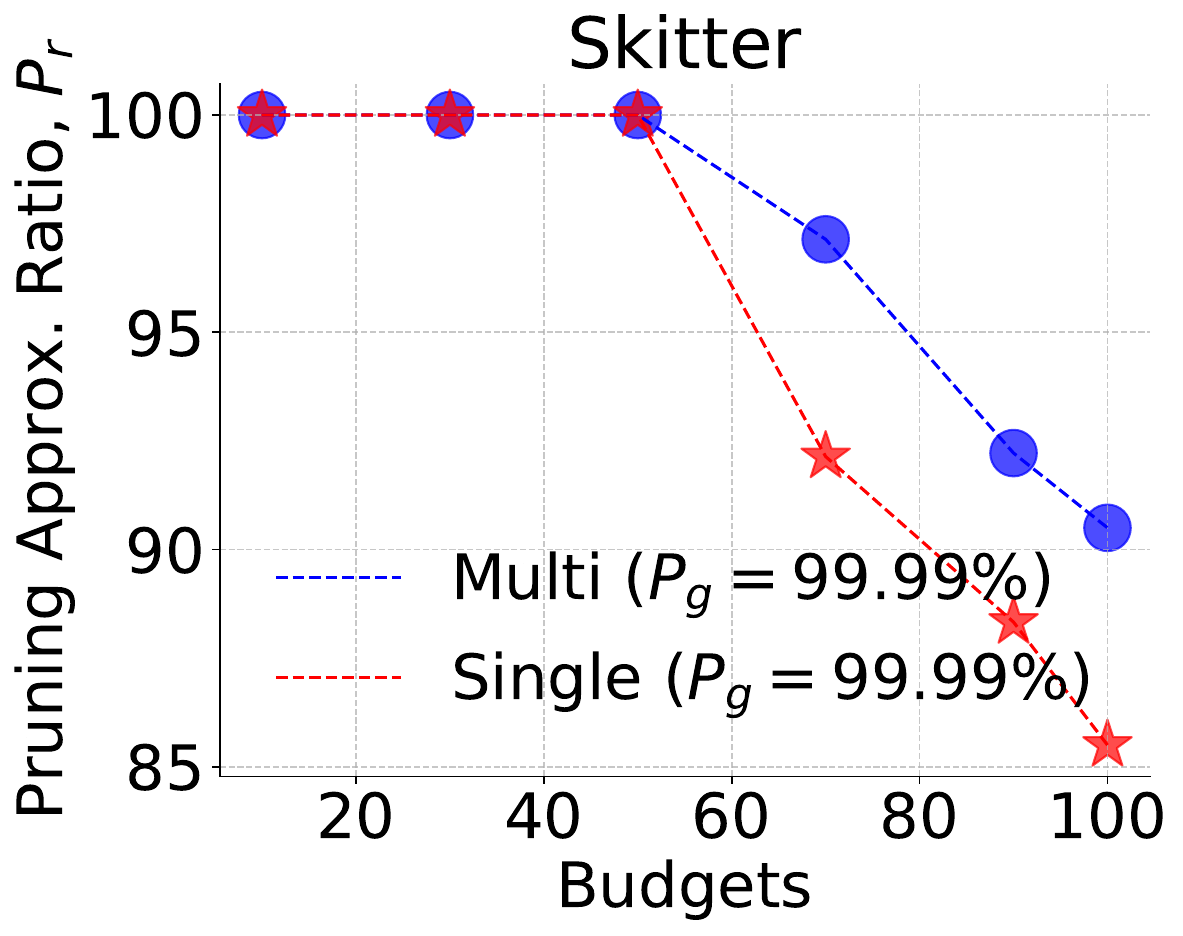}
  }

  \caption{Multi-Budget vs Single Budget for MaxCover}
  \label{}

\end{figure*}

\begin{figure*}[t] 
  \subfigure { 
    \includegraphics[width=0.23\textwidth]{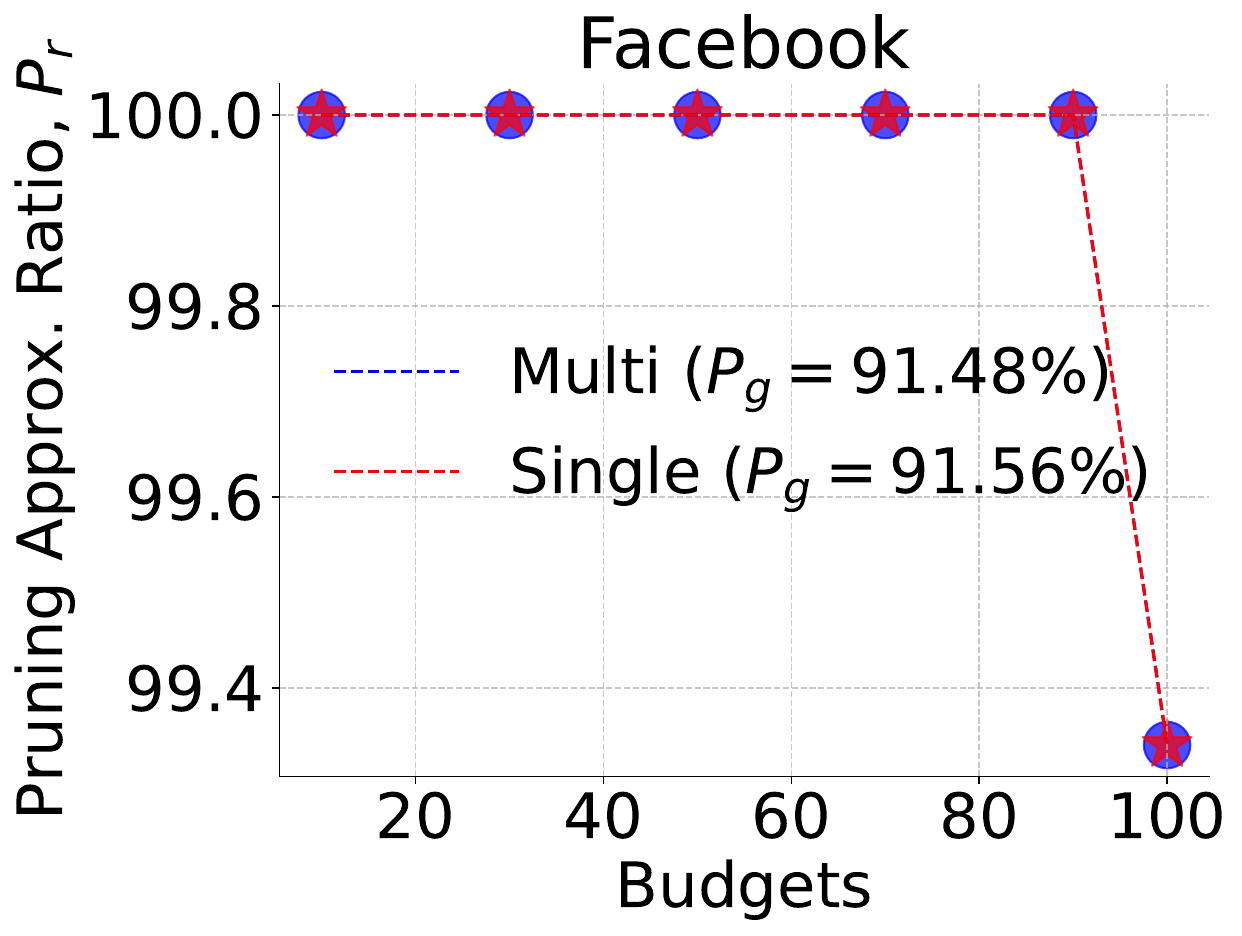}
  }
  \subfigure{ 
    \includegraphics[width=0.23\textwidth]{Figures/IM/Wiki.pdf}
  }
  \subfigure{ 
    \includegraphics[width=0.23\textwidth]{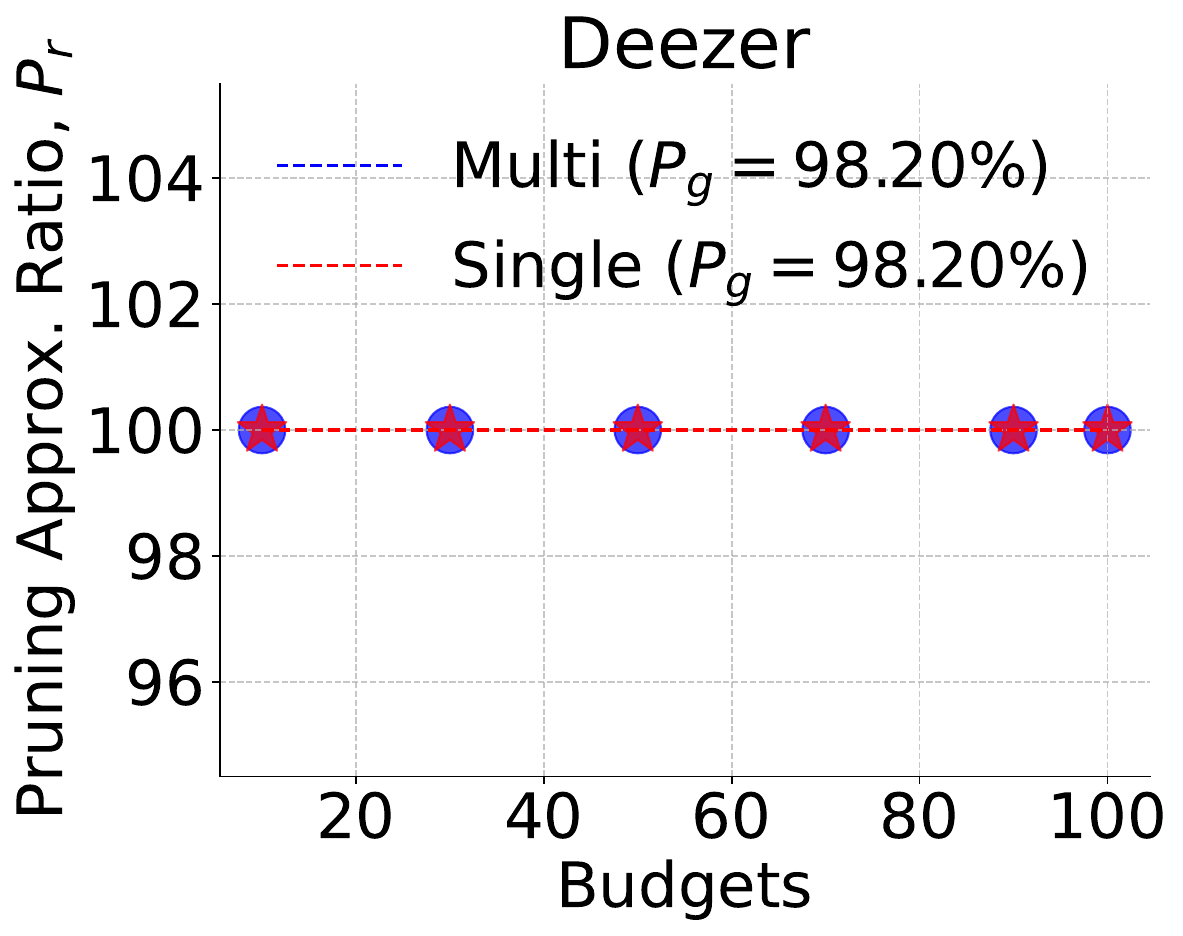}
  }
    \subfigure{ 
    \includegraphics[width=0.23\textwidth]{Figures/IM/Slashdot.pdf}
  }
  \subfigure { 
    \includegraphics[width=0.23\textwidth]{Figures/IM/Twitter.pdf}
  }
  \subfigure{ 
    \includegraphics[width=0.23\textwidth]{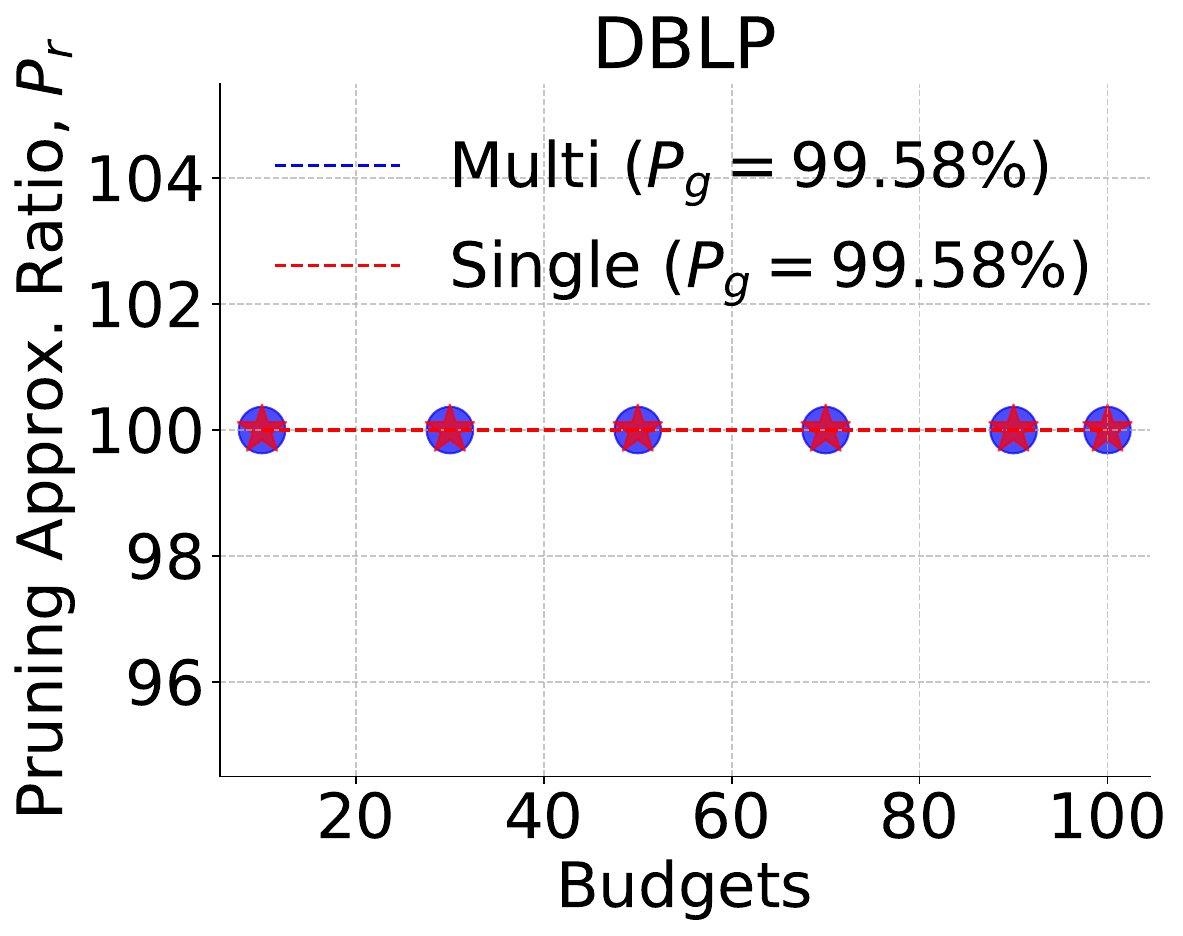}
  }
  \subfigure{ 
    \includegraphics[width=0.23\textwidth]{Figures/IM/YouTube.pdf}
  }
  \subfigure{ 
    \includegraphics[width=0.23\textwidth]{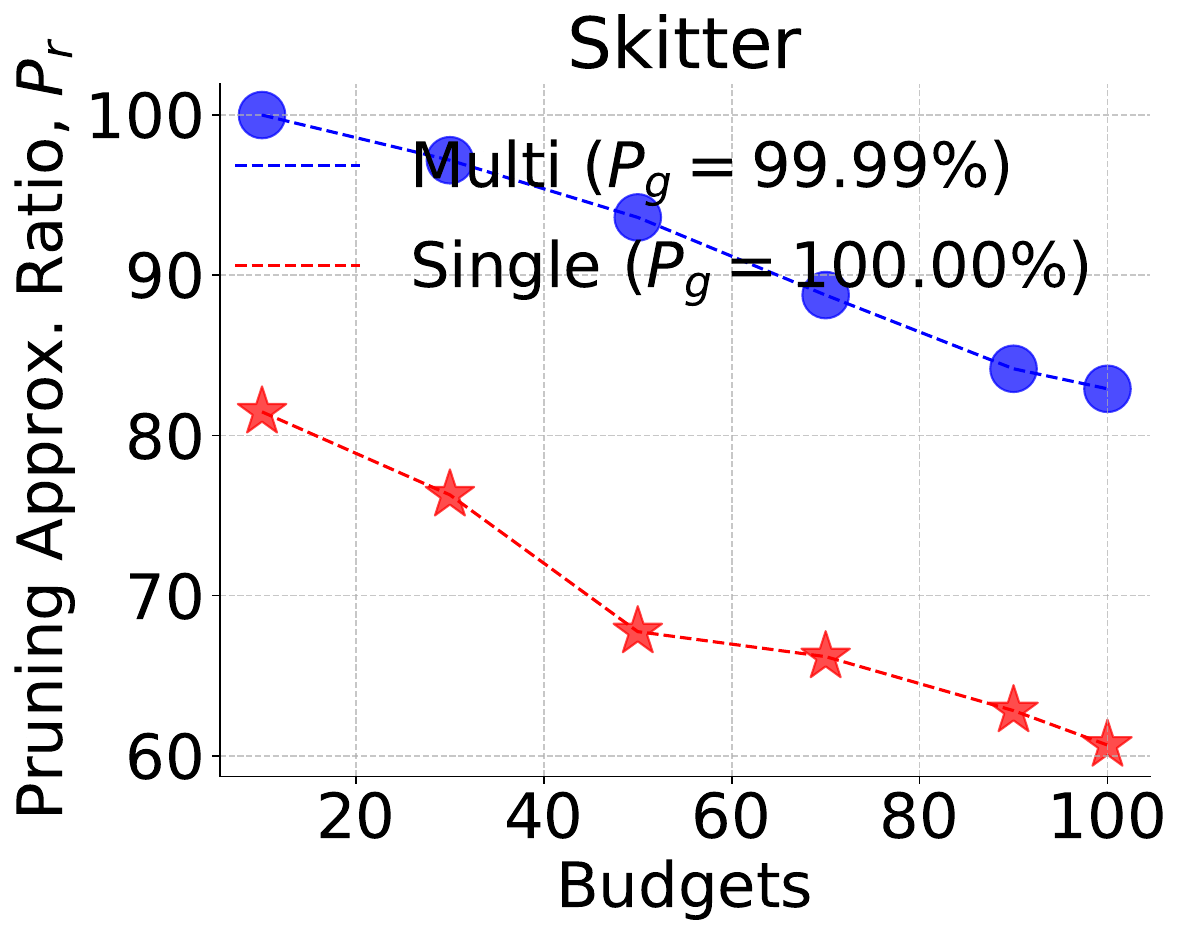}
  }

  \caption{Multi-Budget vs Single Budget for IM}
  \label{}

\end{figure*}

\begin{figure*}[t] 
  \subfigure { 
    \includegraphics[width=0.23\textwidth]{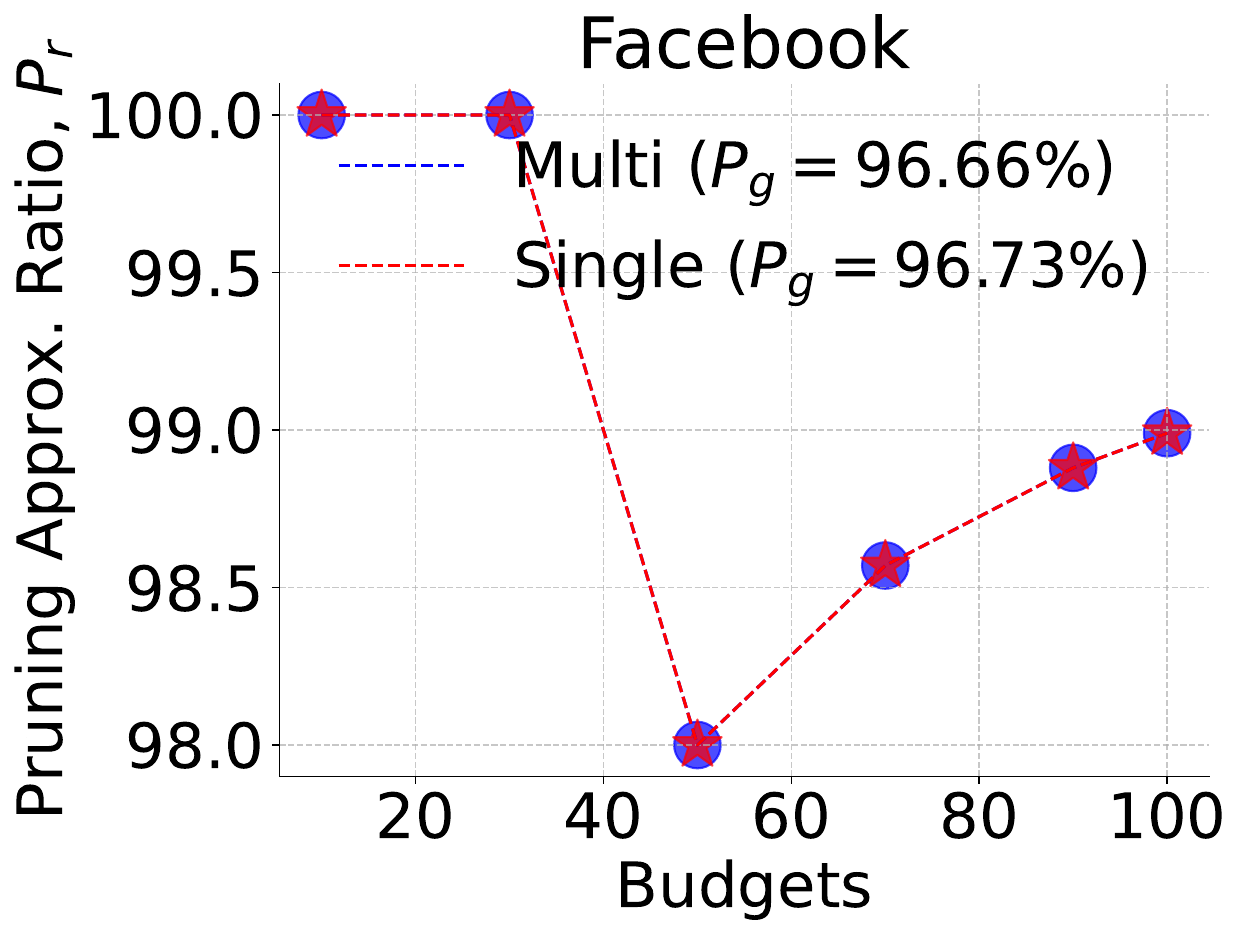}
  }
  \subfigure{ 
    \includegraphics[width=0.23\textwidth]{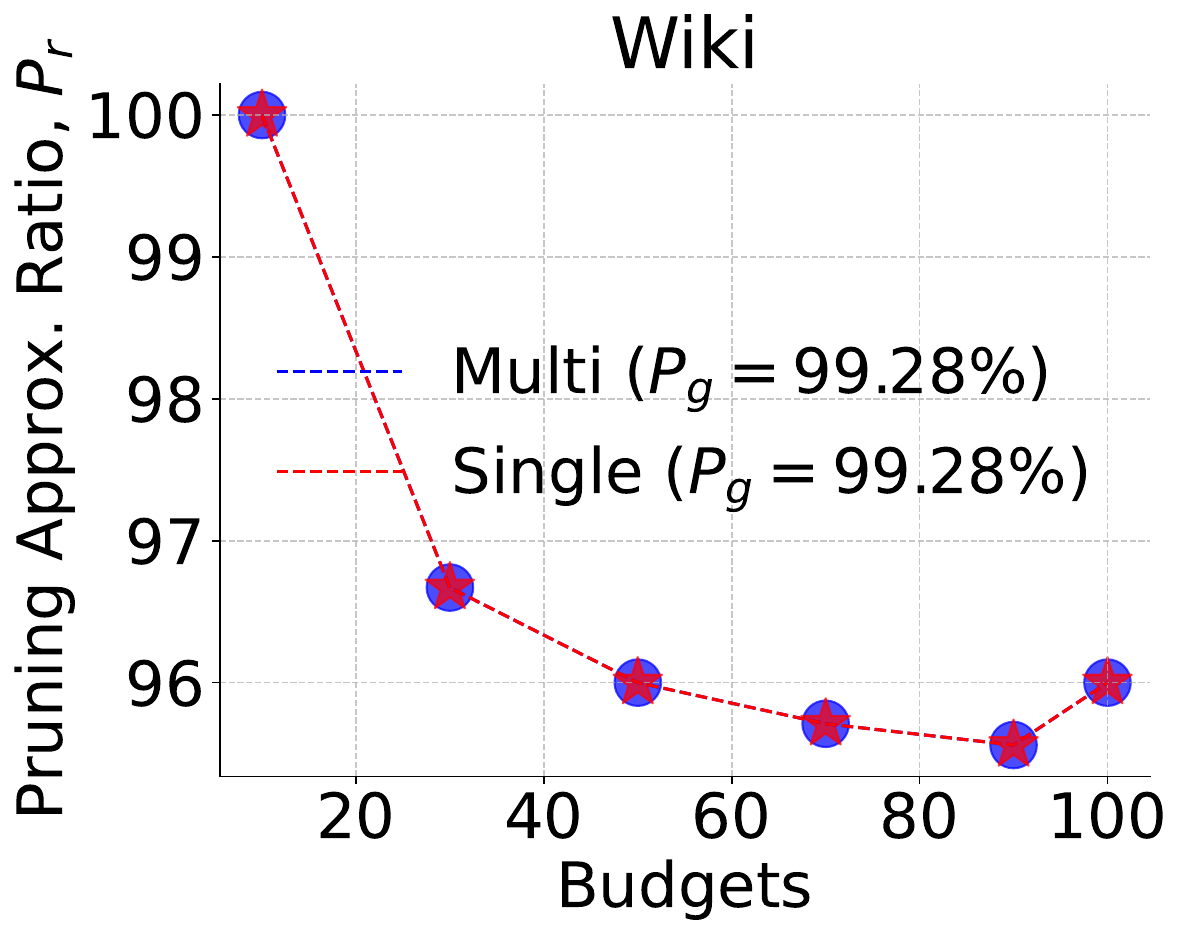}
  }
  \subfigure{ 
    \includegraphics[width=0.23\textwidth]{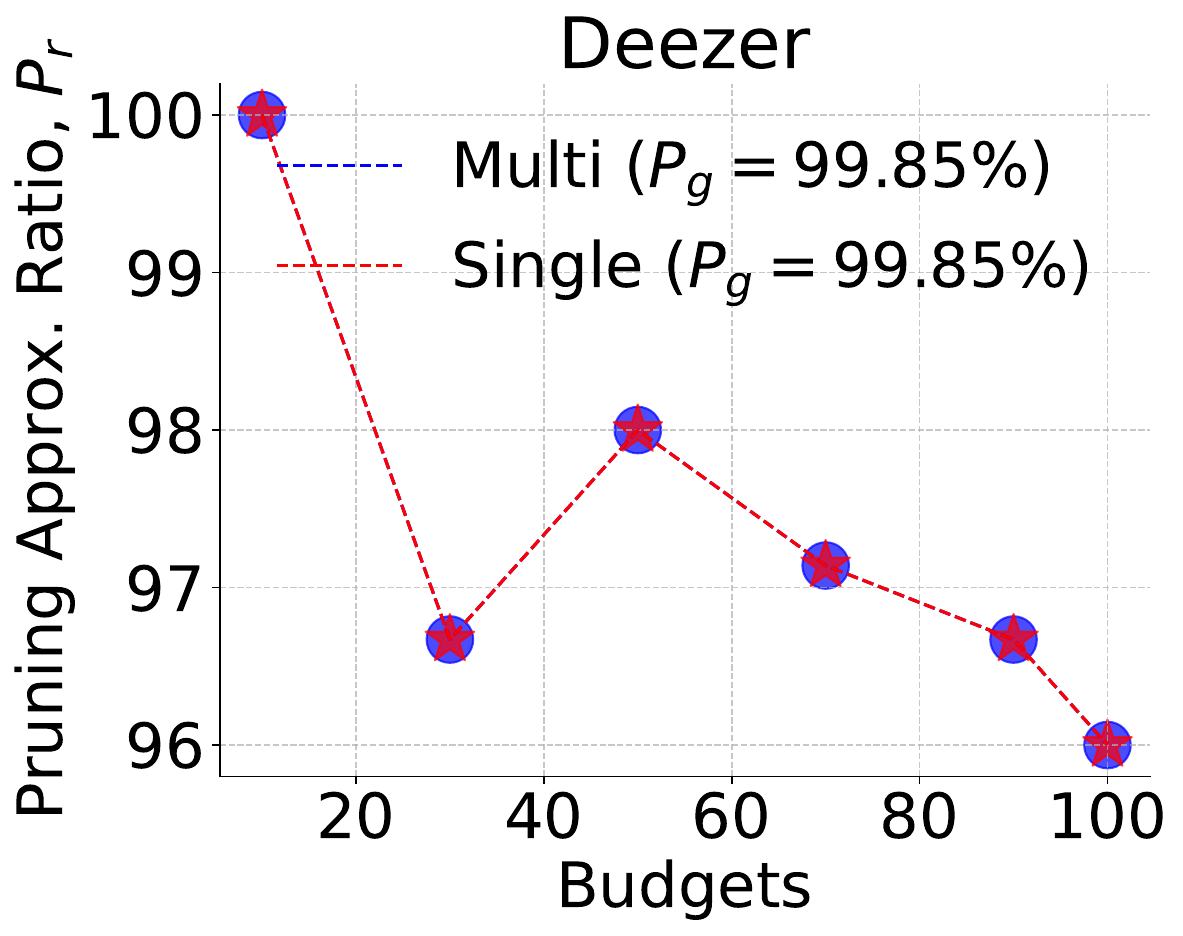}
  }
    \subfigure{ 
    \includegraphics[width=0.23\textwidth]{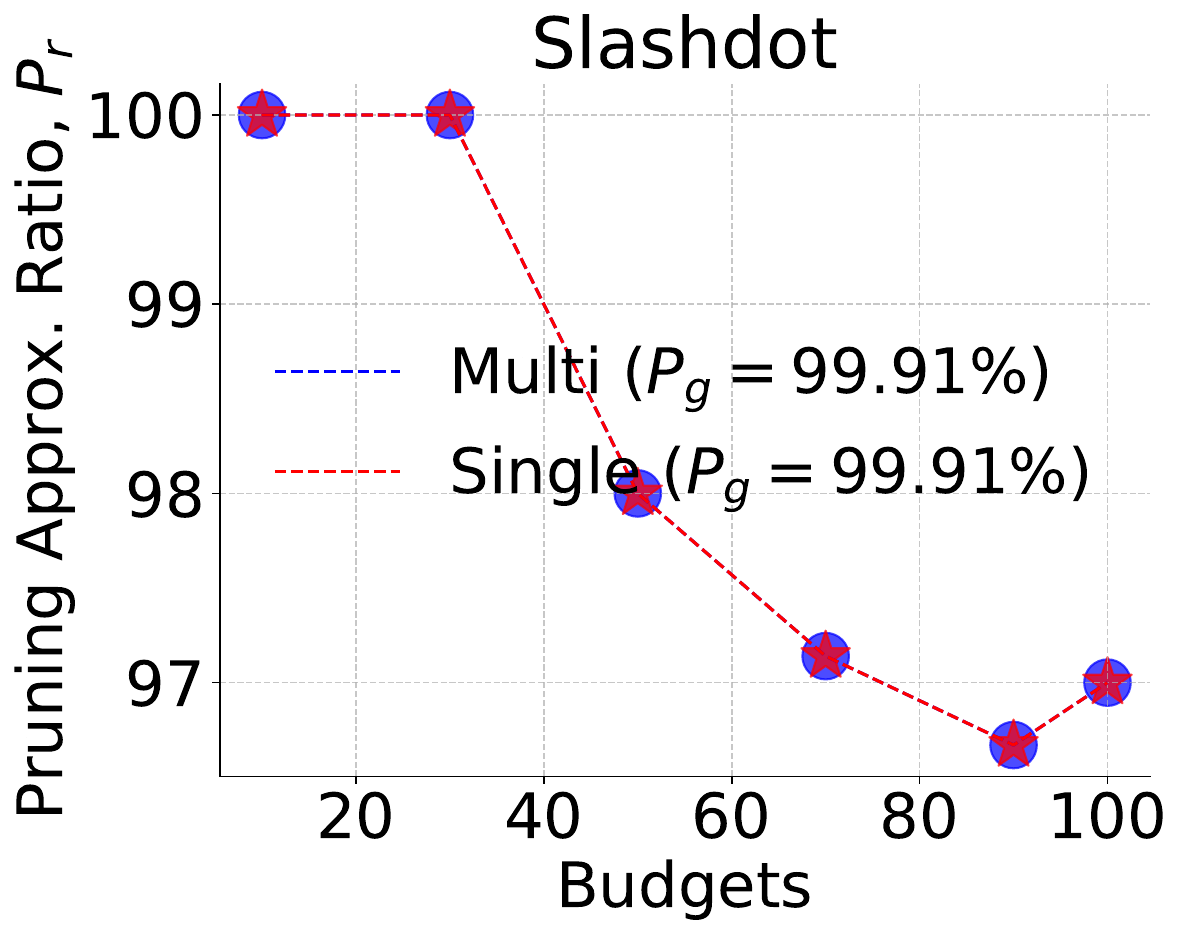}
  }
  \subfigure { 
    \includegraphics[width=0.23\textwidth]{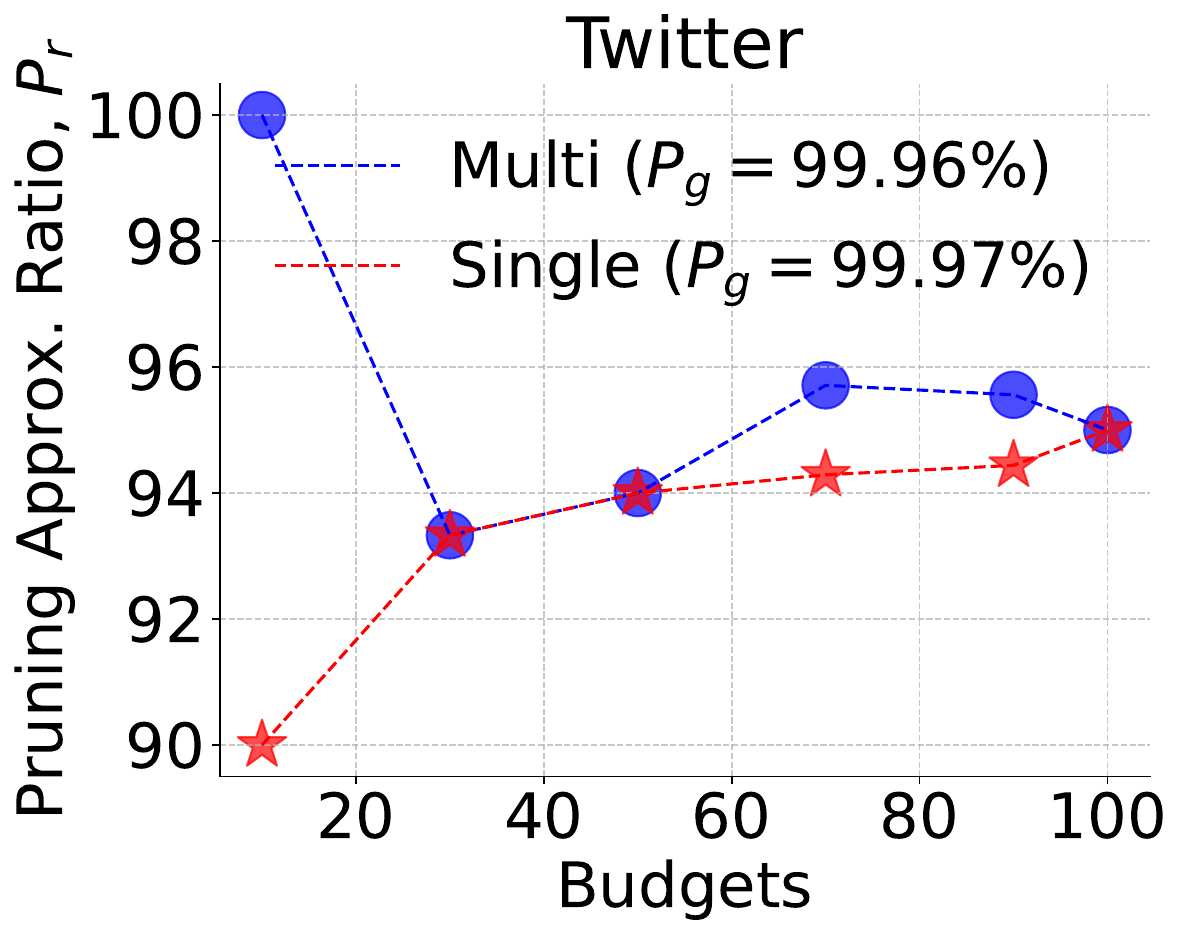}
  }
  \subfigure{ 
    \includegraphics[width=0.23\textwidth]{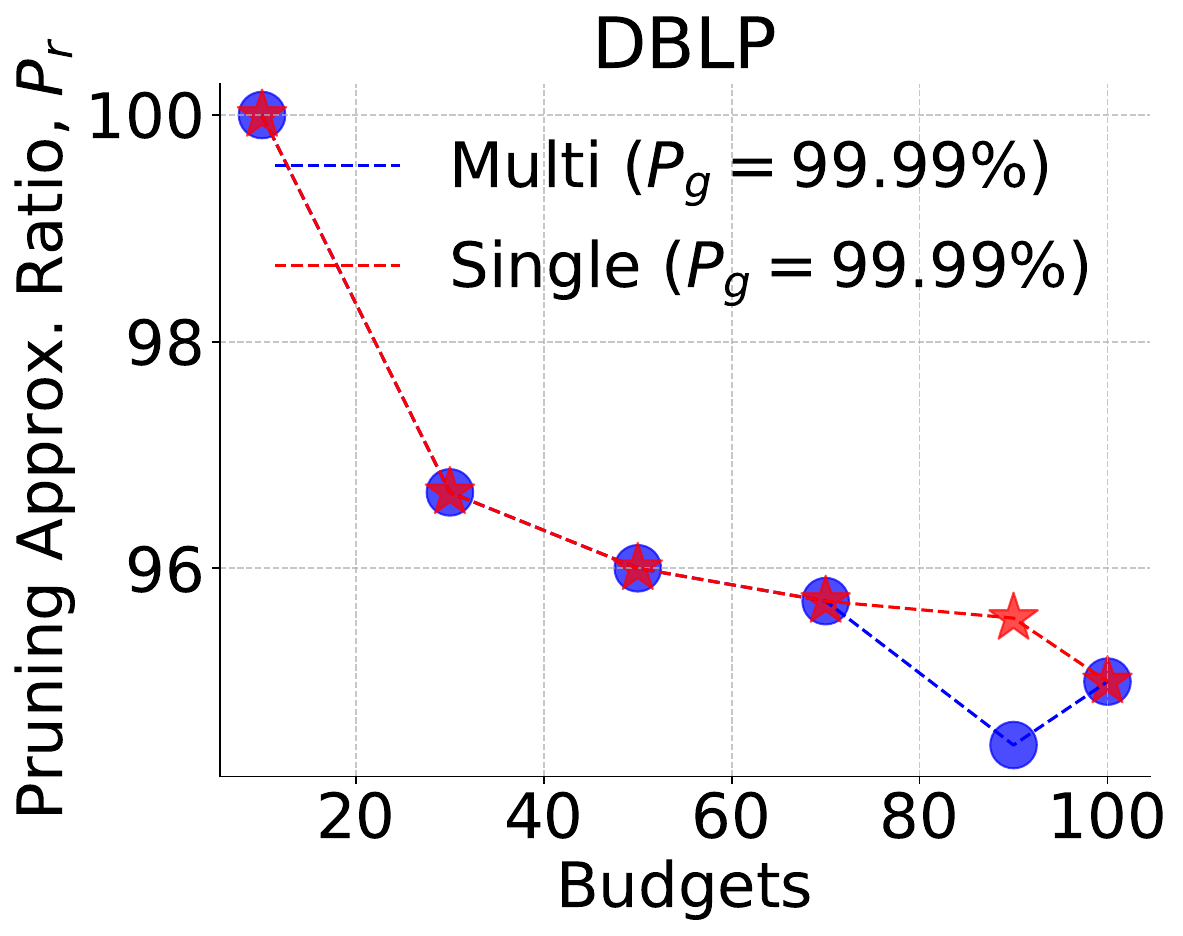}
  }
  \subfigure{ 
    \includegraphics[width=0.23\textwidth]{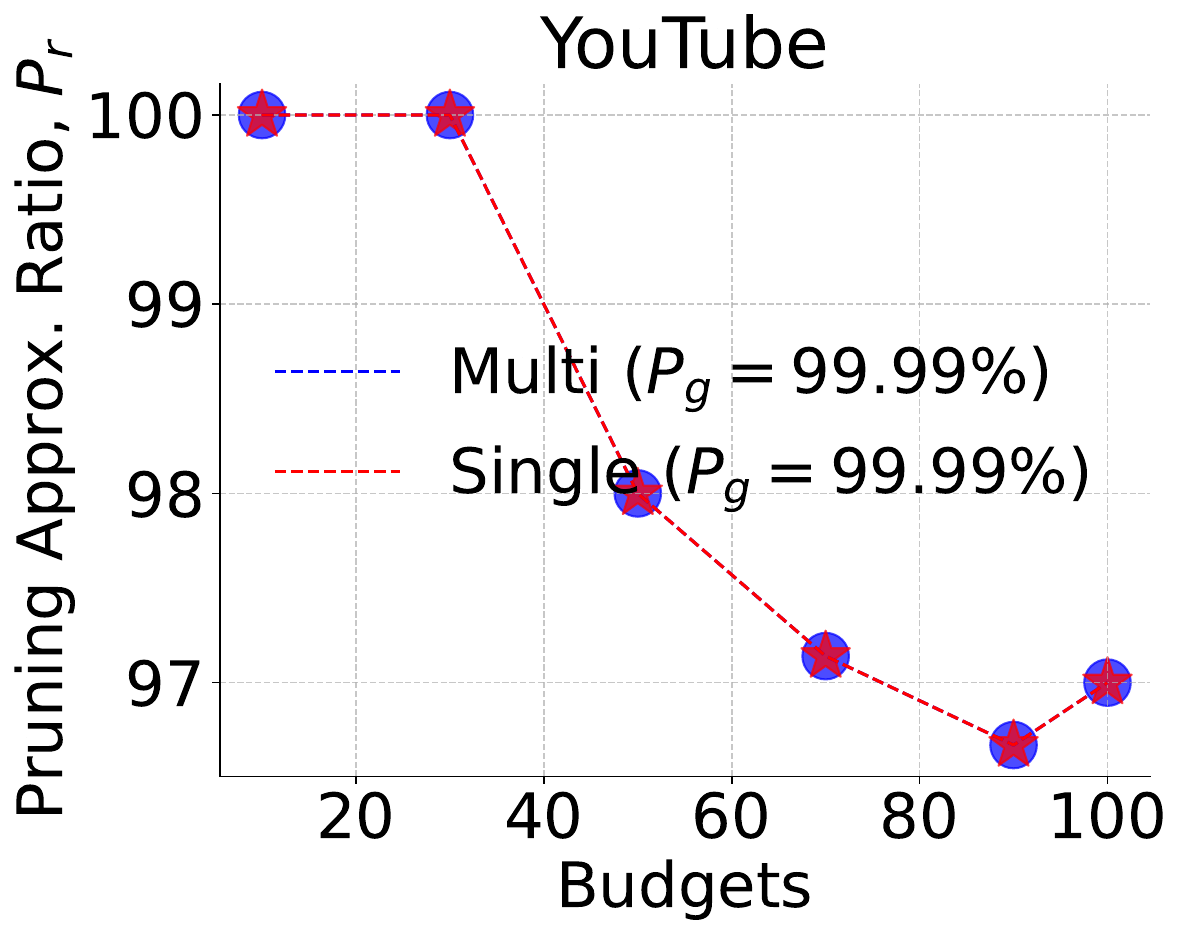}
  }
    \subfigure{ 
    \includegraphics[width=0.23\textwidth]{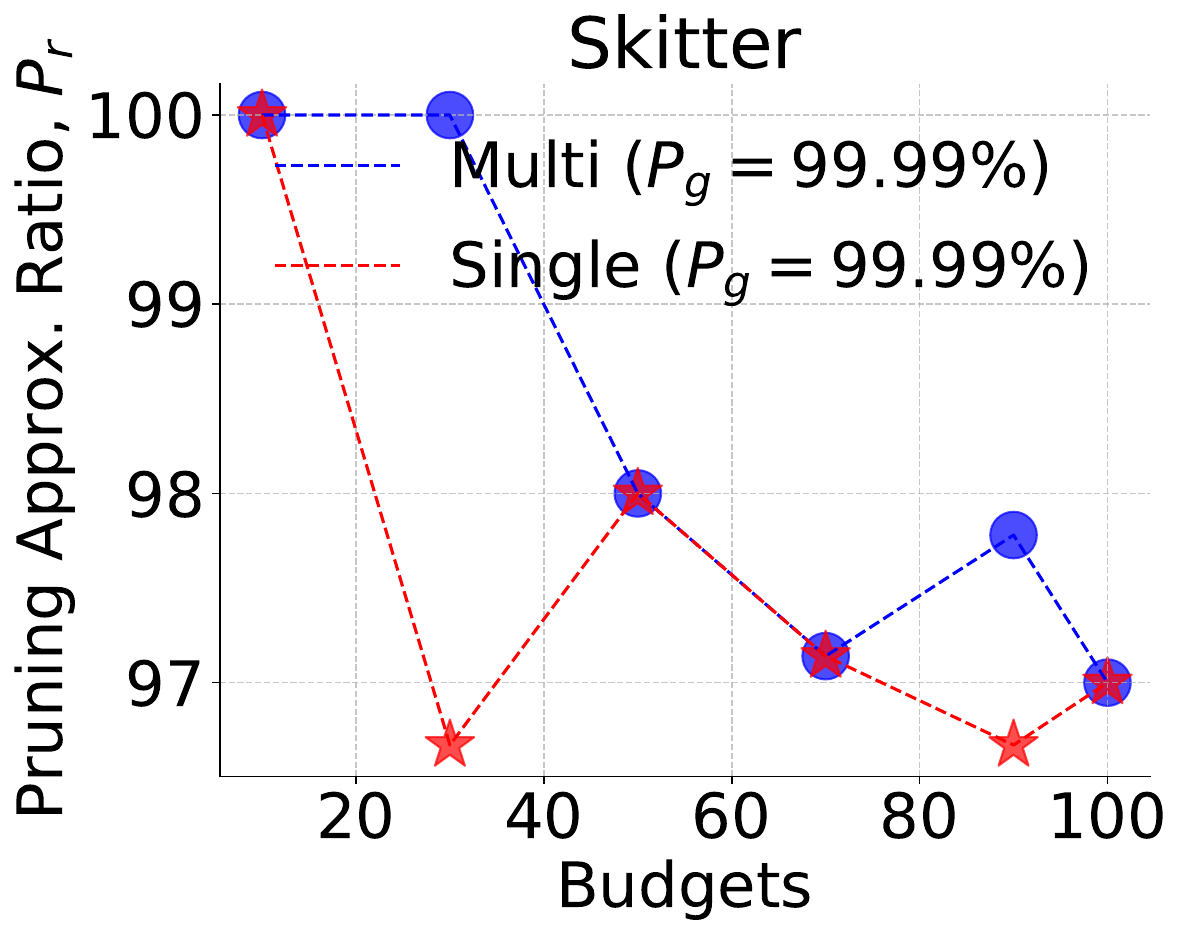}
  }
  
  \caption{Multi-Budget vs Single Budget for MaxCut}
  \label{}

\end{figure*}

\end{document}